\DeclareFontFamily{U}{mathx}{\hyphenchar\font45}
\DeclareFontShape{U}{mathx}{m}{n}{
      <5> <6> <7> <8> <9> <10>
      <10.95> <12> <14.4> <17.28> <20.74> <24.88>
      mathx10
      }{}
\DeclareSymbolFont{mathx}{U}{mathx}{m}{n}
\DeclareMathSymbol{\bigtimes}{1}{mathx}{"91}
\definecolor{DarkRed}{rgb}{0.5,0.1,0.1}
\definecolor{DarkBlue}{rgb}{0.1,0.1,0.5}
\definecolor{ForestGreen}{rgb}{0.1333,0.5451,0.1333}
\definecolor{Red}{rgb}{0.9,0,0}
\crefname{property}{property}{Property}
\crefname{equation}{eq}{Eq}
\crefname{myalgctr}{Algorithm}{Algorithm}
\def\BState{\State\hskip-\ALG@thistlm}
\newtheorem{theorem}{Theorem}
\newtheorem{lemma}{Lemma}[section]
\newtheorem{proposition}[lemma]{Proposition}
\newtheorem{corollary}[lemma]{Corollary}
\newtheorem{claim}[lemma]{Claim}
\newtheorem{fact}[lemma]{Fact}
\newtheorem{definition}[lemma]{Definition}
\newtheorem{assumption}[lemma]{Assumption}
\newtheorem*{claim*}{Claim}
\newtheorem*{proposition*}{Proposition}
\newtheorem*{lemma*}{Lemma}
\newtheorem*{problem*}{Problem}
\crefname{lemma}{Lemma}{Lemmas}
\crefname{claim}{Claim}{Claims}
\crefname{introresult}{Result}{Results}
\newtheorem{mdresult}{Result}
\newtheorem{remark}[lemma]{Remark}
\newtheorem{observation}[lemma]{Observation}
\newtheoremstyle{restate}{}{}{\itshape}{}{\bfseries}{~(restated).}{.5em}{\thmnote{#3}}
\theoremstyle{restate}
\theoremstyle{definition}
\newtheorem{mdprotocol}{Algorithm}
\newtheorem{mdprogram}{Program}
\renewcommand{\qed}{\nobreak \ifvmode \relax \else
      \ifdim\lastskip<1.5em \hskip-\lastskip
      \hskip1.5em plus0em minus0.5em \fi \nobreak
      \vrule height0.75em width0.5em depth0.25em\fi}
\newcommand{\Qed}[1]{\qed$_{~\textnormal{\Cref{#1}}}$}
\newcommand{\Leq}[1]{\ensuremath{\underset{\textnormal{#1}}\leq}}
\newcommand{\rs}{Ruzsa-Szemer\'{e}di\xspace}
\newcommand{\Ot}{\ensuremath{\widetilde{O}}}
\newcommand{\eps}{\ensuremath{\varepsilon}}
\newcommand{\Paren}[1]{\Big(#1\Big)}
\newcommand{\Bracket}[1]{\Big[#1\Big]}
\newcommand{\bracket}[1]{\left[#1\right]}
\newcommand{\paren}[1]{\ensuremath{\left(#1\right)}\xspace}
\newcommand{\card}[1]{\vert{#1}\vert}
\newcommand{\IR}{\ensuremath{\mathbb{R}}}
\newcommand{\ceil}[1]{{\left\lceil{#1}\right\rceil}}
\newcommand{\expect}[1]{\Exp\bracket{#1}}
\newcommand{\set}[1]{\ensuremath{\left\{ #1 \right\}}}
\newcommand{\poly}{\mbox{\rm poly}}
\newcommand{\polylog}{\mbox{\rm  polylog}}
\newenvironment{tbox}{\begin{tcolorbox}[
		enlarge top by=5pt,
		enlarge bottom by=5pt,
		 breakable,
		 boxsep=0pt,
                  left=4pt,
                  right=4pt,
                  top=10pt,
                  arc=0pt,
                  boxrule=1pt,toprule=1pt,
                  colback=white
                  ]
	}
{\end{tcolorbox}}
\newcommand{\event}{\ensuremath{\mathcal{E}}}
\newcommand{\rv}[1]{\ensuremath{{\mathsf{#1}}}\xspace}
\newcommand{\rA}{\rv{A}}
\newcommand{\rB}{\rv{B}}
\newcommand{\rC}{\rv{C}}
\newcommand{\rD}{\rv{D}}
\newcommand{\supp}[1]{\ensuremath{\textnormal{\text{supp}}(#1)}}
\newcommand{\distribution}[1]{\ensuremath{\textnormal{dist}(#1)}\xspace}
\newcommand{\II}{\ensuremath{\mathbb{I}}}
\newcommand{\HH}{\ensuremath{\mathbb{H}}}
\newcommand{\mi}[2]{\ensuremath{\def\mione{#1}\def\mitwo{#2}\mireal}}
\newcommand{\mireal}[1][]{
  \ifx\relax#1\relax%
    \II(\mione \,; \mitwo)%
  \else%
    \II(\mione \,; \mitwo\mid #1)%
  \fi
}
\newcommand{\en}[1]{\ensuremath{\HH(#1)}}
\newcommand{\itfacts}[1]{\Cref{fact:it-facts}-(\ref{part:#1})\xspace}
\newcommand{\GRS}{\ensuremath{G^{\textnormal{\textsc{rs}}}}\xspace}
\newcommand{\MRS}{\ensuremath{M^{\textnormal{\textsc{rs}}}}\xspace}
\newcommand{\jstar}{\ensuremath{j^{\star}}}
\newcommand{\rProt}{{\Prot}}
\newcommand{\rJ}{\rv{J}}
\newcommand{\rM}{\rv{M}}
\newcommand{\Gxor}{\ensuremath{G^{\textnormal{\textsc{xor}}}}\xspace}
\newcommand{\GG}{\ensuremath{\mathcal{G}}}
\newcommand{\PP}{\ensuremath{\mathcal{P}}}
\newcommand{\Exor}{\ensuremath{E^{\textnormal{\textsc{xor}}}}}
\newcommand{\eventhide}{\ensuremath{\mathcal{E}_{\textnormal{\textsc{hide}}}}}
\newcommand{\Prot}{{\Pi}}
\newcommand{\prot}{{\pi}}
\renewcommand{\rProt}{\rv{\Prot}}
\newcommand{\rZ}{\rv{Z}}
\newcommand{\rMprot}{\rv{M}_{\prot}}
\newcommand{\rP}{\rv{P}}
\newcommand{\rX}{\rv{X}}
\newcommand{\rY}{\rv{Y}}
\newcommand{\rPxor}{\rv{P}^{\textnormal{\textsc{xor}}}}
\newcommand{\rPrs}{\rv{P}^{\textnormal{\textsc{rs}}}}
\newcommand{\rErsA}{\rv{E}^{\textnormal{\textsc{rs}}}_A}
\newcommand{\rErsB}{\rv{E}^{\textnormal{\textsc{rs}}}_B}
\newcommand{\rMrsA}{\rv{M}^{\textnormal{\textsc{rs}}}_A}
\newcommand{\rMrsB}{\rv{M}^{\textnormal{\textsc{rs}}}_B}
\newcommand{\rZA}{\rv{Z}_A}
\newcommand{\rZB}{\rv{Z}_B}
\newcommand{\EA}{\ensuremath{E^A}}
\newcommand{\EB}{\ensuremath{E^B}}
\title{Beating Two-Thirds For Random-Order Streaming Matching}
\author{Sepehr Assadi\footnote{(\texttt{sepehr.assadi@rutgers.edu}) Department of Computer Science, Rutgers University.  Research supported in by part by the NSF CAREER award CCF-2047061.} \and 
Soheil Behnezhad\footnote{(\texttt{soheil@cs.umd.edu}) Department of Computer Science, University of Maryland. Research supported by Google Ph.D. Fellowship.}}
\date{}
\let\Pr\relax
\DeclareMathOperator{\Pr}{Pr}
\renewcommand{\epsilon}[0]{\ensuremath{\varepsilon}}
\let\originalleft\left
\let\originalright\right
\renewcommand{\left}{\mathopen{}\mathclose\bgroup\originalleft}
\renewcommand{\right}{\aftergroup\egroup\originalright}
\def\thm@space@setup{%
  \thm@preskip= 0.2cm
  \thm@postskip=\thm@preskip 
}
\definecolor{mydarkgreen}{RGB}{20,105,20}
\definecolor{mygreen}{RGB}{20,155,20}
\definecolor{myred}{RGB}{195,20,20}
\definecolor{linkcolor}{RGB}{0,0,230}
\definecolor{mylightgray}{RGB}{230,230,230}
\definecolor{verylightgray}{RGB}{240,240,240}
\definecolor{commentcolor}{RGB}{120,120,120}
\newcommand{\smparagraph}[1]{
\par\addvspace{0.2cm}
\noindent \textbf{#1}
}
\newcommand{\mc}[1]{\ensuremath{\mathcal{#1}}}
\newcounter{myalgctr}
\newenvironment{mytbox}{
\par\addvspace{0.2cm}
\begin{tcolorbox}[width=\textwidth,
                  enhanced,
                  boxsep=2pt,
                  left=1pt,
                  right=1pt,
                  top=4pt,
                  boxrule=1pt,
                  arc=0pt,
                  colback=white,
                  colframe=black,
                  unbreakable
                  ]
}{
\end{tcolorbox}
}
\newenvironment{mytboxh}{
\par\addvspace{0.2cm}
\begin{tcolorbox}[width=\textwidth,
                  enhanced,
                  boxsep=2pt,
                  left=1pt,
                  right=1pt,
                  top=4pt,
                  boxrule=1pt,
                  arc=0pt,
                  colback=white,
                  colframe=black,
                  unbreakable,
                  float=t
                  ]
}{
\end{tcolorbox}
}
\newenvironment{graytbox}{
\par\addvspace{0.1cm}
\begin{tcolorbox}[width=\textwidth,
                  enhanced,
                  frame hidden,
                  boxsep=5pt,
                  left=1pt,
                  right=1pt,
                  top=2pt,
                  bottom=2pt,
                  boxrule=1pt,
                  arc=0pt,
                  colback=mylightgray,
                  colframe=black,
                  breakable
                  ]
}{
\end{tcolorbox}
}
\newcommand{\tboxhrule}[0]{\vspace{0.1cm} \hrule \vspace{0.2cm}}
\newenvironment{titledtbox}[1]{\begin{mytbox}#1 \tboxhrule}{\end{mytbox}}
\newenvironment{titledtboxh}[1]{\begin{mytboxh}#1 \tboxhrule}{\end{mytboxh}}
\newenvironment{tboxalg}[2][]{\refstepcounter{myalgctr}\begin{titledtbox}{\textbf{Algorithm \themyalgctr}#1\textbf{.} #2}}{\end{titledtbox}}
\newenvironment{tboxalgh}[2][]{\refstepcounter{myalgctr}\begin{titledtboxh}{\textbf{Algorithm \themyalgctr}#1\textbf{.} #2}}{\end{titledtboxh}}
\newcounter{programctr}
\newcolumntype{?}{!{\vrule width 1.5pt}}
\DeclareMathOperator*{\Exp}{{\normalfont \textbf{E}}}
\DeclareMathOperator*{\Prob}{{\normalfont \textbf{Pr}}}
\newcommand{\E}[0]{{\normalfont \textbf{E}}}
\newcommand{\Ex}[0]{\Exp}
\renewcommand{\Pr}[0]{\Prob}
\begin{document}
\maketitle

\pagenumbering{roman}

\begin{abstract}

We study the maximum matching problem in the {\em random-order} semi-streaming setting. In this problem, the edges of an arbitrary $n$-vertex graph $G=(V, E)$ arrive in a stream one by one and in a random order. 
The goal is to have a single pass over the stream, use $O(n \cdot \polylog{(n)})$ space, and output a large matching of $G$.

\medskip

We prove that for an absolute constant $\eps_0 > 0$, one can find a $(2/3 + \eps_0)$-approximate maximum matching of $G$ using $O(n \log n)$ space with high probability. This breaks the natural boundary of $2/3$ for this problem prevalent in the prior work 
and resolves an open problem of Bernstein~[ICALP'20] on whether a $(2/3 + \Omega(1))$-approximation is achievable.
\end{abstract}

\clearpage

\setcounter{tocdepth}{3}
\tableofcontents

\clearpage

\pagenumbering{arabic}
\setcounter{page}{1}


\section{Introduction}\label{sec:intro}

A matching in a graph $G=(V,E)$ is any collection of vertex-disjoint edges and in the maximum matching problem, we are interested in finding a matching of largest size in $G$. 
This problem has been a cornerstone of algorithmic research and its study has led to numerous breakthrough results in theoretical computer science. In this paper, we study the maximum matching
problem in the \emph{semi-streaming} model of computation~\cite{FeigenbaumKMSZ05} defined as follows. 

\begin{definition}\label{def:stream}
	Given a graph $G=(V,E)$ with $n$ vertices $V = \set{1,\ldots,n}$ and $m$ edges in $E$ presented in a stream $S=\langle e_1,\ldots,e_m \rangle$, a semi-streaming algorithm makes a single pass 
	over the stream of edges $S$ and uses $O(n \cdot \polylog{(n)})$ space, measured in words of size $\Theta(\log{n})$ bits, and at the end outputs an approximate maximum matching of $G$. 
\end{definition}

The greedy algorithm for maximal matching gives a simple $\sfrac12$-approximation algorithm to this problem in $O(n)$ space. 
When the stream of edges is adversarially ordered, this is simply the best result known for this problem, while it is also known that  a better than $\frac1{1+\ln{2}} \sim 0.59$-approximation is 
not possible~\cite{Kapralov21} (see also~\cite{Kapralov13,GoelKK12}). Closing the gap between these upper and lower bounds is among the most longstanding open problems in the graph streaming literature. 

Going beyond this ``doubly worst case'' scenario, namely, an adversarially-chosen graph and an adversarially-ordered stream, there has been an extensive interest in recent years in studying this problem on \textbf{random order streams}. 
This line of work was pioneered in~\cite{KonradMM12} who showed that the $\sfrac12$-approximation of greedy  can be broken in this case and obtained an algorithm with approximation ratio $(\sfrac12+0.003)$ for this problem. 
Since~\cite{KonradMM12}, there has been two main lines of attack on this problem. Firstly, \cite{Konrad18,GamlathKMS19,FarhadiHMRR20} followed up on the approach of~\cite{KonradMM12} and improved the approximation ratio all the way to $6/11$~\cite{FarhadiHMRR20}. In parallel, \cite{AssadiBBMS19} built on the sparsification approach of~\cite{BernsteinS15,BernsteinS16} in dynamic graphs to achieve an  (almost) $\sfrac23$-approximation but at the cost of $\Ot(n^{1.5})$ space, which is no longer semi-streaming. A beautiful work of~\cite{Bernstein20} then obtained a semi-streaming (almost) $\sfrac23$-approximation by showing how a generalization of the sparsification approach in \cite{AssadiBBMS19} can be found in $\widetilde{O}(n)$ space.

The $\sfrac23$-approximation ratio of the algorithm of~\cite{Bernstein20} is the best possible among all prior techniques for this problem:  the first line of attack in~\cite{KonradMM12,Konrad18,GamlathKMS19,FarhadiHMRR20} is 
based on finding length-$3$ augmenting paths and even finding \emph{all} these paths does not lead to a better-than-$\sfrac23$-approximation\footnote{The work of~\cite{FarhadiHMRR20} also considers  
length-$5$ augmenting paths. However, these paths are used \emph{instead of} length-$3$ paths ``missed'' by the algorithm \emph{not in addition to}  length-$3$ paths and thus the same shortcoming  persists.}. The second line
in~\cite{AssadiBBMS19,Bernstein20} is based on finding an  \emph{edge-degree constrained subgraph} (EDCS) which hits the same exact barrier as there are graphs whose EDCS does not provide a better than $\sfrac23$-approximation (see~\cite{BernsteinS15}). Finally, even for an algorithmically easier variant of this problem, the one-way communication problem, which roughly corresponds to only measuring the space of the algorithm when crossing the midpoint of the stream, the best known approximation ratio is still $\sfrac23$ which is known to be tight
for adversarial orders/partitions~\cite{GoelKK12}. 

Given this state-of-affairs, the $\sfrac23$-approximation ratio for random-order streaming matching has emerged as natural barrier \cite{Konrad18,Bernstein20}. 
In particular,~\cite{Bernstein20} posed obtaining a $(\sfrac23+\Omega(1))$-approximation to this problem as an important open question. We resolve this question in the affirmative in our work. 

\subsection{Our Contributions}

Our main result is a semi-streaming algorithm for maximum matching in random-order streams with approximation ratio strictly-better-than-$\sfrac23$. 

\begin{graytbox}
	\begin{theorem}[Main Result]\label{thm:main}
	Let $G$ be an $n$-vertex graph whose edges arrive in a random-order stream. For an absolute constant $\eps_0 > 0$, there is a single-pass streaming algorithm that obtains a $(\frac{2}{3} + \eps_0)$-approximate maximum matching of $G$ using $O(n\log n)$ space with high probability.
\end{theorem}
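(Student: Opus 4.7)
The plan is a two-phase semi-streaming algorithm. In the first phase, I would process the initial $(1-\alpha)$-fraction of the random-order stream using the EDCS-based algorithm of Bernstein~\cite{Bernstein20} to produce a matching $M_1$ of size at least $(2/3 - o(1))\,\mu(G)$ within $O(n\log n)$ space. In the second phase, which by the random-order assumption sees a uniformly random $\alpha$-fraction of the remaining edges, I would search for length-$3$ augmenting paths with respect to $M_1$ and use them to augment $M_1$ into a strictly larger matching. The algorithm outputs $M_1$ if it is already sufficiently large, and otherwise outputs the augmented matching; the final approximation ratio is the minimum over the two cases.

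The analysis splits on the size of $M_1$. If $|M_1|\geq (2/3 + \epsilon_0)\,\mu(G)$, we are already done. Otherwise, we need a structural lemma: whenever $|M_1|$ lies within $\epsilon_0\,\mu(G)$ of $(2/3)\,\mu(G)$, the graph contains $\Omega(\mu(G))$ vertex-disjoint length-$3$ augmenting paths with respect to $M_1$. This is plausible because the approximation inequality $|M_1|\geq (2/3)\mu(G)$ guaranteed by the EDCS becomes tight precisely when most edges of $\mathrm{OPT}\setminus M_1$ are arranged into length-$3$ augmenting paths whose middle edge lies in $M_1$. To prove it, I would carefully examine the slackness in Bernstein's EDCS-to-matching inequality and translate near-tightness into a lower bound on the number of disjoint $3$-augmenting paths, exploiting the Ruzsa--Szemer\'edi-style rigidity of tight EDCS examples.

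The main obstacle will be the second phase: extracting $\Omega(\mu(G))$ augmenting paths from only a random $\alpha$-fraction of the stream using $O(n\log n)$ space. The random-order property ensures that for each fixed length-$3$ augmenting path, the two outer edges (each incident to a vertex unmatched by $M_1$) arrive in the second phase independently with probability $\alpha$, so in expectation many paths are completable. The technical difficulty is to design a subroutine that, given $M_1$ from phase one, simultaneously extracts many \emph{vertex-disjoint} completions; here I expect that a random bipartition of the free vertices into ``left'' and ``right'' sets (in the spirit of~\cite{KonradMM12,GamlathKMS19,FarhadiHMRR20}) combined with maintaining sparsified neighborhoods for matched vertices will suffice to achieve $\Omega(\mu(G))$ disjoint augmentations with high probability. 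Combining the two cases yields the $(2/3 + \epsilon_0)$-approximation claimed in \Cref{thm:main}.
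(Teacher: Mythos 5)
The central step of your plan — the claimed structural lemma that near-tightness of $|M_1|\approx(2/3)\mu(G)$ forces $\Omega(\mu(G))$ vertex-disjoint \emph{length-3} augmenting paths — is false, and the paper explicitly builds around its failure. A simple counting argument on $M^\star\Delta M_H$ shows why: any augmenting path of length $>3$ uses at least two edges of $M_H$, so the number of length-$\leq 3$ augmenting paths is only guaranteed to be at least $|M^\star| - \tfrac32|M_H|$, which is $0$ precisely when $|M_H|=\tfrac23|M^\star|$. In contrast, paths of length $>5$ use at least three $M_H$-edges, giving the paper's bound $|\mc{P}|\geq |M^\star|-\tfrac43|M_H|\approx \tfrac19\mu(G)$ for length-$\leq 5$ paths (their Observation~\ref{obs:Plarge}). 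In the tight EDCS instances of~\cite{BernsteinS15} — which is exactly the regime you are worried about — every augmenting path has length $5$. So your ``rigidity $\Rightarrow$ many 3-paths'' step has no route to a proof; the paper even states this explicitly in the overview (``there is essentially no length-$3$ paths for us to augment and we instead need to handle length-$5$ augmenting paths'').

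This is not a cosmetic change: length-5 paths are qualitatively harder and are the reason the paper's Phase~II has the two-subphase structure and needs the random order a second time. The outer edges of a length-3 path are incident to free vertices and can be captured by a bounded-degree sparsifier, just as you suggest. The \emph{middle} edge of a length-5 path, however, has both endpoints matched by $M_H$, so there is no $O(n\,\polylog n)$-sized set of ``candidate middle edges'' you can pre-store; this is exactly why the adversarial one-way lower bound of~\cite{GoelKK12} rules out augmenting a freely-given $2/3$-approximate matching in semi-streaming space. The paper's fix is to build the bounded-degree $(2,b)$-matching $T$ of outer edges in Phase~II.A, and then in Phase~II.B only test the \emph{currently arriving} edge as a potential middle edge; the random order is invoked again to show that with constant probability the middle edge arrives \emph{after} the two outer edges, making a constant fraction of length-5 paths ``discoverable.'' Your sketch contains none of this, and it is the heart of why the result is possible. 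Separately, the structural lemma the paper actually proves (\Cref{lem:edcs-early}/\Cref{cor:edcs-early-general}) is not ``near-tight $\Rightarrow$ many augmenting paths'' but rather a dichotomy between $\mu(H)$ and $\mu(H\cup U)$: either $H$ alone (built from just the first $\epsilon m$ edges, not the first $(1-\alpha)m$) already contains a near-$(2/3)$-approximate matching, or $H\cup U$ already beats $2/3$. That dichotomy — not an augmenting-path count — is what lets the algorithm know which of the two outputs to rely on.
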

\end{graytbox}

\Cref{thm:main} breaks the $\sfrac23$-barrier of all prior work in~\cite{KonradMM12,Konrad18,GamlathKMS19,AssadiBBMS19,Bernstein20,FarhadiHMRR20}. 
Moreover, even though the improvement over $\sfrac23$ is  minuscule in this theorem (while we did not optimize for constants, the bound on $\eps_0$ is only $\sim 10^{-14}$ at this point), 
it still proves that (\sfrac23)-approximation is not the ``right'' answer to this problem. This is in contrast to some other  problems of similar flavor such as one-way communication complexity 
of matching (on adversarial partitions)~\cite{GoelKK12,AssadiB19} or the fault-tolerant matching problem~\cite{AssadiB19} which are both solved using similar techniques (see the unifying framework of~\cite{AssadiB19} based on EDCS) 
and for both $\sfrac23$-approximation
is provably best possible. 

\paragraph{Beyond $(\sfrac23)$-approximation.} Breaking this $\sfrac23$-barrier naturally raises the question on what is the right bound on the approximation ratio of random-order streaming matching. In particular, is $(1-\eps)$-approximation possible? 
We make progress toward settling this question  by showing that no ``truly'' space-efficient algorithm exists for this latter problem: there is provably no semi-streaming matching algorithm even on bipartite graphs that can achieve a $(1-\eps)$-approximation in $O(\exp((1/\epsilon)^{0.99}) \cdot n \cdot \polylog{(n)})$ space; in other words, if one hopes for achieving a $(1-\eps)$-approximation, an exponential dependence on $(1/\eps)$ in the space is unavoidable (see~\Cref{cor:lower-stream}).

As the main focus of our work is on the algorithm in~\Cref{thm:main}, we postpone the details and the ideas behind this result to~\Cref{sec:lower}.

\subsection{Overview of Techniques}

\paragraph{Prior work.} As stated earlier, there has been two main lines of attack on the streaming matching problem in  random-order streams. The first approach aims to find a \emph{large} matching of the graph $G$ early on in the stream, and 
then spends the rest of the stream \emph{augmenting} this matching. For instance,~\cite{KonradMM12} showed that in order for the greedy algorithm to fail to find a better-than-$\sfrac12$-approximation, the algorithm should necessarily pick 
many ``wrong'' edges early on in the stream. As such, in instances where greedy is not beating the $\sfrac12$-approximation itself, we already have an almost $\sfrac12$-approximation by the \emph{middle} of the stream, and we can thus 
focus on augmenting this matching in the remainder half to beat $\sfrac12$-approximation. The work of~\cite{Konrad18} then improved this result further by showing that a modified greedy algorithm, when unsuccessful in obtaining a large 
matching itself, finds an almost $\sfrac12$-approximation when only $o(1)$-fraction of the stream has passed (as opposed to middle), which gives us more room for augmentation. Finally,~\cite{FarhadiHMRR20} built on this 
approach and further improved the augmentation phase. 

The second approach to this problem was based on obtaining an EDCS, a subgraph defined by~\cite{BernsteinS15,BernsteinS16} and studied further in~\cite{AssadiB19}, that acts as a ``matching sparsifier''. On a high level, 
an EDCS is a sparse subgraph satisfying the following two constraints: $(i)$ edge-degree of edges in the EDCS cannot be ``high'', while $(ii)$ edge-degree of missing edges cannot be ``low''. These constraints 
ensure that an EDCS always contains an almost $\sfrac23$-approximate matching of the graph and has additional robustness properties~\cite{BernsteinS15,BernsteinS16,AssadiBBMS19,AssadiB19,Bernstein20}. For instance,
~\cite{AssadiBBMS19} proved that union of several EDCS computed on different parts of a random stream, is itself an EDCS for the entire stream. This allowed them to compute an EDCS of the input in $\Ot(n^{1.5})$ space 
and directly obtain their almost $\sfrac23$-approximation. Finally,~\cite{Bernstein20} gave an elegant proof that weakening the requirement of EDCS allows one to still preserve the almost $\sfrac23$-approximation 
but now recover this subgraph in only $O(n\log{n})$ space. More specifically, the algorithm of~\cite{Bernstein20} first finds a subgraph only satisfying property $(i)$ of the EDCS in 
the first $o(1)$ fraction of the stream, and then picks \emph{all} (potentially) necessary edges for satisfying property $(ii)$ in the remainder; the proof then shows that this set of potentially necessary edges 
is of size only $O(n\log{n})$. 

\paragraph{Our work.} Our approach can be seen as a natural combination of these two mostly disjoint lines of work. The first part comes from a better understanding of EDCS. 
We present a rough characterization of when an EDCS cannot beat the $\sfrac23$-approximation, which shows that in these instances, we can effectively ignore the second constraint of EDCS.
As a result, we obtain that the only way for the algorithm of~\cite{Bernstein20} to fail to achieve a better-than-$\sfrac23$-approximation, is if it already picks an almost $\sfrac23$-approximation in
the first $o(1)$ fraction of the stream. Note that this is conceptually similar to the first line of work on random-order streaming matching, but the techniques are entirely disjoint. In particular, our proof is a 
{deterministic} property of EDCS not a {randomized} property of a greedy algorithm on a particular ordering.

We are now in the familiar territory of having a large matching very early on in the stream, and we can spend the remainder of the stream augmenting it. The main difference however is that 
starting from an almost $\sfrac23$-approximation matching, there is essentially no length-$3$ paths for us to augment and we instead need to handle length-$5$ augmenting paths. The key challenge is to find the middle edge of these length-5 augmenting paths. Indeed, we note that the $\sfrac23$-approximation lower bound of \cite{GoelKK12} for {\em adversarial} order streams gives away a $\sfrac23$-approximate matching early on for free, yet it is  provably impossible to augment it in the remainder of the stream using a semi-streaming algorithm. To get around this, we crucially use the random arrival assumption again. Particularly, we regard any length-5 augmenting path whose middle edge arrives after its two endpoint edges as a ``discoverable'' path and then find a constant fraction of such paths. Since the edges arrive in a random order, a constant fraction of length-5 augmenting will be discoverable and thus we are able to beat $\sfrac23$-approximation in our setting.
\newcommand{\phaseo}{\ensuremath{\textnormal{Phase I}}\xspace}
\newcommand{\phaset}{\ensuremath{\textnormal{Phase II}}\xspace}

\newcommand{\BB}{\ensuremath{\mathcal{B}}}

\newcommand{\bd}{\ensuremath{\hat{d}}}

\newcommand{\betap}{\beta_{+}}
\newcommand{\betam}{\beta_{-}}

\newcommand{\Mstar}{M^*}
\newcommand{\Nstar}{\bar{M}}

\newcommand{\MstarU}{\Mstar_{U}}
\newcommand{\MstarbU}{\Mstar_{{\bar{U}}}}

\newcommand{\tG}{\ensuremath{\tilde{G}}}
\newcommand{\tE}{\ensuremath{\tilde{E}}}

\newcommand{\tH}{\ensuremath{\tilde{H}}}
\newcommand{\tU}{\ensuremath{\tilde{U}}}

\section{Notation and Preliminaries}\label{sec:structure}

\paragraph{General notation.} For a graph $G=(V,E)$ and $v \in V$, we use $\deg_G(v)$ to denote the degree of $v$ in $G$ and $N_G(v)$ to denote the neighborset of $v$ (when clear from the context, we may drop the subscript $G$). 
For any edge $e=(u,v) \in E$, we define the \emph{edge-degree} of $e$ in $G$ as $\deg(u)+\deg(v)$. We use $\mu(G)$ to denote the {\em size} (i.e., the number of edges) of the maximum matching in $G$. 

For integer $k \geq 1$ and $p \in [0,1]$, we use $\BB(k, p)$ to denote the \emph{binomial distribution} with parameters $k$ and $p$. That is, $\BB(k, p)$ is the discrete probability distribution of the number of successful experiments out of $k$ experiments each with an independent probability $p$ of success.  


\paragraph{Random-order streams.} We consider the random-order streaming setting where the edges of $G$ arrive one by one in an order chosen uniformly at random from all possible orderings. Let $e_i$ be the $i$-th edge that arrives in the stream. For any two parameters $a, b$ satisfying $1 \leq a < b \leq m$ we use $G[a, b]$ to denote the subgraph of $G$ on vertex-set $V$ and edge-set $\{e_{a}, \ldots, e_{b}\}$. We may also use $G_{< a}$ and $G_{\geq a}$ respectively as shorthands for $G[1, a-1]$ and $G[a, m]$. 

For the input graph $G$ defined by the stream, we can assume w.l.o.g. that $\mu(G) \geq c \log n$ for any desirably large constant $c$. The reason is that any graph can be easily shown to have at most $2 n \cdot \mu(G)$ edges and if $\mu(G) = O(\log n)$ then we can  store the whole input in the memory and report an optimal solution using $O(n \log n)$ space. We further assume throughout the paper that the number of edges $m$ is known by the algorithm in advance. This is a common assumption in the literature and can be removed via standard techniques 
by guessing $m$ in geometrically increasing values at the expense of multiplying the space by an $O(\log n)$ factor.

\subsection{Preliminaries}\label{sec:preliminaries}

\paragraph{Probabilistic tools.} We use the following standard forms of Chernoff bound. 
\begin{proposition}[{Chernoff Bound}; cf.~\cite{AlonS04}]\label{prop:chernoff}
	Suppose $X_1,\ldots,X_t$ are $t$ independent random variables with values in $[0,1]$. Let $X := \sum_{i=1}^{t} X_i$ and assume $\Ex\bracket{X} \leq b$. For any $\delta > 0$ and  $k \geq 1$,
	\begin{align*}
		&\Pr\Paren{\card{X - \Ex\bracket{X}} \geq \delta \cdot b} \leq 2\cdot\exp\Paren{-\frac{\delta^2 \cdot {b}}{3}}\quad \& \quad 
		\Pr\Paren{\card{X - \Ex\bracket{X}} \geq k} \leq 2\cdot\exp\Paren{-\frac{2k^2}{t}}.
	\end{align*} 
\end{proposition}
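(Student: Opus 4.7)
The plan is to derive both inequalities via the standard Cram\'er--Chernoff moment-generating-function method, exactly as presented in the Alon--Spencer reference already cited. The common starting point is Markov's inequality applied to the exponential: for any $\lambda > 0$,
\[
\Pr\Paren{X - \Ex\bracket{X} \geq a} \;\leq\; e^{-\lambda a}\cdot\Ex\bracket{e^{\lambda(X-\Ex\bracket{X})}},
\]
and independence of the $X_i$ factors the right-hand side as $\prod_{i=1}^t \Ex\bracket{e^{\lambda(X_i - \Ex\bracket{X_i})}}$. The two claimed bounds then correspond to two different ways of controlling the per-coordinate moment generating function of $X_i \in [0,1]$.

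For the multiplicative bound I would use the convexity inequality $e^{\lambda x} \leq 1 + (e^\lambda-1) x$ on $[0,1]$ to conclude $\Ex\bracket{e^{\lambda X_i}} \leq \exp\Paren{(e^\lambda - 1)\Ex\bracket{X_i}}$, and hence $\Ex\bracket{e^{\lambda X}} \leq \exp\Paren{(e^\lambda - 1)\,b}$ after invoking $\Ex\bracket{X} \leq b$. Setting $\lambda = \ln(1+\delta')$ with $\delta' := \delta b / \Ex\bracket{X} \geq \delta$ and optimizing yields the classical bound $\exp\Paren{-\bigl((1+\delta')\ln(1+\delta') - \delta'\bigr)\Ex\bracket{X}}$, which by the elementary inequality $(1+y)\ln(1+y) - y \geq y^2/3$ on $y \in [0,1]$ is at most $\exp(-\delta^2 b/3)$; the case $\delta > 1$ yields an even stronger estimate via a standard split. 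The lower tail follows symmetrically by applying the same argument to $-X_i$ (and is automatic if $\Ex\bracket{X} < \delta b$ since then $X \geq 0$ already forces $\Pr(X \leq \Ex\bracket{X} - \delta b)=0$), and a union bound over the two tails absorbs the factor of $2$.

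For the additive Hoeffding bound I would replace the convexity step with Hoeffding's lemma: for any $Y \in [c,d]$ with $\Ex\bracket{Y}=0$, $\Ex\bracket{e^{\lambda Y}} \leq \exp\Paren{\lambda^2 (d-c)^2/8}$. Since each $X_i - \Ex\bracket{X_i}$ lies in an interval of length at most $1$, independence then gives $\Ex\bracket{e^{\lambda (X-\Ex\bracket{X})}} \leq \exp(\lambda^2 t/8)$. Combined with Markov's inequality this yields $\Pr\Paren{X - \Ex\bracket{X} \geq k} \leq \exp\Paren{-\lambda k + \lambda^2 t/8}$, and the optimal choice $\lambda = 4k/t$ gives $\exp(-2k^2/t)$; once again the symmetric lower tail produces the final factor of $2$.

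There is no real obstacle here: both inequalities are textbook facts that can be imported essentially verbatim from the cited Alon--Spencer monograph, and the proposition is stated only to fix the precise constants that the remainder of the paper will invoke. The only mildly technical step is the verification of the $1/3$ constant in the multiplicative exponent, which follows from a short Taylor expansion of $(1+y)\ln(1+y) - y = y^2/2 - y^3/6 + \cdots$ near $y = 0$ together with monotonicity checks on $[0,1]$.
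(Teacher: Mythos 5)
The paper does not prove this proposition; it simply cites it (``cf.~\cite{AlonS04}'') as a textbook fact and moves on. Your reconstruction via the Cram\'er--Chernoff MGF bound, using the convexity inequality $e^{\lambda x}\leq 1+(e^\lambda-1)x$ for the multiplicative tail and Hoeffding's lemma for the additive tail, is exactly the argument in the cited Alon--Spencer monograph, and the details (including the verification that $(1+y)\ln(1+y)-y\geq y^2/3$ on $[0,1]$ and the optimization $\lambda=4k/t$) are sound.
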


We also need Lov\'asz Local Lemma (LLL) in our proofs. 

\begin{proposition}[{Lov\'asz Local Lemma}; cf.~\cite{AlonS04}]\label{prop:lll}
	Let $p \in (0,1)$ and $d \geq 1$. Suppose $\event_1,\ldots,\event_t$ are $t$ events such that $\Pr\paren{\event_i} \leq p$ for all $i \in [t]$ and each $\event_i$ is mutually independent of all but (at most) $d$ other events $\event_j$. 
	If $p \cdot (d+1) < 1/e$ then $\Pr\paren{\cap_{i=1}^{n}\overline{\event_i}} > 0$. 
\end{proposition}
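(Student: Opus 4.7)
The plan is to follow the classical Erdős--Lovász inductive argument, specialized to the symmetric setting. Set $x := 1/(d+1)$ and aim to establish, for every index $i \in [t]$ and every $S \subseteq [t] \setminus \{i\}$, the inequality
\[
\Pr\!\left(\event_i \;\middle|\; \bigcap_{j \in S} \overline{\event_j}\right) \leq x.
\]
Once this is in hand, the conclusion is immediate from the chain rule:
\[
\Pr\!\left(\bigcap_{i=1}^{t} \overline{\event_i}\right) \;=\; \prod_{i=1}^{t} \Pr\!\left(\overline{\event_i} \;\middle|\; \overline{\event_1} \cap \cdots \cap \overline{\event_{i-1}}\right) \;\geq\; (1-x)^{t} \;>\; 0,
\]
so the entire argument reduces to proving the displayed bound.

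The bound is proved by induction on $|S|$. The base case $|S|=0$ is trivial since $\Pr(\event_i) \leq p < 1/(e(d+1)) < x$. For the inductive step, I would split $S$ into $S_1$, the (at most $d$) indices of events on which $\event_i$ may depend, and $S_2 := S \setminus S_1$, and then use Bayes' rule to write
\[
\Pr\!\left(\event_i \;\middle|\; \bigcap_{j \in S} \overline{\event_j}\right) \;=\; \frac{\Pr\!\left(\event_i \cap \bigcap_{j \in S_1} \overline{\event_j} \;\middle|\; \bigcap_{j \in S_2} \overline{\event_j}\right)}{\Pr\!\left(\bigcap_{j \in S_1} \overline{\event_j} \;\middle|\; \bigcap_{j \in S_2} \overline{\event_j}\right)}.
\]
The numerator is at most $\Pr(\event_i \mid \cap_{j \in S_2} \overline{\event_j}) = \Pr(\event_i) \leq p$ by mutual independence of $\event_i$ from the events indexed by $S_2$. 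For the denominator, I would expand it as a telescoping product of conditional probabilities and apply the inductive hypothesis to each $\event_j$ with $j \in S_1$ (each such conditioning involves a strictly smaller subset), giving a lower bound of $(1-x)^{|S_1|} \geq (1-x)^d$.

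The remaining task is the short calculation $p/(1-x)^d \leq x$. Using $(1 - 1/(d+1))^d \geq 1/e$ (which follows from $(1+1/d)^d \leq e$), the left-hand side is at most $ep$, and the hypothesis $p(d+1) < 1/e$ gives exactly $ep \leq 1/(d+1) = x$, closing the induction. I expect the only subtlety to be setting up the conditional-independence bookkeeping in the inductive step so that the degree bound $d$ is invoked legitimately; once $S_1/S_2$ are correctly split, the computation itself is mechanical and no new ideas are needed beyond the elementary inequality $(1-1/(d+1))^d \geq 1/e$.
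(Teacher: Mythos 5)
The paper does not prove this proposition; it is cited directly from Alon--Spencer \cite{AlonS04} as a standard tool, so there is no in-paper argument to compare against. Your proof is the canonical Erd\H{o}s--Lov\'asz induction for the symmetric LLL (split the conditioning set into dependent indices $S_1$ and independent indices $S_2$, bound the numerator by mutual independence, bound the denominator by a telescoping product and the inductive hypothesis, and close with $(1-\tfrac{1}{d+1})^d \geq 1/e$), which is precisely the proof found in the cited reference; it is correct. The only bookkeeping worth making explicit is that the telescoping step and the conditional-probability manipulations require the conditioning events to have positive probability, which is usually handled by proving positivity simultaneously in the same induction.
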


\paragraph{Hall's theorem.} We use the following standard extension of the Hall's marriage theorem for characterizing maximum matching size in bipartite graphs.

\begin{fact}[Extended Hall's Theorem; cf.~\cite{Hall35}]\label{prop:halls-marriage}
	Let $G=(L,R,E)$ be a bipartite graph and $\card{L} = \card{R} = n$. Then,
	\[
		\max \Paren{\card{A} - \card{N(A)}} = n - \mu(G), 
	\]
	where $A$ ranges over $L$ or $R$, separately. We refer to such set $A$ as a \textbf{witness set}. 
\end{fact}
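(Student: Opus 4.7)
The plan is to prove the identity by showing the two inequalities separately, taking advantage of the fact that this is essentially the deficit version of Hall's marriage theorem. Since the two cases (with $A$ ranging over $L$ vs.\ $R$) are symmetric, I would only handle $A \subseteq L$ in detail and remark that the other is identical.

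For the upper bound $\max_A (|A| - |N(A)|) \leq n - \mu(G)$, I would argue as follows. Fix any matching $M$ of $G$ and any $A \subseteq L$. Every edge of $M$ incident to $A$ uses a distinct endpoint in $N(A)$, so at most $|N(A)|$ vertices of $A$ are matched by $M$. Therefore the number of unmatched vertices in $L$ is at least $|A| - |N(A)|$, which gives $|M| \leq n - (|A| - |N(A)|)$. Applying this to a maximum matching yields the inequality.

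For the matching lower bound, namely the existence of a witness set $A$ with $|A| - |N(A)| \geq n - \mu(G)$, I would use a standard alternating-path argument. Fix a maximum matching $\Mstar$, and let $U \subseteq L$ be its set of unmatched vertices, so $|U| = n - \mu(G)$. Define $A \subseteq L$ to be the set of vertices reachable from $U$ via $\Mstar$-alternating paths (with $U \subseteq A$). Two structural observations then finish the argument: (i) every vertex in $N(A)$ must be saturated by $\Mstar$, since otherwise there would be an $\Mstar$-augmenting path contradicting the maximality of $\Mstar$; and (ii) the $\Mstar$-partner of any vertex in $N(A)$ lies in $A$, because the alternating path reaching that vertex extends through its unique matching edge. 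Consequently $\Mstar$ induces a bijection between $N(A)$ and $A \setminus U$, giving $|A| - |N(A)| = |U| = n - \mu(G)$, as required.

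There is no real obstacle here since this is a classical result; the only point requiring care is the verification that the alternating-path closure yields \emph{exactly} the bijection between $N(A)$ and $A \setminus U$, which depends crucially on $\Mstar$ being a maximum (not merely maximal) matching. I would state the theorem for $A \subseteq L$ and then note that reversing the roles of $L$ and $R$ and repeating the construction with the same $\Mstar$ handles the symmetric case.
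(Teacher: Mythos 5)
Your proof is correct, but it takes a genuinely different route from the paper. The paper's footnote reduces the statement to the classical Hall's marriage theorem by a padding trick: add $n-\mu(G)$ dummy vertices to each side, connected to every original vertex on the opposite side; the padded graph now has a perfect matching (pad a maximum matching of $G$ with dummy edges), and applying Hall's condition in the padded graph and specializing to subsets $A$ of the original $L$ or $R$ recovers the deficiency identity, since the dummy vertices contribute exactly $n-\mu(G)$ to every $\card{N(A)}$. Your argument instead re-proves the deficiency form from scratch via the standard alternating-tree construction: the easy direction by a counting argument, and the existence of a witness set by taking the set of $L$-vertices reachable from the unmatched ones along $M^\star$-alternating paths, then showing $M^\star$ induces a bijection between $N(A)$ and $A\setminus U$. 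Both are sound; the paper's reduction is shorter if one grants Hall's theorem as a black box, while yours is self-contained and exhibits the witness set constructively (which is what the downstream lemmas actually use), at the cost of re-deriving the alternating-path machinery. One minor point worth being explicit about in your write-up: when you say the alternating path reaching a vertex of $N(A)$ "extends through its unique matching edge," you should note that the extension edge $(a,v)$ into $N(A)$ is non-matching (either $a$ is unmatched, or the path reached $a$ via its matching edge), since otherwise the partner of $v$ is already $a\in A$ and there is nothing to prove.
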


\Cref{prop:halls-marriage} follows from Tutte-Berge formula for matching size in general graphs~\cite{Tutte47,Berge62} or a simple extension of the proof of Hall's marriage theorem itself.\footnote{Simply add $n-\mu(G)$ vertices to each side of the graph
and connect them to all the original vertices; then apply original's Hall's theorem for perfect matching to this graph as this graph now has one.}

\paragraph{Alternating and augmenting paths.} Given a matching $M$, an {\em alternating path} $P$ for $M$ is a path whose edges alternatively belong to $M$ and do not belong to $M$. An {\em augmenting path} for $M$ is an alternative path that starts and ends with edges that do not belong to $M$. Given an augmenting path $P$ for $M$, we use notation $M \oplus P := (M \setminus P) \cup (P \setminus M)$ to denote the matching obtained by flipping the containment of edges of $P$ in $M$. Given two matchings $M$ and $M'$, their {\em symmetric difference} $M \Delta M'$ is a graph including only the edges that belong to exactly one of $M$ and $M'$.

\subsection{Bernstein's Algorithm}
We briefly review the  parameters and guarantees of the algorithm of Bernstein~\cite{Bernstein20} that we use in our paper. In the following, we slightly increase the constants in the parameters which is needed
for our results.

\begin{definition}[\textbf{Parameters}]\label{def:betalambdaparameter}
	For some small $\epsilon \in (0, \frac12)$ to be determined later, let 
	\[
	\lambda := \frac{\epsilon}{128}, \qquad \betap := 64 \cdot \lambda^{-2} \log (1/\lambda), \qquad \betam = (1-\lambda) \cdot \betap.
	\]
\end{definition}


A high level overview of the algorithm of~\cite{Bernstein20} is as follows:

\begin{tboxalg}{Bernstein's Algorithm~\cite{Bernstein20}.}\label{alg:bernstein}
The algorithm of \cite{Bernstein20} proceeds in two phases as follows: 
\begin{itemize}[leftmargin=15pt]
\item \phaseo terminates within the first $\epsilon m$ edges of the stream. At the end of \phaseo, the algorithm constructs a subgraph $H \subseteq G_{<\epsilon m}$ such that for all $(u,v) \in H$:
\[
	 \deg_H(u)+\deg_H(v) \leq \betap.
\]
Moreover, let $U$ be the set of \emph{all} edges in $G_{\geq \epsilon m}$ such that
\[
	\deg_H(u)+\deg_H(v) < \betam.
\]

\item In \phaset, the algorithm simply stores $U$ in the memory and at the end of the stream returns a maximum matching of $H \cup U$. 
\end{itemize}
\end{tboxalg}

The following lemma is all we need from \cite{Bernstein20} in our paper.

\begin{lemma}[Lemma~4.1 of \cite{Bernstein20}]\label{lem:phaseI}
	There is a way of constructing the subgraph $H$ of $G_{<\epsilon m}$ such that with probability at least $1-n^{-3}$, $|H \cup U| = O(n \log{(n)} \cdot \poly(1/\epsilon))$.
\end{lemma}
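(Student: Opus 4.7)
The plan is to construct $H$ by a one-pass greedy process during \phaseo: scan the arriving edges of $G_{<\epsilon m}$ in stream order, and add $(u,v)$ to $H$ iff $\deg_H(u)+\deg_H(v) < \betap$ at the moment of arrival. This enforces the edge-degree upper bound for $H$ by construction, and summing degrees immediately gives $2|H| \leq n\betap$, i.e.\ $|H| = O(n/\epsilon^2 \cdot \log(1/\epsilon))$, which is already within the budget.

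The heart of the argument is the bound on $|U|$. Since any edge $(u,v) \in U$ satisfies $\deg_H(u)+\deg_H(v) < \betam$, both endpoints must be \emph{underfull}: $\deg_H(u), \deg_H(v) < \betam$ at the end of \phaseo. Let $\tU$ denote the underfull vertex set; then $U \subseteq G_{\geq \epsilon m}[\tU]$, so it suffices to bound $|G[\tU]|$. I would prove the following per-vertex claim: with probability at least $1-n^{-4}$, every $v \in \tU$ has at most $O(\betam\log n/\epsilon)$ neighbors in $G$ that lie in $\tU$. Taking a union bound over the $n$ vertices then yields $|G[\tU]| = O(n \log n \cdot \poly(1/\epsilon))$ with probability at least $1 - n^{-3}$, completing the lemma.

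The per-vertex bound is the main technical step. Fix $v$ and suppose $v$ has $d \geq C \betam \log n/\epsilon$ neighbors in $\tU$ (for a sufficiently large constant $C$ depending on $\lambda$). Under the random order, each such edge lands in \phaseo with probability $\epsilon$, so by \Cref{prop:chernoff} at least $\Omega(\epsilon d) = \Omega(C\betam\log n)$ of them appear in \phaseo with probability $\geq 1-n^{-10}$. For each such \phaseo arrival $(v,u)$ with $u \in \tU$, monotonicity of $H$-degrees and $u \in \tU$ give $\deg_H(u) < \betam$ throughout \phaseo. The greedy rule therefore adds $(v,u)$ to $H$ whenever $\deg_H(v) \leq \betap - \betam = \lambda\betap$ at arrival, so any rejection witnesses $\deg_H(v) \geq \lambda\betap$ and, symmetrically, $\deg_H(u) \geq \lambda\betap$ at that moment. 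A careful charging accounting—tracking the monotone evolution of $\deg_H(v)$ together with the $H$-degree contribution of each rejection to its partner $u$, all of which must stay below $\betam$—bounds the total number of admissible \phaseo arrivals by $O(\betam)$, contradicting $C$ large and concluding the contrapositive.

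The main obstacle is a subtle probabilistic dependency: the very set $\tU$ is determined by the random order of the entire stream, so the events ``$u \in \tU$'' for the various neighbors $u$ of $v$ are not independent of the order of $v$'s incident edges, preventing a naive application of the Chernoff bound. I plan to handle this by conditioning on the random order of all edges not incident to $v$; once this ordering is fixed, I would first verify that the final set $\tU$ can be determined (up to a small error) from this partial information, and then exploit that the marginal order of $v$'s incident edges remains uniform, so the Chernoff step applies cleanly. The rest of the argument is then purely deterministic greedy accounting.
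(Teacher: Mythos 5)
This paper does not prove \Cref{lem:phaseI}: it is quoted verbatim as Lemma~4.1 of Bernstein~\cite{Bernstein20} and used as a black box, so there is no in-paper argument to compare your proposal against. Judged on its own terms, however, your sketch has two concrete problems.

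First, the one-pass greedy you describe does \emph{not} enforce the invariant $\deg_H(u)+\deg_H(v)\le\betap$ for all $(u,v)\in H$. Checking the edge-degree only at insertion time fails because later insertions at $u$ or at $v$ keep increasing their $H$-degrees, and hence the edge-degree of the already-inserted edge $(u,v)$. Concretely, after adding $(u,v)$ when both degrees are $0$, you can still attach $\betap-1$ fresh neighbours to $u$ and $\betap-1$ fresh neighbours to $v$ (each time the partner has degree $0$, so your acceptance test passes), leaving $(u,v)\in H$ with edge-degree close to $2\betap$. Your rule only guarantees a per-vertex degree cap of $\betap$, which loses a factor $2$ in the edge-degree and so does not yield property~(i) of \Cref{lem:edcs-early} / \Cref{cor:edcs-early-general} with the stated parameters; this is precisely why the EDCS-style constructions in \cite{BernsteinS15,BernsteinS16,AssadiBBMS19,Bernstein20} include \emph{removal} of overfull edges, and why Bernstein's Phase~I is not a pure insertion-only greedy.

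Second, even granting a correct Phase~I construction with monotone degrees, the charging step that is supposed to conclude the contrapositive does not close. Your Chernoff bound lower-bounds the number of \emph{Phase~I arrivals} $(v,u)$ with $u\in\tU$ by $\Omega(C\betam\log n)$, but your accounting only controls the \emph{accepted} arrivals: each acceptance increments $\deg_H(v)$, and $\deg_H(v)<\betam$ at the end, so acceptances are at most $\betam$. Rejections do not touch $\deg_H(v)$, so nothing in the sketch bounds their number. (The observation that a rejection of $(v,u)$ with $u\in\tU$ forces $\deg_H(v)\ge\betap-\betam$ at that moment gives you a lower bound on $\deg_H(v)$, not an upper bound on the count of rejections; and the claimed symmetric bound $\deg_H(u)\ge\lambda\betap$ does not follow from $\deg_H(v)+\deg_H(u)\ge\betap$ together with $\deg_H(v)<\betam$.) The contradiction ``Chernoff gives many arrivals, accounting gives $O(\betam)$ arrivals'' therefore never materialises. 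Finally, the independence fix you sketch (condition on the order of edges not incident to $v$ and claim $\tU$ is ``determined up to small error'') is exactly the delicate point, since the placement of $v$'s edges changes the $H$-degrees of $v$'s neighbours and cascades into the membership of $\tU$; as written it is an acknowledgement of the difficulty rather than a resolution of it.
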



\section{Finding an Almost $(\sfrac{2}{3})$-Approximation Early On} \label{sec:earlyon}

We start by characterizing the tight instances of the algorithm of~\cite{Bernstein20} (\Cref{alg:bernstein}). Roughly speaking, we show that 
 the only way for \Cref{alg:bernstein} to end up with a $(2/3)$-approximation is if in its \phaseo  it computes a subgraph $H$ that already has an almost $(2/3)$-approximate matching. 
 This will then be used by our algorithm in the next section to obtain a strictly better-than-$(2/3)$-approximation by augmenting this already-large matching. 
 
 We start by presenting and proving this result for bipartite graphs which is the main part of the proof; 
 we then extend the result to general graphs (with no considerable loss of parameters for our purpose) using the probabilistic method approach of~\cite{AssadiB19} for the original EDCS.

\subsection{Bipartite Graphs}

In this section we prove the following structural result:

\begin{theorem}\label{lem:edcs-early}
	Let $\lambda \in (0,1/2)$ and $\betam \leq \betap$ be such that $\betap \geq \frac{10}{\lambda}$ and $\betam \geq (1-\lambda) \betap$.  
	Suppose $G = (L,R,E)$ is any bipartite graph and: 
	\begin{enumerate}[label=$(\roman*)$]
		\item $H$ is a subgraph of $G$ where for all $(u,v) \in H$: $\deg_{H}(u) + \deg_{H}(v) \leq \betap$; and
		\item $U$ is the set of all edges $(u,v)$ in $G \setminus H$ such that $\deg_{H}(u) + \deg_H(v) < \betam$. 
	\end{enumerate}
	Then, for any parameter $\delta \in (0,1)$, either: 
	\[
	  \mu(H) \geq (1-4\lambda) \cdot (\frac23-\delta) \cdot \mu(G)	\quad \text{or} \quad \mu(H \cup U) \geq (1-2\lambda) \cdot \paren{\frac23 + \frac{\delta^2}{18}} \cdot \mu(G). 
	\]
\end{theorem}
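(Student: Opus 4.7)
The plan is to compare $\mu(H)$ and $\mu(H \cup U)$ against a maximum matching $M^*$ of $G$, exploiting that $H$ is essentially an EDCS of the subgraph $G \setminus U$. I would first reduce to the case $|L| = |R| = \mu(G) = n$ by restricting to $V(M^*)$ so that $M^*$ is a perfect matching, and then partition $M^*$ into $M^*_H := M^* \cap H$, $M^*_U := M^* \cap U$, and $M^*_Q := M^* \setminus (H \cup U)$, with sizes $r$, $p$, $q$ summing to $n$. The trivial bounds $\mu(H) \geq r$ and $\mu(H \cup U) \geq r + p$ follow immediately.

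The key observation is that $H$ is a $(\betam, \betap)$-EDCS of $G \setminus U$: the upper bound holds on all of $H$ by hypothesis~$(i)$, and every edge of $(G \setminus U) \setminus H = G \setminus (H \cup U)$ satisfies the lower bound by hypothesis~$(ii)$. Invoking the standard EDCS approximation lemma within $G \setminus U$ yields $\mu(H) \geq (\frac{2}{3} - O(\lambda)) \mu(G \setminus U) \geq (\frac{2}{3} - O(\lambda)) (r + q)$, so that failure of the first branch forces $p = n - (r + q) \geq (\frac{3\delta}{2} + O(\lambda)) n$. In other words, whenever $\mu(H)$ falls noticeably below $\frac{2}{3}n$, a proportional number of $M^*$-edges must lie in $U$.

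To finish, I would apply Hall's theorem (\Cref{prop:halls-marriage}) to $H \cup U$. WLOG let $A \subseteq L$ be a Hall witness for $H$, so $|A| - |N_H(A)| = n - \mu(H)$, and write $B := N_H(A)$, $\bar B := R \setminus B$. The Hall deficiency of $A$ inside $H \cup U$ equals $(n - \mu(H)) - |N_U(A) \cap \bar B|$, so the second branch $\mu(H \cup U) \geq (1 - 2\lambda)(\frac{2}{3} + \frac{\delta^2}{18}) n$ reduces to a lower bound $|N_U(A) \cap \bar B| \geq (\delta + \frac{\delta^2}{18} - O(\lambda)) n$. A natural source of such $U$-neighbors is the endpoints of $M^*_U$-edges crossing from $A$ to $\bar B$: letting $p_A$ and $q_A$ be the respective counts of $M^*_U$- and $M^*_Q$-edges across this cut, one has $p_A + q_A \geq n - \mu(H)$, so it suffices to upper-bound $q_A$.

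The main obstacle I foresee is proving the upper bound on $q_A$ sharply enough to recover the \emph{quadratic} $\delta^2/18$ saving in the second branch, rather than merely a linear one. A naive application of the EDCS lower bound $\deg_H(u) + \deg_H(v) \geq \betam$ on $M^*_Q$-edges from $A$ to $\bar B$, combined with the degree caps $\deg_H(u) \leq |B|$ for $u \in A$ and $\deg_H(v) \leq n - |A|$ for $v \in \bar B$, yields only trivial inequalities. Overcoming this will likely require a more refined double-counting of $H$-edges incident to $A$ and $\bar B$, perhaps using a convexity or Cauchy--Schwarz estimate on the $H$-degree distribution to extract the quadratic term; the symmetric case in which the Hall witness is taken on the $R$-side will also need to be handled. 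Absorbing the $(1-4\lambda)$ and $(1-2\lambda)$ slack factors into the constants of the EDCS approximation is a secondary technical task.
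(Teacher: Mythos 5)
Your proposal has two genuine gaps. First, the Hall step is logically backwards. By the extended Hall theorem, $n - \mu(H\cup U)$ equals the \emph{maximum} deficiency $|A'| - |N_{H\cup U}(A')|$ over all sets $A'$; you fix $A$ to be a witness for $H$ alone and then try to show its deficiency in $H\cup U$ is small. Even if that succeeds, it yields nothing, because the maximizing set for $H\cup U$ may be an entirely different set $A'$. The paper instead takes $A$ to be a Hall witness for the larger graph $H\cup M^*_U$ (where $M^*_U := M^*\cap U$), so that by construction $|\bar A| + |B| \leq \mu(H\cup U)$; it then lower-bounds $\mu(H\cup U)$ not through a deficiency calculation at all but by exhibiting a large subset $T$ of $\bar A \cup B$ and using $|T| \leq \mu(H\cup U)$, which is a one-directional counting bound that is always valid.

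Second, as you yourself anticipate, your plan does not produce the quadratic $\delta^2/18$ gain, and the vague appeal to ``Cauchy--Schwarz on the $H$-degree distribution'' does not by itself supply the missing step. The paper's engine is a balanced-vs-unbalanced dichotomy on the degree profile of a specific matching $\bar M\subseteq M^*\setminus(H\cup U)$ running between $A$ and $\bar B$, whose edge-degrees in a trimmed subgraph $F\subseteq H$ are set to exactly $\beta_-$. When both endpoints of each $\bar M$-edge have degree $\approx\beta_-/2$ (the balanced case), a fractional-matching construction inside $F$ gives $\mu(H)\gtrsim 2|\bar M|$; when the profile is unbalanced, a sum-of-squares estimate forces the neighbor set $T$ to satisfy $|T| > 2|\bar M|$, and $|T|\leq\mu(H\cup U)$ then beats $2/3$. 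A single scalar $\gamma$ measures how balanced the profile is, and a convexity argument lower-bounds $\sum_{s}\deg_F(s)^2$ in terms of $\gamma$ — that is precisely where the $\gamma^2/18$ term is extracted. The black-box EDCS $(2/3 - O(\lambda))$-approximation lemma you invoke is too coarse to recover this interpolation, because it throws away the degree information on $\bar M$; the bound has to be re-derived from scratch while tracking the degree distribution, which is what the paper's proof does.
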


\noindent
Let us define the following (see~\Cref{fig:structural} for an illustration):
\begin{itemize}
\item Let $\Mstar$ be a maximum matching of $G$ and define $\MstarU := \Mstar \cap U$ and $\MstarbU := \Mstar \setminus U$. 
\item $A$ is Hall's theorem witness set in $H \cup \MstarU$ (as in~\Cref{prop:halls-marriage}) and $B := N_{H \cup \MstarU}(A)$. Without loss of generality we assume $A \subseteq L$ and define $\bar{A} := L \setminus A$ and $\bar{B} := R \setminus B$. 
\end{itemize}
\noindent
We start with the following simple claim that follows easily from~\Cref{prop:halls-marriage}. 

\begin{claim}\label{clm:easy-cor}
	For the witness set $A$: 
	\begin{enumerate}[label=$(\roman*)$]
		\item $\card{\bar{A}} + \card{{B}} \leq \mu(H \cup U)$. 
		\item There is a matching $\Nstar \subseteq \MstarbU$ between $A$ and $\bar{B}$ in $G$ with size $\card{\Nstar} = \mu(G) - \mu(H \cup \MstarU)$. 
	\end{enumerate}
\end{claim}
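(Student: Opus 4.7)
The plan is to derive both parts from the extended Hall's marriage theorem (\Cref{prop:halls-marriage}) applied to the bipartite graph $H \cup \MstarU$, together with an elementary counting argument on $\Mstar$. Since $A \subseteq L$ is by definition a witness set in $H \cup \MstarU$, Hall's theorem gives
\[
    |A| - |N_{H \cup \MstarU}(A)| = |L| - \mu(H \cup \MstarU),
\]
so, writing $\bar A = L \setminus A$ and $B = N_{H \cup \MstarU}(A)$, we obtain the identity $|\bar A| + |B| = \mu(H \cup \MstarU)$. Because $\MstarU \subseteq U$ implies $H \cup \MstarU \subseteq H \cup U$ and matching number is monotone under adding edges, this immediately yields part~(i): $|\bar A| + |B| = \mu(H \cup \MstarU) \leq \mu(H \cup U)$.

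For part~(ii), the key observation is that no edge of $\Mstar$ between $A$ and $\bar B$ can lie in $U$. Indeed, if $(u,v) \in \Mstar$ with $u \in A$ and $v \in \bar B$ were in $U$, then $(u,v) \in \Mstar \cap U = \MstarU \subseteq H \cup \MstarU$, which would force $v$ to lie in $N_{H \cup \MstarU}(A) = B$---contradicting $v \in \bar B$. Hence every edge of $\Mstar$ with endpoints in $A$ and $\bar B$ automatically lies in $\MstarbU$. As any subset of $\Mstar$ is itself a matching, it then suffices to exhibit at least $\mu(G) - \mu(H \cup \MstarU)$ such edges, and I will take $\Nstar$ to be any $\mu(G) - \mu(H \cup \MstarU)$ of them.

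The counting is short. Each edge of $\Mstar$ in the bipartite graph $G$ has one endpoint in $L$ and one in $R$, so it either has its $L$-endpoint in $A$ and its $R$-endpoint in $\bar B$ (in which case it is one of the desired edges) or it is incident to a vertex in $\bar A \cup B$. Since $\Mstar$ is a matching, the number of edges of the second type is at most $|\bar A \cup B| = |\bar A| + |B|$, which equals $\mu(H \cup \MstarU)$ by the Hall identity established above. Thus at least $|\Mstar| - \mu(H \cup \MstarU) = \mu(G) - \mu(H \cup \MstarU)$ edges of $\Mstar$ go between $A$ and $\bar B$, and the claim follows. The whole proof is a direct application of Hall's theorem and a one-line counting argument, so I do not anticipate any real obstacle; the only mildly subtle point is the non-membership in $U$ of the $A$-to-$\bar B$ edges of $\Mstar$, which is a clean consequence of how $B$ is defined.
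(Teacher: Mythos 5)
Your proof is correct and takes essentially the same route as the paper: both parts hinge on the Hall identity $|\bar A| + |B| = \mu(H \cup \MstarU)$, and for part (ii) both boil down to counting edges of $\Mstar$ that must cross from $A$ to $\bar B$ and observing that such edges cannot lie in $H \cup \MstarU$, hence lie in $\MstarbU$. The only cosmetic difference is that the paper phrases the counting in part (ii) as a second application of \Cref{prop:halls-marriage} to the graph $\Mstar$ alone, whereas you do the double-count directly; these are the same argument.
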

\begin{proof}
	For part $(i)$, note that $\card{\bar{A}} + \card{{B}} = n-(\card{A}-\card{B}) = n-(n-\mu(H \cup \MstarU)) \leq \mu(H \cup U)$ where the second to last equation is since $A$ is a witness set in $H \cup \MstarU$, and the 
	last equation is because $\MstarU$ is a subset of $U$.
	
	For part $(ii)$, consider the graph consisting of only $\Mstar$. Given that for the set $A$ in this new graph, we have $\card{A}-\card{N_{\Mstar}(A)} \leq n-\mu(G)$ by~\Cref{prop:halls-marriage}, we get that 
	$\card{N_{\Mstar}(A)}-\card{B} \geq \mu(G)-\mu(H \cup \MstarU)$. Moreover, since $\Mstar$ is a matching, these new neighbors of $A$ are only formed via a matching. Finally, as these edges are missing 
	from $H \cup \MstarU$, this matching from $A$ to $\bar{B}$ should entirely belong to $\MstarbU$. 
\end{proof}

\bigskip

\begin{figure}[h!]
\centering
	\begin{tikzpicture}
	\node[rounded rectangle, draw, line width=1pt, minimum width=6cm, minimum height=1cm, blue!75, fill=blue!1] (A) {};
	\node (nA) [left=5pt of A]{$\textcolor{blue}{\mathbf{A}}$};
	\node[rounded rectangle, draw, line width=1pt, minimum width=3cm, minimum height=1cm, blue!75, fill=blue!1] (bA) [right=0.5cm of A]{};
	\node (nbA) [right=5pt of bA]{$\textcolor{blue}{\mathbf{\bar{A}}}$};
	\node[rounded rectangle, draw, line width=1pt, minimum width=3cm, minimum height=1cm, blue!75, fill=blue!1] (B) [below left=1cm and -1.75cm of A]{};
	\node (nB) [left=5pt of B]{$\textcolor{blue}{\mathbf{B}}$};
	\node[rounded rectangle, draw, line width=1pt, minimum width=6cm, minimum height=1cm, blue!75, fill=blue!1] (bB) [right=0.5cm of B] {};
	\node (nbB) [right=5pt of bB]{$\textcolor{blue}{\mathbf{\bar{B}}}$};

	\node[rounded rectangle, draw, line width=0.5pt, minimum width=3cm, minimum height=0.75cm, , red!75, fill=red!10] (SL) [right=-3cm of A]{$\mathbf{S}$};
	\node[rounded rectangle, draw, line width=0.5pt, minimum width=3cm, minimum height=0.75cm, , red!75, fill=red!10] (SR) [left=-3cm of bB]{$\mathbf{S}$};
	
	\draw[line width=1pt, red!75] 
		(SL.south west) to (SR.north west)
		(SL.south) to (SR.north)
		(SL.south east) to (SR.north east);
	
	\node (N) [below right=0.3cm and -1.5cm of SL]{$\textcolor{red}{\mathbf{\Nstar}}$};
	
	\node[rounded rectangle, draw, line width=0.5pt, minimum width=2.5cm, minimum height=0.75cm, , ForestGreen!75, fill=ForestGreen!10] (TL) [right=-2.5cm of bA]{$\mathbf{T}$};
	\node[rounded rectangle, draw, line width=0.5pt, minimum width=2.5cm, minimum height=0.75cm, , ForestGreen!75, fill=ForestGreen!10] (TR) [left=-2.5cm of B]{$\mathbf{T}$};

	\draw[line width=1pt, dashed, ForestGreen!75] 
		(SL) to node[left] {$\textcolor{ForestGreen}{\mathbf{F}}$} (TR)
		(SR) to node[left] {$\textcolor{ForestGreen}{\mathbf{F}}$} (TL);

	
\end{tikzpicture}
	\caption{An illustration of the Hall's witness set and our notation in the proof of~\Cref{lem:edcs-early}. Note that in particular, there are no edges between $A$ and $\bar{B}$ in $H \cup \MstarU$, and the matching $\Nstar$ 
	belongs entirely to $\MstarbU$.}\label{fig:structural} 
\end{figure}
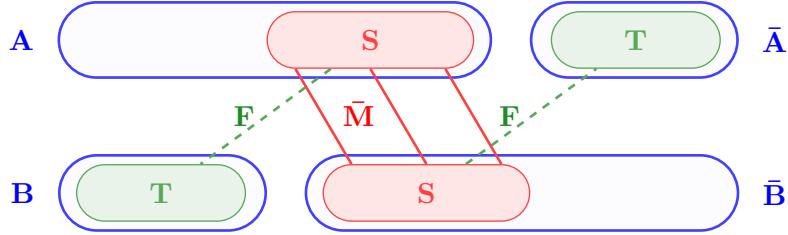

\bigskip

Consider any edge $(u,v) \in \Nstar$ defined in~\Cref{clm:easy-cor}. As $\Nstar \subseteq \MstarbU$, by property $(ii)$ of~\Cref{lem:edcs-early} statement, we have, 
$
	\deg_H(u) + \deg_H(v) \geq \betam. 
$
We arbitrarily remove the edges on $u$ and $v$ until the above inequality becomes tight for every edge (since $\Nstar$ is a matching, this is possible indeed). We let $F$ be the remaining edges. Note that any edge in $F$ 
is incident on exactly one vertex of $\Nstar$ as there are no edges in $H \cup \MstarU$ between the endpoints of $\Nstar$. We record these properties as follows: 
\begin{align}
	\forall (u,v) \in \Nstar: ~	\deg_F(u) + \deg_F(v) = \betam  \quad \text{and} \quad \card{F} = \card{\Nstar} \cdot \betam.\label{eq:N-betam}
\end{align}

In the following, we first give some illustrating examples that highlight the ideas for proving~\Cref{lem:edcs-early}, and then proceed to the formal proof. 

\subsection*{Illustrating Examples and The High Level Idea} 

By~\Cref{clm:easy-cor}, $\mu(H \cup U) \geq \mu(G) - \card{\Nstar}$; thus, if $\Nstar$ is sufficiently smaller than $\mu(G)/3$, we already satisfy the second condition of~\Cref{lem:edcs-early} and we would be done. 
As such, in this informal discussion, we are simply going to assume that $\card{\Nstar} = \mu(G)/3$. Moreover, we define the endpoints of $\Nstar$ as $S$, and their neighborset of $S$ in $H$ as the set $T$. 
See~\Cref{fig:structural} for an illustration. 
Let us now consider two \emph{extreme} cases: 

\paragraph{When degrees of edges in $\Nstar$ are ``highly balanced''.} That is, both endpoints of edges in $\Nstar$, namely, vertices in $S$, have degree $\betam/2$ (recall that by~\Cref{eq:N-betam}, edge-degree of every edge in $\Nstar$ is $\betam$). We claim that in this case, there is a large matching in $H$ already that satisfies condition one of~\Cref{lem:edcs-early}. 

Firstly, note that the degrees of vertices in $T$ needs to be at most $\betap - \betam/2 \leq (1+\lambda)\betap/2$ to satisfy property $(i)$ of~\Cref{lem:edcs-early} for 
edges of $H$ between $S$ and $T$. As such, the subgraph between $S$ and $T$ has degree $\betam/2$ on the $S$-side and degree at most $\betap/2$ on the $T$-side. 
By putting a mass of $\frac{2}{(1+\lambda)\betap}$ on every edge of this subgraph, we can create a feasible fractional matching of value $\card{S} \cdot (\betam/2) \cdot (2/((1+\lambda)\cdot\betap)) \geq (1-\Theta(\lambda))\card{S}$ in this subgraph (and thus $H$).
Considering the integrality gap of the matching polytope in bipartite graphs is one, this means there is a matching of size $(1-\Theta(\lambda))\card{S} = (1-\Theta(\lambda)) \cdot 2\card{\Nstar} = (1-\Theta(\lambda)) \cdot 2\mu(G)/3$ in $H$. Thus, in this case, 
$H$ already has a large matching that satisfies the first condition of~\Cref{lem:edcs-early}. 

It is worth mentioning that the tight $2/3$-approximation example of~\cite{BernsteinS15} for EDCS can be used here to prove that in this case, the subgraph $H \cup U$ may \emph{not} have a matching of size larger than $2\mu(G)/3$, i.e., the 
second condition of~\Cref{lem:edcs-early} may indeed not hold here. 

\paragraph{When degrees of edges in $\Nstar$ are ``mostly unbalanced''.} Let us for our informal discussion assume that for every edge in $\Nstar$ its endpoint in $L$ has degree $\betam/3$ while its endpoint in $R$ has degree $2\betam/3$ (again recall that sum of these degrees should add up to $\betam$ by~\Cref{eq:N-betam}). We claim that in this case, $H \cup U$ has a large matching that satisfies condition two of~\Cref{lem:edcs-early}. 
 
In this case, to satisfy property $(i)$ of~\Cref{lem:edcs-early} for edges of $H$ between $S$ and $T$, we need that vertices in $T \cap L$ should have degree at most $\betap - 2\betam/3 \leq (1+\lambda)\betap/3$. 
Given the bound of $2\betam/3$ on the degrees of vertices in $S \cap R$, we have that,
\[
	\card{T \cap L} \geq (1-\Theta(\lambda)) \cdot 2 \cdot \card{S \cap R}.
\]
 A similar argument also proves that 
 \[
 	\card{T \cap R} \geq (1-\Theta(\lambda)) \cdot \frac12 \cdot \card{S \cap L}.
 \]
 Now note that by~\Cref{clm:easy-cor}, $\card{S \cap R} = \card{S \cap L} = \card{\Nstar} = \mu(G) - \mu(H \cup \MstarU) \geq \mu(G) - \mu(H \cup U)$, while 
 $\card{T \cap L} + \card{T \cap R} = \card{T} \leq \card{\bar{A}} + \card{B} \leq \mu(H \cup U)$. Combining these with the above two bounds, we get that, 
 \[
 \mu(H \cup U) \geq (1-\Theta(\lambda)) \cdot \frac57 \cdot \mu(G). 
 \]
 Thus, in this case, $H \cup U$ has a matching which is a (much) better than $2/3$ approximation. 
 
It is worth mentioning that in this case, the subgraph $H$ may \emph{not} have a matching larger than $3/2 \cdot \card{\Nstar} = \mu(G)/2$, which means the first condition of~\Cref{lem:edcs-early} may indeed not hold here. 

\medskip
The above extreme examples suggest that when edge-degrees of ${\Nstar}$ are more toward being balanced, the subgraph $H$ has a close to $2/3$-approximate matching, 
while when edge-degrees are more unbalanced, the matching of $H \cup U$ is strictly better than $2/3$-approximation. This will be the general strategy underlying our proof of~\Cref{lem:edcs-early} in the 
next subsection. The proof can then be seen more or less as a ``smooth interpolation'' between these two extreme cases. 

\subsection*{The Formal Proof}

In the following lemma, we prove a lower bound on $\mu(H)$. This lemma can then be used as follows: if degree of most edges in $\Nstar$ are ``balanced'', i.e., both endpoints have degree $\approx \betam/2$, then $\mu(H)$ will already be  
of size $2 \cdot \card{\Nstar}$ which will be sufficient for the first condition of~\Cref{lem:edcs-early}.  

\begin{lemma}[\textbf{matching of $H$ is large}]\label{lem:mu(H)-large}
	We have $\mu(H) \geq \frac{\betam}{1+4\lambda} \cdot \sum_{(u,v) \in \Nstar}\frac1{\max\set{\deg_F(u)\, ,\, \deg_F(v)}}$. 
\end{lemma}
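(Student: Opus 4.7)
The plan is to prove this lemma by exhibiting a feasible fractional matching on the bipartite subgraph $F \subseteq H$ whose total weight is at least the right-hand side of the claim; since the bipartite fractional matching polytope is integral, this will yield $\mu(H) \geq \mu(F) \geq$ (this weight).

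For each $z \in S$, let $m(z) \in S$ denote the $\Nstar$-partner of $z$ (so $(z, m(z)) \in \Nstar$). I will assign, for each $F$-edge $e = (x, y)$ and each endpoint $z \in \{x, y\} \cap S$, a contribution of $\frac{1-\lambda}{1+\lambda} \cdot \frac{1}{\max(\deg_F(z), \deg_F(m(z)))}$; the weight $w(e)$ of $e$ is the sum of these contributions over $z \in \{x,y\} \cap S$. Since each $z \in S$ is incident on exactly $\deg_F(z)$ $F$-edges, summing contributions over $z \in S$ gives a total weight of
\[
\frac{1-\lambda}{1+\lambda} \sum_{z \in S} \frac{\deg_F(z)}{\max(\deg_F(z), \deg_F(m(z)))} = \frac{1-\lambda}{1+\lambda} \sum_{(u,v) \in \Nstar} \frac{\deg_F(u)+\deg_F(v)}{\max(\deg_F(u), \deg_F(v))} = \frac{1-\lambda}{1+\lambda} \sum_{(u,v) \in \Nstar} \frac{\betam}{\max(\deg_F(u),\deg_F(v))},
\]
using $\deg_F(u) + \deg_F(v) = \betam$ from \eqref{eq:N-betam}. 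The elementary inequality $(1-\lambda)(1+4\lambda) \geq 1+\lambda$ for $\lambda \leq 1/2$ gives $\frac{1-\lambda}{1+\lambda} \geq \frac{1}{1+4\lambda}$, which matches the claimed bound.

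For feasibility, the load at each $z \in S$ from $z$'s own assignments is at most $\frac{1-\lambda}{1+\lambda} \cdot \frac{\deg_F(z)}{\max(\deg_F(z), \deg_F(m(z)))} \leq \frac{1-\lambda}{1+\lambda} < 1$. At each $y \notin S$, every $F$-neighbor of $y$ lies in $S$, so the load equals $\frac{1-\lambda}{1+\lambda} \sum_{x \in X_y} \frac{1}{\max(\deg_F(x), \deg_F(m(x)))}$ where $X_y := \{x \in S : (x,y) \in F\}$. The key tool is the inequality $\deg_F(y) + \deg_F(x) \leq \deg_H(y) + \deg_H(x) \leq \betap$ for each $x \in X_y$, inherited from $F \subseteq H$ and property $(i)$ of the lemma statement. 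This forces $\deg_F(y) \leq \betap(1+\lambda)/2$ whenever some $x \in X_y$ has $\deg_F(x) \geq \betam/2$, while $\max(\deg_F(x), \deg_F(m(x))) \geq \betam/2$ always holds. A short two-case analysis — splitting on whether $X_y$ contains any ``max-side'' vertex of its $\Nstar$-edge, and using $\max(\deg_F(x), \deg_F(m(x))) = \betam - \deg_F(x) \geq \deg_F(y) - \lambda\betap$ in the ``all min-side'' case — shows that $\sum_{x \in X_y} 1/\max \leq (1+\lambda)/(1-\lambda)$, making the total load at $y$ at most $1$.

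The main obstacle will be handling $F$-edges with both endpoints in $S$, which the scheme above assigns weight from both sides and which could overload an $S$-vertex with many $S$-$S$ neighbors. I plan to handle this by (a) exploiting the flexibility in the trimming step of $F$'s construction to preferentially remove $S$-$S$ edges whenever the tightness $\deg_F(u)+\deg_F(v)=\betam$ still permits, and (b) for any remaining $S$-$S$ edge, assigning its weight to just one of its two $S$-endpoints chosen to balance loads; the slack between $\frac{1-\lambda}{1+\lambda}$ and $\frac{1}{1+4\lambda}$ is precisely designed to absorb any unavoidable loss from this step.
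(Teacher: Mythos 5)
Your proof is essentially the paper's: construct a fractional matching supported on $F$, putting weight on each $F$-edge $e$ proportional to $1/\max\{\deg_F(z),\deg_F(m(z))\}$ where $z$ is its $S$-endpoint, verify feasibility at $S$-vertices and non-$S$ vertices via the edge-degree bound $\deg_F(y)+\deg_F(x)\leq\betap$ from property $(i)$, and invoke bipartite integrality. The paper uses the looser constant $\tfrac{1}{1+4\lambda}$ together with the single inequality $\deg_F(w)\leq(1+4\lambda)\max\{\deg_F(u_i),\deg_F(v_i)\}$ (its Claim~\ref{clm:mu(H)-large}); your sharper constant $\tfrac{1-\lambda}{1+\lambda}$ is valid too, and since $\tfrac{1-\lambda}{1+\lambda}\geq\tfrac{1}{1+4\lambda}$ for $\lambda\leq 1/2$ it recovers the stated bound. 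Your feasibility bound $\sum_{x\in X_y}1/\max\leq\tfrac{1+\lambda}{1-\lambda}$ does hold and needs no case split: from $\deg_F(y)\leq\betap-\deg_F(x)$ and $\max\geq\deg_F(m(x))=\betam-\deg_F(x)\geq(1-\lambda)\betap-\deg_F(x)$ one gets $\deg_F(y)-\max\leq\lambda\betap\leq\tfrac{2\lambda}{1-\lambda}\max$, hence $\deg_F(y)\leq\tfrac{1+\lambda}{1-\lambda}\max$ uniformly, and summing over the $\deg_F(y)$ neighbors in $X_y$ gives the bound.

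The one thing to flag: your final paragraph worries about $F$-edges with both endpoints in $S$, and proposes to rework the trimming of $F$ and the weight assignment to accommodate them. This is unnecessary because such edges cannot exist. The paper states this explicitly right before \eqref{eq:N-betam}: every edge of $F$ is incident on exactly one vertex of $\Nstar$. The reason is structural: $\Nstar$ is a matching between $A$ and $\bar{B}$, where $A$ is the Hall witness set and $B=N_{H\cup\MstarU}(A)$, so $H$ (and hence $F\subseteq H$) has no edges between $A$ and $\bar{B}$, i.e.\ none between $S\cap L$ and $S\cap R$; and same-side $S$-$S$ edges are impossible since the graph is bipartite. So the contingency plan at the end of your write-up can be dropped entirely — once you delete it, your proof is the paper's proof with a marginally tighter constant.
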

\begin{proof}
	For every edge $(u,v) \in \Nstar$, define $F(u,v)$ as  set of edges in $F$ that are incident on $u$ or $v$. We define the following fractional matching $x \in \IR^F$ on edges of $F$: 
	\begin{itemize}
	\item for any edge $e \in F(u,v)$: set $x_e := \frac{1}{1+4\lambda} \cdot \frac{1}{\max\set{\deg_F(u)\, ,\, \deg_F(v)}}$. 
	\end{itemize} 
	Let us now prove that this is indeed a valid fractional matching. For any vertex $w$ matched by $\Nstar$, 
	\[
		x_w:= \sum_{e \ni w} x_e \leq \deg_F(w) \cdot \frac{1}{1+4\lambda} \cdot \frac{1}{\deg_F(w)} < 1,
	\]
	thus satisfying the fractional matching constraint. 
	
	Now fix a vertex $w$ not matched by $\Nstar$. Let $u_1,\ldots,u_{\deg_F(w)}$ denote the neighbors of $w$ in $F$. By definition, all these vertices are matched by $\Nstar$. Let $v_1,\ldots,v_{\deg_F(w)}$ be the matched pairs of 
	these vertices. We need the following simple claim. 
	
	\begin{claim}\label{clm:mu(H)-large}
	For every $i \in [\deg_F(w)]$, 
	$
	\deg_F(w) \leq (1+4\lambda)\cdot \max\set{\deg_F(u_i)\, ,\, \deg_F(v_i)}.
	$ 
	\end{claim}
	\begin{proof}
	We first have the following two equations: 
	\begin{align*}
		\deg_F(w) + \deg_F(u_i) &\leq \betap,   \tag{by the property $(i)$ of~\Cref{lem:edcs-early} statement} \\
		\deg_F(u_i) + \deg_F(v_i) &= \betam. \tag{by~\Cref{eq:N-betam}}
	\end{align*}
	As such, 
	\begin{align*}
		\deg_F(w) - \deg_F(v_i) \leq \betap - \betam \leq 2\lambda \betam \tag{as $\lambda \leq 1/2$, and $\betam \geq (1-\lambda)\betap$}
	\end{align*}
	Noting that $\max\set{\deg_F(u_i)\, ,\, \deg_F(v_i)} \geq \betam/2$ by~\Cref{eq:N-betam}, concludes the proof. 
	\Qed{clm:mu(H)-large}
	
	\end{proof}
	To finalize~\Cref{lem:mu(H)-large}, for any vertex $w$ not matched by $\Nstar$, we have, 
	\[
		x_w := \sum_{e=(w,u_i)} x_e = \sum_{u_i} \frac{1}{1+4\lambda} \cdot \frac{1}{\max\set{\deg_F(u_i)\, ,\, \deg_F(v_i)}} \Leq{\Cref{clm:mu(H)-large}} \sum_{u_i} \frac{1}{\deg_F(w)} = 1,
	\]
	thus satisfying the fractional matching constraint. This implies that $x$ is a valid fractional matching. 
	
	Finally, the value of this fractional matching is: 
	\begin{align*}
		\sum_{e \in F} x_e &= \sum_{(u,v) \in N} \sum_{e \in F(u,v)} x_e = \sum_{(u,v) \in N} \frac{\deg_F(u)+\deg_F(v)}{(1+4\lambda) \cdot {\max\set{\deg_F(u)\, ,\, \deg_F(v)}}} \\
		&=  \frac{\betam}{1+4\lambda} \cdot \sum_{(u,v) \in N}\frac1{\max\set{\deg_F(u)\, ,\, \deg_F(v)}},
	\end{align*}
	where the last equation is by~\Cref{eq:N-betam}. As the integrality gap of matching polytope on bipartite graphs is one, we obtain that the desired lower bound on $\mu(H)$. \Qed{lem:mu(H)-large}
	
\end{proof}

We now prove that if on the other hand most edges of $\Nstar$ are ``unbalanced'', then $\mu(H \cup U)$ should be sufficiently large.
To continue, we need a quick definition. Let $S$ denote the endpoints of the matching $\Nstar$ and $T$ be the neighborset of these vertices in $F$. Recall that by~\Cref{eq:N-betam}, $S$ and $T$ are disjoint (see~\Cref{fig:structural}).

\begin{lemma}[\textbf{matching of $\mu(H \cup U)$ is large}]\label{lem:mu(HU)-large}
	We have $\mu(H \cup U) \geq  \frac{\card{\Nstar}^2 \cdot {\betam}^2}{\card{\Nstar} \cdot \betam \cdot \betap -\sum_{s \in S} (\deg_F(s))^2}$. 
\end{lemma}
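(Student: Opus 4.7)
My plan is to bound $\mu(H \cup U)$ from below by $|T|$, and then bound $|T|$ from below via Cauchy--Schwarz applied to the $F$-degrees of vertices in $T$.

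\textbf{Step 1: $|T| \leq \mu(H \cup U)$.} I first argue that $T \subseteq \bar{A} \cup B$. Since $\Nstar$ matches $A$ to $\bar{B}$, we have $S \cap L \subseteq A$ and $S \cap R \subseteq \bar{B}$. Now pick any $t \in T \cap L$; it is incident via $F \subseteq H \subseteq H \cup \MstarU$ to some $s \in S \cap R \subseteq \bar{B}$, so $t$ cannot lie in $A$ (otherwise $\bar{B}$ would meet $N_{H \cup \MstarU}(A) = B$), hence $t \in \bar{A}$. Symmetrically, any $t \in T \cap R$ is a neighbor in $H$ of some $s \in A$, hence $t \in B$. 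This gives $|T| \leq |\bar{A}| + |B|$, and then Claim~\ref{clm:easy-cor}(i) yields $|T| \leq \mu(H \cup U)$.

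\textbf{Step 2: double counting the edges of $F$.} For each $t \in T$, let $d_t := \deg_F(t)$. Since every edge of $F$ is incident to exactly one endpoint in $S$ (by the disjointness noted after \eqref{eq:N-betam}) and one endpoint in $T$, double counting gives
\[
\sum_{t \in T} d_t \;=\; |F| \;=\; \sum_{s \in S} \deg_F(s) \;=\; |\Nstar| \cdot \betam,
\]
where the last equality is \eqref{eq:N-betam}.

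\textbf{Step 3: bounding $\sum_{t\in T} d_t^2$ using the edge-degree constraint.} Writing $d_t^2$ as a sum over edges incident to $t$ and swapping the order of summation,
\[
\sum_{t \in T} d_t^2 \;=\; \sum_{(s,t) \in F} \deg_F(t).
\]
Since each edge $(s,t) \in F \subseteq H$ satisfies $\deg_F(s) + \deg_F(t) \leq \deg_H(s) + \deg_H(t) \leq \betap$ by property $(i)$ of \Cref{lem:edcs-early}, we get $\deg_F(t) \leq \betap - \deg_F(s)$. Summing and re-grouping by $s$,
\[
\sum_{t \in T} d_t^2 \;\leq\; \sum_{s \in S} \deg_F(s)\bigl(\betap - \deg_F(s)\bigr) \;=\; |\Nstar|\cdot\betam\cdot\betap \;-\; \sum_{s \in S}(\deg_F(s))^2.
\]

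\textbf{Step 4: Cauchy--Schwarz.} Applying Cauchy--Schwarz in the form $\bigl(\sum_{t \in T} d_t\bigr)^2 \leq |T| \cdot \sum_{t \in T} d_t^2$ and combining with Steps~2 and~3 gives
\[
|T| \;\geq\; \frac{\bigl(|\Nstar|\cdot\betam\bigr)^2}{|\Nstar|\cdot\betam\cdot\betap - \sum_{s \in S}(\deg_F(s))^2},
\]
and combining with Step~1 finishes the proof.

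The only conceptually delicate part is Step~1, which uses the witness-set structure of $A$ (specifically that $A$ has no $H$-neighbors in $\bar{B}$) to pin $T$ inside $\bar{A} \cup B$; the remaining work is just Cauchy--Schwarz and the $\betap$ edge-degree bound, both of which are routine. The expected denominator shape, with $\sum_s (\deg_F(s))^2$ subtracted, is exactly what one gets when the $F$-degrees at $S$ are highly unbalanced (large $\sum (\deg_F(s))^2$), matching the intuition from the ``mostly unbalanced'' illustrating example.
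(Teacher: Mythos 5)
Your proof is correct and follows essentially the same route as the paper's: bound $\mu(H\cup U)$ from below by $|T|$ via the Hall witness structure, bound $\sum_{t\in T}\deg_F(t)^2$ using the edge-degree cap $\betap$ on $F\subseteq H$, and invoke Cauchy--Schwarz (equivalently the power-mean inequality the paper uses) on the $F$-degrees of $T$. The only cosmetic difference is in Step~3: you obtain $\sum_{t}\deg_F(t)^2\leq |\Nstar|\betam\betap-\sum_s\deg_F(s)^2$ by bounding per edge $\deg_F(t)\leq\betap-\deg_F(s)$ and regrouping by $s$, whereas the paper writes $|F|\betap\geq\sum_{(u,v)\in F}(\deg_F(u)+\deg_F(v))=\sum_s\deg_F(s)^2+\sum_t\deg_F(t)^2$ and rearranges; these are identical computations, and Steps~1, 2, and~4 match the paper's reasoning step for step.
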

\begin{proof}

	Since $F \subseteq H$, by property $(i)$ of~\Cref{lem:edcs-early}, we have that 
	\begin{align}
		\card{F} \cdot \betap \geq \sum_{(u,v) \in F} \deg_F(u)+\deg_F(v) = \sum_{s \in S} (\deg_F(s))^2 + \sum_{t \in T} (\deg_F(t))^2. \label{eq:Zub-large-1}
	\end{align}

	We can lower bound the second term of the RHS as follows. Recall that sum of quadratics is minimized over all-equal terms. As $\sum_{t \in T} \deg_F(t) = \card{F}$, this implies that,  
	\[
		\sum_{t \in T} (\deg_F(t))^2 \geq \sum_{t \in T} (\frac{\card{F}}{\card{T}})^2 = \card{T} \cdot (\frac{\card{F}}{\card{T}})^2 = \frac{\card{F}^2}{\card{T}}.
	\]
	By plugging in this bound in~\Cref{eq:Zub-large-1} and moving the terms around, we have that
	\[
		\card{T} \geq \frac{\card{F}^2}{\card{F}\cdot \betap-\sum_s (\deg_F(s))^2} = \frac{\card{\Nstar}^2 \cdot {\betam}^2}{\card{\Nstar} \cdot \betam \cdot \betap -\sum_s (\deg_F(s))^2}. \tag{as $\card{F}=\card{\Nstar} \cdot \betam$ by~\Cref{eq:N-betam}}
	\]
	Finally, $T \subseteq \bar{A} \cup B$ (as there are no edges between $A$ and $\bar{B}$) and thus by~\Cref{clm:easy-cor}, $\card{T} \leq \mu(H \cup U)$ which finalizes the proof. \Qed{lem:mu(HU)-large}
	
\end{proof}

~\Cref{lem:mu(HU)-large} can be used as follows: when degree of most edges in $\Nstar$ are ``balanced'', the quantity $\sum_s (\deg_F(s))^2$ will be close to $\card{\Nstar} \cdot (\betam)^2/2$ which implies that 
$\mu(H \cup U)$ will be almost $2 \cdot \card{\Nstar}$; however, when degrees of edges in $\Nstar$ are ``unbalanced'', the quantity $\sum_s (\deg_F(s))^2$ \emph{cannot} decrease all the way to  $\card{\Nstar} \cdot (\betam)^2/2$
and thus we can get a higher lower bound on the value of $\mu(H \cup U)$ which breaks the $(2/3)$-approximation. 

To finalize the proof of~\Cref{lem:edcs-early}, we need the following claim for lower bounding $\sum_{s \in S} (\deg_F(s))^2$ in the RHS of~\Cref{lem:mu(HU)-large}, in the cases where 
RHS of~\Cref{lem:mu(H)-large} is small. 

\begin{claim}\label{clm:degS}
	Suppose $\sum_{(u,v) \in \Nstar}\frac{\betam}{\max\set{\deg_F(u)\, ,\, \deg_F(v)}} = (2-\gamma) \cdot \card{\Nstar}$ for  some $\gamma \in [0,1)$; then $\sum_s (\deg_F(s))^2 \geq  \card{\Nstar} \cdot \paren{\frac{(2+\gamma^2-2\gamma) \cdot {\betam}^2}{4+\gamma^2-4\gamma}}$.
\end{claim}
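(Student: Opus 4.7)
The plan is to reduce the claim to a one-variable inequality through a clean reparametrization, then dispatch it by a short chain of standard inequalities. For each $e = (u,v) \in \Nstar$, set $\beta_e := \max\{\deg_F(u), \deg_F(v)\}/\betam$. Since $\deg_F(u) + \deg_F(v) = \betam$ by \Cref{eq:N-betam} and both degrees are nonnegative, $\beta_e \in [1/2, 1]$, and a direct computation gives
\[
\frac{\betam}{\max\{\deg_F(u), \deg_F(v)\}} = \frac{1}{\beta_e}, \qquad \deg_F(u)^2 + \deg_F(v)^2 = \betam^2\bigl(\beta_e^2 + (1-\beta_e)^2\bigr) = \betam^2 \cdot h(\beta_e),
\]
where $h(\beta) := 2\beta^2 - 2\beta + 1$. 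Summing over $\Nstar$, the hypothesis of the claim becomes $\sum_e 1/\beta_e = (2-\gamma)\card{\Nstar}$, and the desired conclusion reduces to $\sum_e h(\beta_e) \ge \card{\Nstar} \cdot h\bigl(1/(2-\gamma)\bigr)$, since a short calculation gives $h\bigl(1/(2-\gamma)\bigr) = \frac{2 - 2\gamma + \gamma^2}{(2-\gamma)^2} = \frac{2 + \gamma^2 - 2\gamma}{4 + \gamma^2 - 4\gamma}$, matching the bound stated in the claim.

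The reduced inequality then follows from three elementary steps. First, applying AM-HM to the positive reals $\{\beta_e\}_{e \in \Nstar}$ yields $\bar\beta := \frac{1}{\card{\Nstar}}\sum_e \beta_e \ge \card{\Nstar}/\sum_e (1/\beta_e) = 1/(2-\gamma)$. Second, $h(\beta) = 2(\beta - 1/2)^2 + 1/2$ is convex, so Jensen's inequality gives $\frac{1}{\card{\Nstar}}\sum_e h(\beta_e) \ge h(\bar\beta)$. Third, $h'(\beta) = 4\beta - 2 \ge 0$ on $[1/2, 1]$, so $h$ is monotonically increasing there, and since $\bar\beta \ge 1/(2-\gamma) \ge 1/2$ we obtain $h(\bar\beta) \ge h\bigl(1/(2-\gamma)\bigr)$. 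Chaining the three steps and multiplying by $\betam^2$ recovers the claim.

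I do not anticipate a genuine obstacle here: the entire content of the argument lies in the reparametrization and in recognizing that the worst-case configuration is the \emph{uniform} one, $\beta_e \equiv 1/(2-\gamma)$, which saturates all three inequalities simultaneously. This is also consistent with the intuition spelled out by the two illustrating examples preceding the claim, where the boundary case $\gamma = 0$ corresponds to fully balanced edge-degrees along $\Nstar$ (each $\beta_e = 1/2$, giving $h(1/2) = 1/2$) and $\gamma \to 1$ corresponds to fully unbalanced edge-degrees (each $\beta_e = 1$, giving $h(1) = 1$), with the claimed quantity interpolating smoothly between the two extremes.
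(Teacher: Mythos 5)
Your proof is correct but takes a genuinely different and cleaner route than the paper's. The paper first \emph{guesses} the extremal configuration (the uniform one, $d_u = \betam/(2-\gamma)$, $d_v = \betam - d_u$ for every $(u,v) \in \Nstar$) and then establishes extremality via a local smoothing argument: it locates a pair of edges whose degree profiles deviate from $d$ in opposite directions, perturbs both toward $d$ by amounts $(\theta_1, \theta_2)$ chosen to preserve the hypothesis constraint, verifies that $\sum_s \deg_F(s)^2$ strictly decreases under this perturbation, and then iterates ``until convergence.'' Your reparametrization by $\beta_e := \max\{\deg_F(u),\deg_F(v)\}/\betam \in [1/2,1]$ replaces this guess-and-smooth argument with a three-step chain of textbook inequalities (AM--HM, Jensen on the convex $h$, monotonicity of $h$ on $[1/2,1]$), which neatly sidesteps the convergence step that the paper leaves unformalized. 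Both arguments identify the same tight configuration $\beta_e \equiv 1/(2-\gamma)$, but yours pins it down in a single global stroke rather than by iterated local improvement, and I would regard it as the more rigorous of the two as written. For what it is worth, one could compress even further by applying Jensen once to $g(t) := h(1/t) = 2/t^2 - 2/t + 1$, which is convex on $[1,2]$ (its second derivative is $4t^{-3}(3/t - 1) > 0$ there), but the chain you give is entirely correct and perfectly serviceable.
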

\begin{proof}
	The intuition behind the proof is that $\sum_s (\deg_F(s))^2$ term is a quadratic sum and is thus minimized in the most ``balanced'' case possible under the given constraints. 
	Formally, we define the following vector of  vertex degrees $d \in \IR^{S}$ (recall that $S$ is the endpoints of matching $\Nstar$):
	\begin{itemize}
		\item For any edge $(u,v) \in \Nstar$, let $d_u := \frac{\betam}{2-\gamma}$ and $d_v := \betam - d_u$. 
	\end{itemize} 
	Notice that these vertex degrees satisfy the first constraint of~\Cref{eq:N-betam} and that
	\[
		\sum_{(u,v) \in \Nstar}\frac{\betam}{\max\set{d_u\, ,\, d_v}} = (2-\gamma) \cdot \card{\Nstar},
	\]
	thus satisfying the assumption of the lemma as well. We now prove that these degrees minimize the quadratic sum, namely, 
	\begin{align}
		\sum_{s \in S} (\deg_F(s))^2 \geq \sum_{s \in S} d_s^2. \label{eq:quad-sum} 
	\end{align}
	Suppose there is an edge $(u_1,v_1)$ where $\deg_F(u_1) > d_{u_1}$ and thus $\deg_F(v_1) < d_{v_1}$ (as both pairs satisfy~\Cref{eq:N-betam}).  
	This also implies that there is another edge $(u_2,v_2)$ where $\deg_F(u_2) < d_{u_2}$ and $\deg_F(v_2) > d_{v_2}$ so that the sum of all degrees satisfies the condition of~\Cref{eq:N-betam}. 
	
	Now consider a sufficiently small parameter $\theta_1 \in (0,1)$ and the new ``more balanced'' degrees 
	\begin{align*}
		&\bd_{u_1} := \deg_F(u_1)-\theta_1 \quad , \quad \bd_{v_1} := \deg_F(v_1)+\theta_1, \\
		&\bd_{u_2} := \deg_F(u_2)+\theta_2 \quad , \quad \bd_{v_2} := \deg_F(v_2)-\theta_2,
	\end{align*}
	where $\theta_2$ is defined using the following equation:
	\[
		\frac{1}{\deg_F(u_1)} + \frac{1}{\deg_F(u_2)} = \frac{1}{\deg_F(u_1)-\theta_1} + \frac{1}{\deg_F(u_2)+\theta_2} = \frac{1}{\bd_{u_1}} + \frac{1}{\bd_{u_2}}.
	\]
	Considering $\deg_F(u_1) > \deg_F(u_2)$, we have that $\theta_1 > \theta_2$. Note that these new degrees (assuming we keep the degrees of all other vertices unchanged) satisfy all the constraints as before. 
	We have, 
	\begin{align*}
		\sum_{s \in \set{u_1,v_1,u_2,v_2}} \hspace{-20pt} \deg_F(s)^2 &= (\bd_{u_1}+\theta_1)^2+(\bd_{v_1}-\theta_1)^2 + (\bd_{u_2}-\theta_2)^2+(\bd_{v_2}+\theta_2)^2 \\
		&\geq 2\theta_1 \cdot (\bd_{u_1}-\bd_{v_1}) - 2\theta_2 \cdot (\bd_{u_2}-\bd_{v_2}) + \bd_{u_1}^2 + \bd_{v_1}^2 + \bd_{u_2}^2 + \bd_{v_2}^2 \tag{by ignoring the postive $\theta_1^2,\theta_2^2$ terms} \\
		&> \bd_{u_1}^2 + \bd_{v_1}^2 + \bd_{u_2}^2 + \bd_{v_2}^2  \tag{as $\bd_{u_1}-\bd_{v_1} > \bd_{u_2}-\bd_{v_2}$ and $\theta_1 > \theta_2$}  
	\end{align*}
	Thus, this change reduces the value of $\sum_{s \in S} \deg_F(s)^2$ term as expected. We can now repeatedly continue this until we converge to the degree distribution $\set{d_s}_{s \in S}$ defined earlier. 
	This proves~\Cref{eq:quad-sum}. By plugging in the bounds for $\set{d_s}_{s \in S}$ in the RHS of~\Cref{eq:quad-sum}, we have that,
	\begin{align*}
		\sum_{s \in S} (\deg_F(s))^2 &\geq \sum_{s \in S} (\deg_F(s))^2 = \sum_{(u,v) \in \Nstar} d_u^2 + d_v^2 = \card{\Nstar} \cdot \paren{\frac{{\betam}^2}{(2-\gamma)^2} + (\betam- \frac{{\betam}}{(2-\gamma)})^2} \\
		&= \card{\Nstar} \cdot \paren{\frac{(2+\gamma^2-2\gamma) \cdot {\betam}^2}{4+\gamma^2-4\gamma}},
	\end{align*}
	as desired. \Qed{clm:degS}
	
\end{proof}


\begin{proof}[Proof of~\Cref{lem:edcs-early}]
	Let us pick $\gamma \in [0,1)$ such that $\sum_{(u,v) \in \Nstar}\frac{\betam}{\max\set{\deg_F(u)\, ,\, \deg_F(v)}} = (2-\gamma) \cdot \card{\Nstar}$ (as the max-term is at least $\betam/2$, such a $\gamma$ always exist). 
	By plugging in the bound of~\Cref{clm:degS} in~\Cref{lem:mu(HU)-large}, we have that, 
	\begin{align*}
		\mu(H \cup U) &\geq \frac{\card{\Nstar}^2 \cdot {\betam}^2}{\card{\Nstar} \cdot \betam \cdot \betap -\card{\Nstar} \cdot \paren{\frac{(2+\gamma^2-2\gamma) \cdot {\betam}^2}{4+\gamma^2-4\gamma}}} \\
		&\geq (1-2\lambda) \cdot \card{\Nstar} \cdot \frac{1}{1 -\paren{\frac{(2+\gamma^2-2\gamma)}{4+\gamma^2-4\gamma}}} \tag{as $\betam \geq (1-\lambda)\betap$} \\
		&= (1-2\lambda) \cdot \card{\Nstar} \cdot \frac{4+\gamma^2-4\gamma}{2-2\gamma} = (1-2\lambda) \cdot \card{\Nstar} \cdot (2+\frac{\gamma^2}{2-2\gamma}).
	\end{align*}
	Considering $\card{\Nstar} \geq \mu(G) - \mu(H \cup U)$ by~\Cref{clm:easy-cor}, we obtain that 
	\begin{align*}
		\mu(H \cup U) &\geq (1-2\lambda) \cdot \mu(G) \cdot \paren{\frac23 + \frac{\gamma^2}{18-18\gamma+3\gamma^2}} \geq (1-2\lambda) \cdot \mu(G) \cdot \paren{\frac23 + \frac{\gamma^2}{18}}.
	\end{align*}
	Now if for the parameter $\delta$ in~\Cref{lem:edcs-early}, we already have $\gamma \geq \delta$, we will obtain the second condition. 
	Further, without loss of generality, we can assume that $\card{\Nstar} \geq (\frac{1}{3}-\frac{\delta}{3}) \cdot \mu(G)$ 
	as otherwise $\mu(H \cup \MstarU) \geq (\frac{2}{3}+\delta) \cdot \mu(G)$ by~\Cref{clm:easy-cor} which is stronger than the second condition of~\Cref{lem:edcs-early}.

	Suppose  $\gamma < \delta$ and $\card{\Nstar} \geq (\frac{1}{3}-\frac{\delta}{3}) \cdot \mu(G)$ then. 
	In this case, by the definition of $\gamma$ and~\Cref{lem:mu(H)-large},  
	\begin{align*}
		\mu(H) &\geq \frac{1}{1+4\lambda} \cdot (2-\gamma) \cdot \card{\Nstar} \geq \frac{1}{1+4\lambda} \cdot (2-\delta) \cdot (\frac{1}{3}-\frac{\delta}{3}) \cdot \mu(G) \geq (1-4\lambda) \cdot \paren{\frac{2}{3}-\delta} \cdot \mu(G), 
	\end{align*}
	thus satisfying the first condition. This concludes the proof. \Qed{lem:edcs-early} 
	
\end{proof}

\subsection{General Graphs}

We now extend the results of~\Cref{lem:edcs-early} to general (non-bipartite) graphs following the probabilistic method technique of~\cite{AssadiB19} for the original EDCS. 

\begin{corollary}\label{cor:edcs-early-general}
	Let $\lambda \in (0,1/2)$ and $\betam \leq \betap$ be such that $\betap \geq \frac{64}{\lambda^2}\cdot\log{(1/\lambda)}$ and $\betam \geq (1-\lambda) \betap$.  
	Suppose $G = (V,E)$ is any graph (not necessarily bipartite) and: 
	\begin{enumerate}[label=$(\roman*)$]
		\item $H$ is a subgraph of $G$ where for all $(u,v) \in H$: $\deg_{H}(u) + \deg_{H}(v) \leq \betap$; and
		\item $U$ is the set of all edges $(u,v)$ in $G \setminus H$ such that $\deg_{H}(u) + \deg_H(v) < \betam$. 
	\end{enumerate}
	Then, for any parameter $\delta \in (0,1)$, either: 
	\[
	  \mu(H) \geq (1-8\lambda) \cdot (\frac23-\delta) \cdot \mu(G)	\quad \text{or} \quad \mu(H \cup U) \geq (1-4\lambda) \cdot \paren{\frac23 + \frac{\delta^2}{18}} \cdot \mu(G). 
	\]
\end{corollary}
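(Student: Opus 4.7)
The plan is to reduce \Cref{cor:edcs-early-general} to the bipartite \Cref{lem:edcs-early} by the probabilistic-method approach of \cite{AssadiB19}. I would exhibit a random bipartition $V = V_L \cup V_R$ and argue that the induced bipartite restrictions of $G$, $H$, and $U$ satisfy the hypotheses of \Cref{lem:edcs-early} with only a mild $(2\times)$ loss in the $\lambda$-parameter. Concretely, fix a maximum matching $\Mstar$ of $G$ and construct the bipartition as follows: orient each edge of $\Mstar$ independently and uniformly at random, sending its tail to $V_L$ and its head to $V_R$, and send each unmatched vertex to $V_L$ or $V_R$ uniformly and independently. Every edge of $\Mstar$ then crosses the bipartition, so $\Mstar \subseteq \tG := G[V_L, V_R]$, giving the key deterministic identity $\mu(\tG) = \mu(G)$.

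Next, I would control the degrees in $\tH := H[V_L, V_R]$ via Chernoff plus the Lov\'asz Local Lemma (\Cref{prop:lll}). For any vertex $v$, conditioning on the side of $v$, the random variable $\deg_{\tH}(v)$ equals (possibly) $1$ plus a sum of independent $\mathrm{Bernoulli}(1/2)$ indicators---one per neighbor of $v$ in $H$ other than $v$'s $\Mstar$-partner---so by Chernoff the event $|\deg_{\tH}(v) - \deg_H(v)/2| > \lambda\betap/8$ has probability $\poly(\lambda)$ under $\betap \geq 64 \lambda^{-2}\log(1/\lambda)$. This event depends only on the random choices governing the sides of $\{v\} \cup N_H(v)$, and two such events are dependent only if these neighborhoods intersect or contain $\Mstar$-partners, which happens for at most $O(\betap^2)$ other events since $\Delta_H \leq \betap$. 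LLL then produces a single bipartition on which the per-vertex concentration holds for every $v$ simultaneously; summing over the endpoints of an edge yields the per-edge bound $|(\deg_{\tH}(u)+\deg_{\tH}(v)) - (\deg_H(u)+\deg_H(v))/2| \leq \lambda\betap/4$.

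Finally, on such a bipartition, set $\tilde\betap := \betap/2 + \lambda\betap/4$ and $\tilde\betam := \betam/2 - \lambda\betap/4$, and let $\tU := \{(u,v) \in \tG \setminus \tH : \deg_{\tH}(u)+\deg_{\tH}(v) < \tilde\betam\}$. The concentration bound forces $\deg_{\tH}(u)+\deg_{\tH}(v) \leq \tilde\betap$ for every $(u,v) \in \tH$ and $\deg_H(u)+\deg_H(v) < \betam$ for every $(u,v) \in \tU$; since $u,v$ lie on opposite sides for such edges, $(u,v) \notin H$ as well, so $\tU \subseteq U$ and therefore $\tH \cup \tU \subseteq H \cup U$. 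A short calculation using $\betam \geq (1-\lambda)\betap$ verifies $\tilde\betam \geq (1-2\lambda)\tilde\betap$ and $\tilde\betap \geq 10/(2\lambda)$, so \Cref{lem:edcs-early} applies to $(\tG, \tH, \tU)$ with effective parameter $\tilde\lambda := 2\lambda$ and the same $\delta$. The resulting dichotomy between $\mu(\tH) \geq (1-8\lambda)(\frac{2}{3}-\delta)\mu(\tG)$ and $\mu(\tH \cup \tU) \geq (1-4\lambda)(\frac{2}{3}+\frac{\delta^2}{18})\mu(\tG)$ transports to the claimed dichotomy on $(H, H \cup U, G)$ using $\mu(\tG) = \mu(G)$ and subgraph monotonicity $\mu(H) \geq \mu(\tH)$, $\mu(H \cup U) \geq \mu(\tH \cup \tU)$. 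The main obstacle is the LLL step: the concentration window $\lambda\betap/8$ must be small enough to let $\tilde\lambda = 2\lambda$ close the arithmetic inside \Cref{lem:edcs-early}, while the Chernoff failure probability must beat the $O(\betap^2)$ dependency degree under only $\betap \geq 64 \lambda^{-2}\log(1/\lambda)$, which requires a careful choice of slack; everything else is parameter bookkeeping that closely follows the bipartite proof.
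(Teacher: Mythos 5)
Your proposal is correct and follows essentially the same route as the paper's proof: fix a maximum matching $\Mstar$, bipartition $V$ at random so that $\Mstar$ crosses the cut, control $\deg_{\tH}$ per vertex via Chernoff plus the Lov\'asz Local Lemma, and then invoke the bipartite \Cref{lem:edcs-early} with a constant-factor worse $\lambda$. The only cosmetic difference is that the paper sets $\tU := U \cap \tG$ and checks it contains all low-edge-degree edges of $\tG \setminus \tH$, whereas you define $\tU$ directly as that low-degree set and observe $\tU \subseteq U$; these are interchangeable since \Cref{lem:edcs-early} only ever uses that $U$ \emph{contains} the low-edge-degree edges, and your spelled-out thresholds $\tilde\betap = \betap/2 + \lambda\betap/4$, $\tilde\betam = \betam/2 - \lambda\betap/4$ with $\tilde\lambda = 2\lambda$ do close the arithmetic exactly as needed to land on the $(1-8\lambda)$ / $(1-4\lambda)$ constants in the statement.
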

\begin{proof}
	The proof is based on the probabilistic method and Lov\'asz Local Lemma. Let $\Mstar$ be a maximum matching of $G$. Consider the following randomly chosen bipartite subgraph $\tG=(L,R,\tE)$ of $G$ with respect to $\Mstar$, where
	$L \cup R = V$: 
	\begin{itemize}
		\item For any edge $(u,v) \in \Mstar$, with probability $1/2$, $u$ belongs to $L$ and $v$ belongs to $R$, and with probability $1/2$, the opposite (the choices between different edges of $\Mstar$ are independent). 
		\item For any vertex $w \in V$ not matched by $\Mstar$, we assign $w$ to $L$ or $R$ uniformly at random (again, the choices are independent across vertices). 
		\item The set of edges in $\tE$ are all edges in $E$ with one end point in $L$ and the other one in $R$. 
	\end{itemize}
	Note that by the definition of $\tG$, every edge of $\Mstar$ belongs to $\tG$ as well and thus $\mu(\tG) = \mu(G)$. 
	Define $\tH := H \cap \tG$ and $\tU := U \cap \tG$. We prove that with non-zero probability: 
	\begin{enumerate}[label=$(\roman*)$]
		\item\label{prop1} For all $(u,v) \in \tH$: $\deg_{\tH}(u) + \deg_{\tH}(v) \leq (1+\lambda) \cdot \betap/2$; 
		\item\label{prop2} $\tU$ is the set of all edges $(u,v)$ in $\tG \setminus \tH$ where $\deg_{\tH}(u) + \deg_{\tH}(v) < (1-\lambda) \betam/2$; 
	\end{enumerate}
	
	Before proving these parts, let us mention how they imply~\Cref{cor:edcs-early-general}. Consider the subgraph $\tG$ of $G$ and the sets $\tH$ and $\tU$. 
	Since $\tG$ is bipartite and $\tH$ and $\tU$ satisfy the requirements of~\Cref{lem:edcs-early} for parameters $\tilde{\betap} = (1+\lambda) \cdot \betap/2$, $\tilde{\betam} = (1-\lambda) \betam/2$, and $\tilde{\lambda} = \lambda/2$, we get 
	either
	\[
		 \mu(\tH) \geq (1-8\lambda) \cdot (\frac23-\delta) \cdot \mu(\tG)	\quad \text{or} \quad \mu(\tH \cup \tU) \geq (1-4\lambda) \cdot \paren{\frac23 + \frac{\delta^2}{18}} \cdot \mu(\tG). 
	\] 
	As $\tH \subseteq H$, $\tU \subseteq U$, and $\mu(\tG) = \mu(G)$, we obtain the final result (notice that for this argument, we only need existence of $\tH$ and $\tU$ and not a way of finding them; as such, the 
	non-zero probability guarantee completely suffices for us). 
	
	To prove either property, we need the following auxiliary claim. 
	\begin{claim}\label{clm:degrees}
		With non-zero probability, for every vertex $v \in V$, 
		$
			\card{\deg_{\tH}{(v)} - \deg_H(v)/2} < \frac{\lambda}{4} \cdot \betam.
		$
	\end{claim}
	\begin{proof}
		Fix any vertex $v \in V$ and let $N_H(v):= \set{u_1,\ldots,u_{\deg_H(v)}}$ be the neighbors of $v$ in $H$. Let us assume $v$ is assigned to $L$ in $\tG$ (the other case is symmetric). Hence, degree of $v$ in $\tH$ is 
		exactly equal to the number of vertices in $N_H(v)$ that are chosen in $R$. By construction of $\tG$, 
		\[
		\Ex\bracket{\deg_{\tH}(v)} = \begin{cases} (\deg_H(v)+1)/2 & \quad \text{if $v$ is incident on $\Mstar \cap H$} \\ \deg_H(v)/2 & \quad \text{otherwise} \end{cases}. 
		\] 
		Also, if two vertices $u_i,u_j$ in $N_H(v)$ are matched by $\Mstar$, then
		exactly one of them will be a neighbor to $v$ in $\tH$; otherwise the choices are independent. 
		Thus, by Chernoff bound (\Cref{prop:chernoff}), 
		\begin{align*}
			\Pr\paren{\card{\deg_{\tH}{(v)} - \deg_H(v)/2} \geq \frac{\lambda}{4} \cdot \betam} \leq 2\exp\paren{-\frac{\lambda^2 \cdot \betam^2}{8\betam}} \leq 2\exp\paren{-4\log{\betap}} \leq \frac{2}{\betap^4} \tag{as $\betap \geq 64\lambda^{-2}\log{(1/\lambda)}$ and $\betam \geq (1-\lambda) \betap$, we have $\betam \geq 32\lambda^{-2}\cdot\log{\betap}$}. 
		\end{align*}
		For every vertex $v \in V$, define: 
		\begin{itemize}
			\item event $\event_v$: the event that $\card{\deg_{\tH}({v}) - d_v/2} \geq \frac\lambda4 \cdot \betam$. 
		\end{itemize}
		The event $\event_v$ depends only on the choice of vertices in $N_H(v)$ and hence can depend on at most $\betap^2$ other events $\event_u$ 
		for vertices $u$ which are neighbors to $N_H(v)$. As such, we can apply Lovasz Local Lemma (\Cref{prop:lll}) to argue that 
		with a non-zero probability, $\cap_{v \in V} \overline{\event_v}$ happens, which concludes the proof. \Qed{clm:degrees}
		
	\end{proof}
	
	In the following, we condition on the non-zero probability event of~\Cref{clm:degrees}. 

	\paragraph{Proof of property \ref{prop1}.} For any edge $(u,v) \in \tH$, we have, 
	\[
		\deg_{\tH}{(u)} + \deg_{\tH}{(v)} \leq \frac{1}{2} \cdot \paren{\deg_{H}{(u)} + \deg_{H}{(v)}} + \frac\lambda2 \cdot \betam \leq \betap/2 + \frac\lambda2 \cdot \betam \leq (1+\lambda)\cdot \betap/2, 
	\]
	where the second to last inequality is because $(u,v) \in H$. As such all edge $(u,v) \in \tH$ have the desired bound on edge-degree. 
	
	\paragraph{Proof of property \ref{prop2}.} For any edge $(u,v) \in \tG \setminus \tH$ with $\deg_{\tH}(u) + \deg_{\tH}(v) < (1-\lambda) \cdot \betam/2$, 
	\[
		\deg_{H}{(u)} + \deg_{H}{(v)} \leq 2 \cdot \paren{\deg_{\tH}{(u)} + \deg_{\tH}{(v)}} + \frac\lambda2 \cdot \betam < (1-\lambda) \cdot \betam + \frac\lambda2 \cdot \betam < \betam.
	\]
	This implies that this edge belongs to $U$ and thus since $\tU := \tG \cap U$, it also belongs to $\tU$. As a result, any edge with ``low'' edge-degree belongs to $U$. 
		
	This concludes the proof. \Qed{cor:edcs-early-general}
	
\end{proof}

\section{An Improved Algorithm via Augmentation}\label{sec:augmentation}

In this section, we show that the maximum matching of the subgraph $H$ constructed in the early part of the stream of \Cref{alg:bernstein} can be augmented well via the remaining edges. Combined with our \Cref{cor:edcs-early-general} of \Cref{sec:earlyon}, we complete in this section the proof of \Cref{thm:main}. Namely, we show that for some parameter $\eps_0 > 0$, there is a single-pass random-order streaming algorithm (formalized as~\Cref{alg:beats2/3}) that obtains a $(\frac{2}{3} + \eps_0)$-approximate maximum matching of $G$ using $O(n\log n)$ space with high probability of $1-1/\poly(n)$.

\subsection{The Algorithm} 

Our starting point is~\Cref{alg:bernstein}. Recall that this algorithm stores two subgraphs $H$ and $U$ of $G$ of size $O(n \log n)$. Subgraph $H$ is constructed early on, after merely observing $\epsilon m$ edges of the stream. In addition to $H$ and $U$, here we store an additional subset of edges that we use to augment a matching of $H$ with. Particularly, let $M_H$ be an arbitrary maximum matching of $H$. Having  matching $M_H$ early on, in our algorithm we augment $M_H$ using the edges that arrive in the rest of the stream (i.e., \phaset) in parallel to storing $U$. The augmenting paths that we find may be of size up to \emph{five}. This is crucial since we may not have enough augmenting paths of length smaller than five to go beyond $(2/3)$-approximation. Now by plugging our bound of \Cref{cor:edcs-early-general}, it can be shown that either $H \cup U$ includes our desired approximation of strictly better that $2/3$, or $M_H$ is almost a $(2/3)$-approximate matching which coupled with the augmenting paths that we find for it in \phaset leads to our better-than-$(2/3)$-approximation.

To find these augmenting paths, we divide the $(1-\epsilon)m$ edges of Phase II into Phase II.A and Phase II.B. To do this, we first draw a random variable $\tau \sim \BB((1-\epsilon)m, \gamma)$. Phase II.A will then proceed on the edges that arrive up to the
 $\tau$-th edge of Phase II and Phase II.B proceeds on the rest of the edges. Drawing random variable $\tau$ (instead of having a fixed threshold) is particularly useful in the analysis: Conditioned on the edges that are to arrive in Phase II (but not their ordering), each edge now belongs to Phase II.A {\em independently} with probability $\gamma$ and to Phase II.B otherwise. Note that with a fixed threshold, we do not get this independence.

\bigskip

\begin{figure}[h!]
  \centering
  \includegraphics[scale=0.9]{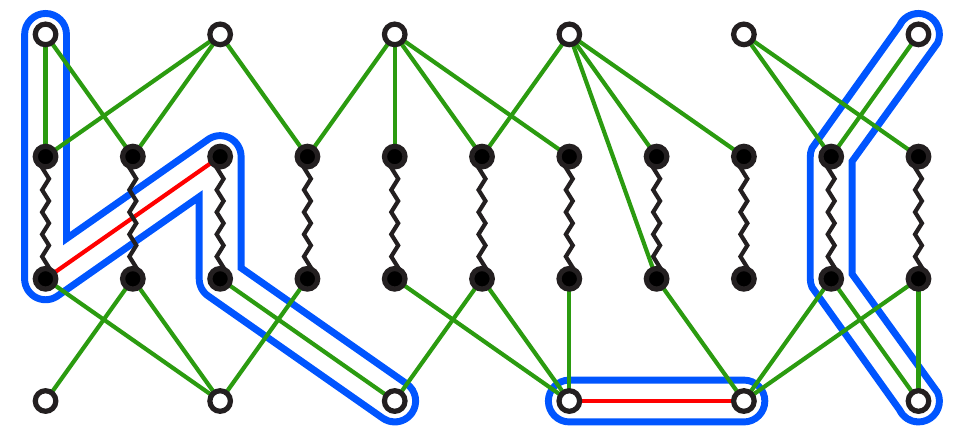}
  \caption{An example of an execution of~\Cref{alg:beats2/3}. Here the black zig-zagged edges are those in matching $M_H$ which is fixed by the end of Phase I and we would like to augment it. The black nodes are those matched by $M_H$ and the white ones are those left unmatched by $M_H$. The edges between white and black nodes (colored green) are the edges in $T$. Each black node has at most two edges in $T$ and the green nodes can have up to $b$. The red edges are those that arrive in Phase II.B. Three augmenting paths of length one, three, and five that are discoverable by the algorithm are also highlighted in the figure.} 
    \label{fig:augmentingpaths}
\end{figure}

\bigskip

For Phase II.A, let us define $G_H$ to be the subgraph of $G$ whose edges arrive in Phase II.A and have exactly one endpoint matched by $M_H$. Note that $G_H$ is bipartite (even though $G$ may not be) with one partition corresponding
 to vertices $V(M_H)$ and another to $V \setminus V(M_H)$. In Phase~II.A, we only consider the edges of $G_H$ and greedily construct a maximal $(2, b)$-matching $T$ of $G_H$ (for some constant $b \geq 2$). It is the vertices in partition $V(M_H)$ of $G_H$ that have maximum degree $2$ in $T$ and those in the other partition can have degree up to $b$. In our analysis, we show that the edges of $T$ can be used as the two endpoint edges of many augmenting paths of length three or five for $M_H$ (see \Cref{fig:augmentingpaths}). 

In Phase II.B, we first let $M \gets M_H$ and upon arrival of each edge $e$, we iteratively augment $M$ via length-up-to-five augmenting paths using the edges in $T \cup \{e\}$ until no such path is left. In our analysis, we use the edges of Phase II.B either as the middle edge of length-five augmenting paths or as the single edge of the length-one augmenting paths the algorithm may find (see \Cref{fig:augmentingpaths}). 

At the end of the stream, we return a maximum matching of $M \cup H \cup U$. The algorithm outlined above is formalized as  Algorithm~\ref{alg:beats2/3}.

\begin{tboxalgh}{A random-order streaming matching algorithm with approximation ratio $> 2/3$.}
	\label{alg:beats2/3}
	\textbf{Parameters:} $\gamma = 2/3$, $b = 500$, and a sufficiently small constant $\epsilon < 0.01$ to be fixed later.
	
	\begin{enumerate}[topsep=5pt, leftmargin=20pt, label={(\arabic*)}]
		\item In Phase I of the algorithm, which consists of the first $\epsilon m$ edges of the stream, we construct a subgraph $H$ of $G$ as in Phase~I of Algorithm~\ref{alg:bernstein}. 
		 At the end of Phase~I, we fix an arbitrary maximum matching $M_H$ of $H$.
		\item In Phase II, which includes all the edges that arrive after Phase II, we store subgraph $U$ using Phase II of Algorithm~\ref{alg:bernstein}. In addition, we store another subset of edges that we use to augment $M_H$. These edges are constructed in two sub-phases Phase II.A and Phase II.B.
		\item Draw random variable $\tau$ from the Binomial distribution $\BB((1 - \epsilon)m, \gamma)$. Note that this can be done in $O(m)$ time and $O(1)$ space as we only need a counter to count the successes.
		\item Phase II.A starts after Phase I and ends upon arrival of the $\tau$'th edge of Phase II.
		\begin{enumerate}
			\item Let $G_{H}(V_H, U_H, E_H)$ be a bipartite subgraph of $G$ where $V_H := V(M_H)$ is the set of vertices matched in $M_H$,  $U_H := V \setminus V(M_H)$ is the set of vertices left unmatched in $M_H$, and $E_H$ is the edges of $G$ between $V_H$ and $U_H$ that arrive in Phase II.A.
			\item We initialize $T \gets \emptyset$ and upon arrival of an edge $e=(u, v)$ of $G_H$ with $u \in U_H$ and $v \in V_H$, if $\deg_T(v) < 2$ and $\deg_T(u) < b$ we add $e$ to $T$. That is, $T$ is a maximal $(2, b)$-matching of $G_H$ which requires $O(n b)$ space to store.
		\end{enumerate}
		\item Phase II.B starts after Phase II.A and continues to the end of the stream:
		\begin{enumerate}
			\item $M \gets M_H$.
			 Upon arrival of each edge $e$ in Phase II.B, we iteratively take an arbitrary augmenting path $P$ for $M$ of \underline{length up to five} using the edges in $M \cup T \cup \{e\}$ and let $M \gets M \oplus P$. We repeat this process until no more augmenting paths of length up to five exist in $M \cup T \cup \{e\}$; we then continue to the next edge of the stream in Phase II.B.
		\end{enumerate}
		\item Finally, we return a maximum matching of $M \cup H \cup U$.
	\end{enumerate}
\end{tboxalgh}

\subsection*{Space Complexity}

We know already from \Cref{lem:phaseI} that $|H \cup U| = O(n \log(n) \cdot \poly(1/\epsilon)) = O(n \log n)$ for constant $\epsilon$ with high probability. In addition, subgraph $T$ that we store in the memory has maximum degree $b = O(1)$ and thus requires $O(n)$ space to store. Other than these, we only store a matching $M$ and augment it only using the edges stored in memory. Hence, overall, the space complexity of the algorithm is $O(n \log n)$ with high probability.

\subsection*{Analysis of Approximation Ratio}

Let $M^\star$ be an arbitrary maximum matching of $G_{\geq \epsilon m}$. Fixing an arbitrary maximum matching of $G$, each of its edges appears in $G_{\geq \epsilon m}$ with probability $(1-\epsilon)$, thus $\E|M^\star| \geq (1-\epsilon)\mu(G)$. Now so long as $\mu(G) \geq 20 \log(n) \epsilon^{-2}$ and $\epsilon < 1/2$ (which we can assume to hold as discussed in \Cref{sec:preliminaries}), we can prove a high probability lower bound on the size of $M^\star$ via a Chernoff bound on negatively associated random variables. See, e.g., \cite[Lemma~2.2]{Bernstein20} for the proof of the following:

\begin{observation}\label{obs:whpMstarlarge}
	If $\mu(G) \geq 20 \log(n) \epsilon^{-2}$ and $\epsilon < 1/2$, then $\Pr[|M^\star| \geq (1-2\epsilon) \mu(G)] \geq 1-n^{-5}$.
\end{observation}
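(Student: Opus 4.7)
The plan is a direct Chernoff-style argument, using a fixed maximum matching of $G$ as a witness lower bound for $|M^\star|$. Fix an arbitrary maximum matching $M_{\mathrm{opt}}$ of $G$ with $|M_{\mathrm{opt}}| = \mu(G)$. For each edge $e \in M_{\mathrm{opt}}$, let $X_e \in \{0,1\}$ be the indicator that $e$ arrives among the last $(1-\epsilon)m$ edges of the stream, i.e.\ that $e \in G_{\geq \epsilon m}$; set $X := \sum_{e \in M_{\mathrm{opt}}} X_e$. Under the uniformly random arrival order, each specific edge $e$ lies in the last $(1-\epsilon)m$ positions with probability exactly $1-\epsilon$, so $\Ex[X] = (1-\epsilon)\mu(G)$. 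Since the edges of $M_{\mathrm{opt}}$ that survive in $G_{\geq \epsilon m}$ form a valid matching of $G_{\geq \epsilon m}$, we have the deterministic inequality $|M^\star| \geq X$, and so it suffices to show $X \geq (1-2\epsilon)\mu(G)$ with probability at least $1 - n^{-5}$.

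The next step is a concentration bound on $X$. The variables $\{X_e\}_{e \in M_{\mathrm{opt}}}$ are not independent (sampling without replacement from the stream positions), but they are negatively associated; this is precisely the regime in which the lower-tail Chernoff bound applies, as cited in the hint to Lemma~2.2 of~\cite{Bernstein20}. Thus
\[
\Pr\!\left[X \leq (1-\delta)(1-\epsilon)\mu(G)\right] \leq \exp\!\left(-\delta^2 (1-\epsilon)\mu(G) / 2\right)
\]
for any $\delta \in (0,1)$. Setting $\delta := \epsilon/(1-\epsilon)$ makes $(1-\delta)(1-\epsilon) = 1-2\epsilon$ exactly, and since $\epsilon < 1/2$ we have $\delta \geq \epsilon$, so the exponent is at least $\epsilon^2 (1-\epsilon) \mu(G)/2 \geq \epsilon^2 \mu(G)/4$.

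Finally I plug in the hypothesis $\mu(G) \geq 20 \log(n) \epsilon^{-2}$ to get $\epsilon^2 \mu(G)/4 \geq 5 \log n$, so the failure probability is at most $\exp(-5 \log n) = n^{-5}$, establishing $\Pr[|M^\star| \geq (1-2\epsilon)\mu(G)] \geq 1-n^{-5}$ as claimed.

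The only non-routine point is the appeal to negative association, since the $X_e$ are correlated through the permutation; but this is a standard property of indicators of sampling without replacement and is exactly what Bernstein's Lemma~2.2 is used for, so I expect no real obstacle and the whole argument to fit in a few lines.
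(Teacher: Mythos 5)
Your proof is correct and is essentially the same argument the paper gives: fix a maximum matching of $G$, observe each of its edges lands in $G_{\geq \epsilon m}$ with probability $1-\epsilon$ so the expected number surviving is $(1-\epsilon)\mu(G)$, note these indicators are negatively associated (sampling without replacement), and apply the lower-tail Chernoff bound. The paper leaves the calculation implicit and defers to~\cite[Lemma~2.2]{Bernstein20}, while you carry out the arithmetic (choosing $\delta = \epsilon/(1-\epsilon)$ so $(1-\delta)(1-\epsilon) = 1-2\epsilon$, then using $\epsilon < 1/2$ to bound the exponent by $\epsilon^2\mu(G)/4 \geq 5\log n$); both routes are the same.
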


From now on, we condition on $G_{< \epsilon m}$ which fixes subgraph $H$ and matching $M^\star$. We only assume that $G_{< \epsilon m}$ is chosen such that the high probability event of \Cref{obs:whpMstarlarge} holds.

\begin{assumption}\label{ass:Mstarlarge}
	$|M^\star| \geq (1-2\epsilon)\mu(G)$.
\end{assumption}

Other than \Cref{ass:Mstarlarge}, we do not need any other assumption on how $G_{< \epsilon m}$ is chosen for the rest of the analysis of the approximation ratio.\footnote{We note, however, that the randomization in $G_{< \epsilon m}$ is crucial for arguing that the algorithm uses $O(n \log n)$ space. Here, however, we are only analyzing the approximation ratio.} By conditioning on the outcome of \phaseo, the only randomization that will be left, is the order with which the edges of $G_{\geq \epsilon m}$ arrive in the stream. For brevity, we do not explicitly write the conditioning on $G_{< \epsilon m}$ for the rest of the section, but it should be noted that \textbf{all random statements are conditioned on the outcome of Phase I}.

Let $\mc{P}$ be the set of all augmenting paths of $M_H$ in $S := M^\star \Delta M_H$ with length at most five. Note that since we regard $H$ (and thus $M_H$) as given, the set $\mc{P}$ is  deterministic (as it only depends on $M_H$ and $M^\star$ and not on the order of edges in $G_{\geq \epsilon m}$).

\begin{observation}\label{obs:Plarge}
	We have $|\mc{P}| \geq |M^\star| - \frac{4}{3} \cdot \mu(H)$.
\end{observation}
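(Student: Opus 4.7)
The plan is a standard symmetric-difference count, applied in two steps: first lower bound the \emph{total} number of augmenting paths for $M_H$ in $S$ (of any length), and then upper bound the number of \emph{long} ones (length $\ge 7$) by charging them to edges of $M_H$.

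First I would use the structure of $S = M^\star \Delta M_H$: since both $M^\star$ and $M_H$ are matchings, every vertex has degree at most two in $S$, so $S$ decomposes into vertex-disjoint paths and even cycles whose edges alternate between $M^\star$ and $M_H$. For each connected component, the quantity $|\text{edges from } M^\star| - |\text{edges from } M_H|$ is $+1$ if the component is an odd-length path starting and ending with $M^\star$ edges (i.e., an augmenting path for $M_H$), $-1$ if it is an augmenting path for $M^\star$, and $0$ otherwise. Summing over components and using $|M^\star|-|M_H| = |M^\star \cap S| - |M_H \cap S|$, this gives
\[
|A| \;-\; |A'| \;=\; |M^\star| - |M_H|,
\]
where $A$ is the set of all augmenting paths for $M_H$ in $S$ and $A'$ is the set of augmenting paths for $M^\star$. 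Since $|A'| \geq 0$ and $|M_H| = \mu(H)$, we obtain $|A| \geq |M^\star| - \mu(H)$.

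Next I would bound the number of long augmenting paths. Let $A_{\leq 5} = \mc{P}$ be the augmenting paths in $A$ of length at most five and $A_{\geq 7} = A \setminus A_{\leq 5}$ the remaining ones. Any augmenting path for $M_H$ of length $2\ell+1$ uses exactly $\ell$ edges from $M_H$, so every path in $A_{\geq 7}$ contributes at least three edges to $M_H \cap S$. These contributions are disjoint because the components of $S$ are vertex-disjoint, hence
\[
3\,|A_{\geq 7}| \;\leq\; |M_H \cap S| \;\leq\; |M_H| \;=\; \mu(H),
\]
so $|A_{\geq 7}| \leq \mu(H)/3$.

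Combining the two bounds,
\[
|\mc{P}| \;=\; |A| - |A_{\geq 7}| \;\geq\; \bigl(|M^\star| - \mu(H)\bigr) - \frac{\mu(H)}{3} \;=\; |M^\star| - \frac{4}{3}\,\mu(H),
\]
as claimed. There is no real obstacle here; the only thing to be careful about is being explicit that the components of $S$ are vertex-disjoint so that the charging of $M_H$-edges to long augmenting paths does not double count, and that the equality $|M^\star|-|M_H| = |A|-|A'|$ (and not just an inequality) comes from both $M^\star$ and $M_H$ being matchings, so edges outside $S$ lie in $M^\star \cap M_H$ and cancel on both sides.
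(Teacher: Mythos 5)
Your proposal is correct and takes essentially the same approach as the paper: lower bound the total number of augmenting paths for $M_H$ in $S$ by $|M^\star| - |M_H|$, upper bound the number of length-$\geq 7$ paths by $|M_H|/3$ via counting $M_H$-edges, and subtract. You merely spell out in more detail (via the symmetric-difference decomposition) the paper's terse step that "there must be at least $|M^\star| - |M_H|$ augmenting paths for $M_H$ in $S$."
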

\begin{proof}
	Let $\mc{P}'$ denote the set of augmenting paths of length larger than $5$ in $S$. Note that there must be at least $|M^\star| - |M_H|$ augmenting paths for $M_H$ in $S$, hence $|\mc{P}| + |\mc{P}'| \geq |M^\star| - |M_H|$. Moreover, any augmenting path in $\mc{P}'$ must have at least 3 edges of $M_H$; thus $|\mc{P}'| \leq |M_H|/3$. Combination of the two bounds gives $|\mc{P}| \geq |M^\star| - |M_H| - \frac{1}{3} |M_H| = |M^\star| - \frac{4}{3} |M_H| = |M^\star| - \frac{4}{3} \mu(H)$.
\end{proof}

\newcommand{\GA}{\ensuremath{G_{\text{II.A}}}}
\newcommand{\GB}{\ensuremath{G_{\text{II.B}}}}

We use $\GA$ to denote the subgraph of $G$ that arrives in Phase~II.A and use $\GB$ to denote the subgraph of $G$ that arrives in Phase~II.B.

\begin{definition}\label{def:lucky}
	We say an augmenting path $P \in \mc{P}$ is ``lucky'' under the following conditions:
	\begin{enumerate}
		\item If $P = \langle e_1 \rangle$ then $e_1 \in \GB$.
		\item If $P = \langle e_1, e_2, e_3 \rangle$ then $e_1, e_3 \in \GA$.
		\item If $P = \langle e_1, e_2, e_3, e_4, e_5 \rangle$ then $e_1, e_5 \in \GA$ and $e_3 \in \GB$.
	\end{enumerate}
	We denote the set of lucky augmenting paths in $\mc{P}$ by $\mc{P}_L$.
\end{definition}

Note that the subset $\mc{P}_L$ of $\mc{P}$ is now random since it depends on the order of edges in $G_{\geq \epsilon m}$. \Cref{lem:whpluckyap} below proves that a relatively large fraction of augmenting paths in $\mc{P}$ will turn out to be lucky with high probability. The proof is straightforward and is given in \Cref{sec:expluckyap}.

\begin{lemma}\label{lem:whpluckyap}
	It holds that
	$
	\Pr\Big( |\mc{P}_L| \leq \gamma^2(1-\gamma)|\mc{P}| - \sqrt{15 \mu(G) \ln n} \Big) \leq 2n^{-5}.
	$
\end{lemma}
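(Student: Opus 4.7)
The plan is to view $|\mc{P}_L|$ as a sum of independent Bernoulli indicators and apply the additive Chernoff bound of \Cref{prop:chernoff}.

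First I would establish independence. Since $M^\star$ and $M_H$ are both matchings, their symmetric difference $S = M^\star \Delta M_H$ decomposes into vertex-disjoint paths and cycles, so the augmenting paths in $\mc{P}$ are pairwise vertex-disjoint and hence edge-disjoint. For any $P \in \mc{P}$, its $M_H$-edges lie in Phase I (they are frozen in $H \subseteq G_{<\eps m}$), while its $M^\star$-edges at odd positions lie in $G_{\geq \eps m}$. By the property of $\tau \sim \BB((1-\eps)m, \gamma)$ highlighted in the algorithm's description, conditioned on $G_{<\eps m}$ (which fixes the set of Phase~II edges) each Phase~II edge is placed into $\GA$ independently with probability $\gamma$ and into $\GB$ with probability $1-\gamma$. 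Combined with edge-disjointness of the paths in $\mc{P}$, this yields that the indicators $X_P := \mathbf{1}[P \in \mc{P}_L]$ for $P \in \mc{P}$ are mutually independent.

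Next I would compute $\E|\mc{P}_L|$. For $P \in \mc{P}$ of length $\ell \in \{1,3,5\}$, \Cref{def:lucky} together with the independence just established gives
$$\Pr[P \in \mc{P}_L] = \begin{cases} 1-\gamma & \text{if } \ell = 1, \\ \gamma^2 & \text{if } \ell = 3, \\ \gamma^2(1-\gamma) & \text{if } \ell = 5. \end{cases}$$
With $\gamma = 2/3$ these are $1/3$, $4/9$, and $4/27$, so the minimum equals $\gamma^2(1-\gamma)$ and therefore $\E|\mc{P}_L| \geq \gamma^2(1-\gamma)\,|\mc{P}|$.

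Finally I would apply the additive form of \Cref{prop:chernoff} to the $|\mc{P}|$ independent $[0,1]$-valued variables $\{X_P\}_{P \in \mc{P}}$ with deviation $k := \sqrt{15\,\mu(G) \ln n}$. Since each $P \in \mc{P}$ contains at least one $M^\star$-edge and the paths are edge-disjoint, $|\mc{P}| \leq |M^\star| \leq \mu(G)$. Hence
$$\Pr\!\bigl(|\mc{P}_L| \leq \gamma^2(1-\gamma)|\mc{P}| - k\bigr) \leq \Pr\!\bigl(|\mc{P}_L| \leq \E|\mc{P}_L| - k\bigr) \leq 2\exp\!\Bigl(-\tfrac{2k^2}{|\mc{P}|}\Bigr) \leq 2\exp(-30 \ln n) \leq 2n^{-5},$$
which is the desired bound. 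The main obstacle is purely bookkeeping: confirming that the conditioning on \phaseo preserves the i.i.d.\ structure of the per-edge $\GA/\GB$ assignment and cleanly separating it from the randomness of the internal orderings within $\GA$ and $\GB$. Once that is settled the rest is a routine two-line Chernoff computation, with substantial slack in the constant $15$.
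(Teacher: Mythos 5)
Your proof is correct and follows essentially the same route as the paper: both establish independence of the per-path lucky indicators from vertex-disjointness of $\mc{P}$ together with the i.i.d.\ $\GA/\GB$ assignment enabled by the binomial choice of $\tau$, compute the same per-length probabilities to get $\E|\mc{P}_L| \geq \gamma^2(1-\gamma)|\mc{P}|$, and finish with a Chernoff bound using $|\mc{P}| \leq \mu(G)$. The only (immaterial) difference is that you invoke the additive form of \Cref{prop:chernoff} while the paper uses the multiplicative form with $\delta = \sqrt{15\ln n / \E|\mc{P}_L|}$.
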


Next, observe that in Phase II.B of~\Cref{alg:beats2/3} where we iteratively discover augmenting paths, we do not have the whole subgraph $\GA$ and have stored only a subgraph $T$ of $\GA$ in the memory. In addition, when finding augmenting paths we use only the current edge $e$ of $\GB$ in~\Cref{alg:beats2/3}. Therefore, not all lucky paths are actually discoverable by~\Cref{alg:beats2/3}. This motivates our next definition for ``discoverable paths''.

\begin{definition}
	We say an augmenting path $P$ (not necessarily in $\mc{P}$) for $M_H$ is ``discoverable'' if $|P| \leq 5$, all edges of $P$ are in $M_H \cup T \cup \GB$, and $P$ has at most one edge in $\GB$.
\end{definition}

The next lemma proves there are many vertex-disjoint discoverable augmenting paths, by relating them to the number of lucky augmenting paths $|\mc{P}_L|$. We provide the proof in \Cref{sec:proofsizeofQ}.

\begin{lemma}\label{lem:vertex-disjoint-paths-via-T}
	There exists a set $\mc{Q}$ of vertex-disjoint discoverable augmenting paths for $M_H$ with
	$$
	|\mc{Q}| \geq \frac{1}{2b+3}\left(|\mc{P}_L| - \frac{4}{b} \cdot \mu(H) \right).$$ 
\end{lemma}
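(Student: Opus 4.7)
The plan is to build $\mc{Q}$ by a greedy procedure that processes the lucky paths of $\mc{P}_L$ one by one, attempting to complete each into a discoverable path by choosing endpoint edges from $T$ that avoid a growing set $U^* \subseteq U_H$ of already-used vertices. Before running the greedy, I would carve off a small set of ``hard'' paths whose endpoints block the argument. Since every edge of $T$ has one endpoint in $V_H$ with $\deg_T(v)\leq 2$, and $|V_H|=2\mu(H)$, we have $|T|\leq 4\mu(H)$, so the set $B := \{u \in U_H : \deg_T(u) = b\}$ satisfies $|B|\leq 4\mu(H)/b$. By vertex-disjointness of the lucky paths (they live in $M^\star \Delta M_H$), at most $|B|$ of them have a $u$-endpoint in $B$; set $\mc{P}_L' := \{P \in \mc{P}_L : \text{no $u$-endpoint of }P\text{ is in }B\}$, so $|\mc{P}_L'| \geq |\mc{P}_L| - 4\mu(H)/b$, which accounts for the $-\tfrac{4}{b}\mu(H)$ term in the statement.

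The key structural claim I would establish for each easy path $P \in \mc{P}_L'$ of length $3$ or $5$ is that every spine endpoint $v \in V_H$ satisfies $\deg_T(v)\geq 1$, and moreover the two spine endpoints of $P$ do not share a single common $T$-neighbor. The degree bound is immediate from the maximality of $T$ as a $(2,b)$-matching: the lucky path's original endpoint edge $(u,v) \in \GA$ is either in $T$ (so $\deg_T(v) \geq 1$) or, since $u \notin B$ implies $\deg_T(u)<b$, the only way it can fail to be in $T$ is $\deg_T(v)=2$. The ``no shared single $T$-neighbor'' claim is by contradiction: if both spine endpoints have $T$-degree exactly one with common neighbor $u^*$, then the argument above forces both original endpoint edges of $P$ to lie in $T$, which pushes both $u$-endpoints of $P$ to coincide with $u^*$ and collapses $P$ into a closed walk, contradicting that $P$ is a path in $M^\star \Delta M_H$.

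With this in hand I would run the greedy: a length-$1$ lucky path $\langle(u_1,u_2)\rangle$ is added to $\mc{Q}$ whenever $u_1,u_2\notin U^*$, and a length-$3$ or $5$ path is added whenever I can choose $u_1' \in N_T(v_1)\setminus U^*$ and $u_2' \in N_T(v_2')\setminus U^*$ (with $v_1,v_2'$ the two spine endpoints) that are distinct. Whenever a path fails, I would charge it to a specific $u^\dagger \in U^*$ that blocks it: for a length-$1$ failure this is just whichever of $u_1,u_2$ lies in $U^*$, and for length-$3$/$5$ the structural claim ensures the only obstruction is a $T$-neighbor of some spine endpoint that already sits in $U^*$ (the seemingly dangerous sub-case $|N_T(v_i)\setminus U^*|=1$ for both $i$ and with a common surviving neighbor is ruled out by the second half of the structural claim combined with $\deg_T(v_i)=2$ whenever only one of its $T$-neighbors is used).

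The final accounting is where the constant $2b+3$ appears: each $u^\dagger \in U^*$ can absorb at most one length-$1$ charge (by vertex-disjointness of lucky paths at their $u$-endpoints) and at most $\deg_T(u^\dagger)\leq b$ length-$3$/length-$5$ charges (since each $v \in N_T(u^\dagger)$ lies in at most one spine), for a total of at most $b+1$ charges per $u^\dagger$. Each path added to $\mc{Q}$ contributes exactly $2$ vertices to $U^*$, so the number of failed easy paths is at most $2(b+1)|\mc{Q}|$, giving $|\mc{P}_L'| \leq (2b+3)|\mc{Q}|$ and hence the claimed bound. The part I expect to need the most care is the ``no shared single $T$-neighbor'' structural statement, particularly for length-$5$ paths where the two spine endpoints $v_1,v_4$ are topologically far apart along the spine yet must still be shown to avoid this degenerate configuration in the easy regime.
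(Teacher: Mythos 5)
Your proof is correct and arrives at the same constants, but it reorganizes the argument relative to the paper. The paper first commits, once and for all, to a map $\phi$ sending each lucky path $P$ to a discoverable path $\phi(P)$ (or to $\emptyset$), obtained by replacing both endpoint edges with $T$-edges chosen \emph{without} reference to what other paths do (Claims~\ref{cl:xgc139827} and~\ref{cl:tcaalrrc123} show that $\phi(P)=\emptyset$ forces a $u$-endpoint of $P$ to have $T$-degree $\geq b$, yielding $|\Phi| \geq |\mc{P}_L| - \frac{4}{b}\mu(H)$); then a separate greedy pruning step on the fixed collection $\Phi$ uses Claim~\ref{cl:intersectionatendpoint} to show that each $\phi(P)$ intersects at most $2b+2$ others. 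You instead carve off paths touching $B$ up front (yielding the same $-\frac{4}{b}\mu(H)$ loss, since $|B| \leq 4\mu(H)/b$ follows from $|T|\leq 4\mu(H)$), and run a single greedy that chooses replacement endpoints adaptively to avoid the growing forbidden set $U^*$, charging each failure to a vertex of $U^*$. Your interleaving works because the charging bound (at most one length-$1$ charge and at most $b$ length-$3/5$ charges per $u^\dagger \in U^*$, with $|U^*| = 2|\mc{Q}|$) gives the identical $2b+3$ factor. Your ``no shared single $T$-neighbor'' structural claim plays the role that the non-bipartite handling inside Claim~\ref{cl:xgc139827} plays for the paper, and your proof of it (if $\deg_T(v_1)=\deg_T(v_k)=1$ then $e_1,e_k \in T$ and a common $T$-neighbor would force $u_1=u_k$) is sound for both length-$3$ and length-$5$ paths. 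The paper's two-stage approach is a bit cleaner to verify because $\phi$'s existence is established statically, independent of the greedy order; your approach is more compressed but requires exactly the care you flagged — making sure the dangerous sub-case (both spine endpoints left with a single, coinciding free $T$-neighbor) is genuinely excluded and can be charged when not — and you handled it correctly.
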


Observe that $\mc{Q}$ is only a set of vertex-disjoint discoverable augmenting paths. However, since~\Cref{alg:beats2/3} applies augmenting paths greedily and in an arbitrary order, the set of applied augmenting paths may be very different from $\mc{Q}$. The next claim shows that we can nonetheless relate the number of augmenting paths that~\Cref{alg:beats2/3} applies to the size of $\mc{Q}$.

\begin{claim}\label{cl:xgh198273}
	Let $\mc{Q}$ be as in \Cref{lem:vertex-disjoint-paths-via-T}.~\Cref{alg:beats2/3} applies at least $|\mc{Q}|/6$ augmenting paths in \textnormal{Phase II.B}. In other words, $|M| \geq \mu(H) + \frac{1}{6}|\mc{Q}|$.
\end{claim}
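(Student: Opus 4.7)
The plan is to show that every $P \in \mc{Q}$ must share a vertex with some augmenting path that \Cref{alg:beats2/3} actually applies during Phase~II.B; a simple charging argument then concludes the proof, since the paths in $\mc{Q}$ are vertex-disjoint while each applied augmenting path has length at most $5$ and therefore touches at most $6$ vertices. In particular, if the algorithm applies $k$ augmenting paths then $|\mc{Q}| \le 6k$, which combined with $|M| = |M_H| + k = \mu(H) + k$ gives the claim.

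The key observation to record first is the following ``frozen'' property: if up to some moment during Phase~II.B no previously-applied augmenting path has touched any vertex of $V(P)$, then at that moment the current matching $M$ agrees with $M_H$ on $V(P)$. Indeed, every endpoint of $P$ is unmatched in $M_H$ and can only become matched via an augmentation that passes through it, while every internal vertex $v$ of $P$ has its $M_H$-partner inside $V(P)$ (the $M_H$-edge at $v$ lies on $P$), so $v$ can only change its $M$-partner via an augmentation that touches $V(P)$. Consequently all $M_H$-edges of $P$ survive in $M$ and $P$ remains an augmenting path for $M$; moreover, every non-$M_H$ edge of $P$ lies in $T \cup \GB$ by the definition of ``discoverable''.

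I would then case-split on whether $P$ contains a $\GB$-edge $e_P$. When it does (this covers all length-$1$ paths, some length-$3$ paths, and all length-$5$ paths, since the middle edge of a length-$5$ augmenting path joins two $M_H$-matched vertices and therefore cannot belong to the bipartite subgraph $T$), consider the moment $e_P$ arrives and suppose for contradiction that no earlier applied augmenting path has touched $V(P)$. By the frozen property, $P \subseteq M \cup T \cup \{e_P\}$ and $P$ is an augmenting path for the current $M$; during the iterative augmentation triggered by $e_P$, any applied path that does not touch $V(P)$ leaves $P$ intact as an available length-$\le 5$ augmenting path, so the loop cannot terminate without some applied path intersecting $V(P)$. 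When $P$ contains no $\GB$-edge (the only remaining case is a length-$3$ path with both endpoint edges in $T$), the same argument is applied to any $\GB$-edge $e$ that arrives in Phase~II.B: the frozen property gives $P \subseteq M \cup T \cup \{e\}$, so the loop at $e$ cannot terminate without an applied path touching $V(P)$.

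With the vertex-sharing claim in hand, I charge each $P \in \mc{Q}$ to any applied augmenting path sharing a vertex with it. Since $\mc{Q}$ is vertex-disjoint and each applied path has at most $6$ vertices, each applied path receives at most $6$ charges; thus $k \ge |\mc{Q}|/6$ and $|M| \ge \mu(H) + |\mc{Q}|/6$, as desired. The main subtlety I anticipate is the frozen property itself: one has to handle endpoints and internal vertices of $P$ separately, and verify that an applied augmentation cannot ``use up'' the $\GB$-edge of $P$ before its arrival time (which is automatic, since the algorithm has not seen $e_P$ before then), as well as carefully check the structural fact that a length-$5$ discoverable path must have its middle edge in $\GB$.
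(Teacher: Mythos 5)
Your proof is correct and follows the same approach as the paper's: charge each $P \in \mc{Q}$ to some applied augmenting path intersecting it, and use that applied paths have length at most five (hence at most six vertices) together with vertex-disjointness of $\mc{Q}$ to bound the number of charges. The only difference is that you spell out in detail the "frozen" invariant and the reason the iterative loop cannot terminate without touching $V(P)$ — the paper compresses this into the single sentence that if $P$ is no longer an augmenting path for $M$, some applied path must have intersected $P$ — so your write-up is a faithful (and somewhat more careful) version of the same argument.
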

\begin{proof}
	Take an augmenting path $P \in \mc{Q}$. Since $P$ is discoverable, there must be a moment during Phase II.B of~\Cref{alg:beats2/3} where all the edges of $P$ are stored in the memory. Note, however, that $P$ is by definition an augmenting path for $M_H$ whereas~\Cref{alg:beats2/3} tries to augment matching $M$ (which is the result of iteratively augmenting $M_H$). The crucial observation, here, is that if $P$ is not an augmenting path for $M$, then at some point one of the augmenting paths that \Cref{alg:beats2/3} has applied on $M$ must have intersected with $P$ (through a vertex). Now, recall that each augmenting paths that~\Cref{alg:beats2/3} applies has length at most five, and thus has at most six vertices. This means that any augmenting path that~\Cref{alg:beats2/3} applies can intersect (and thus ``destroy'') at most six paths in $\mc{Q}$ (since recall $\mc{Q}$ is a collection of vertex-disjoint paths). Hence~\Cref{alg:beats2/3} must apply at least $|\mc{Q}|/6$ augmenting paths on $M$. Since each augmenting path increases the size of $M$ by one and initially $M = M_H$, we have $|M| \geq |M_H| + \frac{1}{6} |\mc{Q}| = \mu(H) + \frac{1}{6} |\mc{Q}|$.
\end{proof}

\begin{lemma}\label{lem:sizeofM}
	There is an absolute constant $\eps'_0 > 0$ such that for any $\epsilon < 0.01$, if $\mu(H) \leq 0.68 \mu(G)$ then with probability $1-1/\poly(n)$, we have $|M| \geq \mu(H) + \eps'_0 \cdot \mu(G)$.
\end{lemma}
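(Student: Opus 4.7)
The plan is to simply chain the earlier lemmas of this section and track the constants. The only probabilistic step is the invocation of \Cref{lem:whpluckyap}, so the high-probability guarantee comes for free. Under the assumption $\mu(H) \leq 0.68\,\mu(G)$, we need to show that $|\mc{P}|$, $|\mc{P}_L|$, $|\mc{Q}|$, and finally $|M|-\mu(H)$ are all $\Omega(\mu(G))$. The only thing to verify is that the constants $\gamma = 2/3$ and $b=500$ together with the hypothesis $\mu(H) \leq 0.68\,\mu(G)$ are calibrated so that none of the linear terms cancel.

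First I would combine \Cref{obs:Plarge} with \Cref{ass:Mstarlarge} to obtain
\[
|\mc{P}| \geq (1-2\epsilon)\,\mu(G) - \tfrac{4}{3}\,\mu(H) \geq (1-2\epsilon)\,\mu(G) - \tfrac{4}{3}\cdot 0.68\,\mu(G) \geq 0.07\,\mu(G),
\]
where the last inequality uses $\epsilon < 0.01$. Next, I would apply \Cref{lem:whpluckyap} with $\gamma=2/3$, which gives $\gamma^2(1-\gamma) = 4/27$, to deduce that, except with probability at most $2n^{-5}$,
\[
|\mc{P}_L| \geq \tfrac{4}{27}\,|\mc{P}| - \sqrt{15\,\mu(G)\ln n} \geq \tfrac{4}{27}\cdot 0.07\,\mu(G) - \sqrt{15\,\mu(G)\ln n}.
\]
Since we may assume $\mu(G) \geq c\log n$ for an arbitrarily large absolute constant $c$ (as noted in \Cref{sec:preliminaries}), the $\sqrt{\mu(G)\ln n}$ term is dominated by a small fraction of the first term, so $|\mc{P}_L| \geq c_1\,\mu(G)$ for an absolute constant $c_1 > 0$ (roughly $c_1 \approx 0.01$).

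Then I would plug this into \Cref{lem:vertex-disjoint-paths-via-T} with $b=500$:
\[
|\mc{Q}| \geq \tfrac{1}{2b+3}\Paren{|\mc{P}_L| - \tfrac{4}{b}\,\mu(H)} \geq \tfrac{1}{1003}\Paren{c_1\,\mu(G) - \tfrac{4}{500}\cdot 0.68\,\mu(G)}.
\]
The choice $b=500$ is calibrated so that $(4/b)\cdot 0.68 < c_1$, hence the right-hand side is a positive constant times $\mu(G)$; call it $c_2\,\mu(G)$. Finally, invoking \Cref{cl:xgh198273},
\[
|M| \geq \mu(H) + \tfrac{1}{6}|\mc{Q}| \geq \mu(H) + \tfrac{c_2}{6}\,\mu(G),
\]
which yields the lemma with $\eps'_0 := c_2/6$, an absolute constant independent of $\epsilon$ (for any $\epsilon<0.01$).

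The only subtle point, and where I would spend the most care, is the arithmetic ensuring that the term $\tfrac{4}{b}\,\mu(H)$ subtracted in \Cref{lem:vertex-disjoint-paths-via-T} does not eat up all of $|\mc{P}_L|$ in the worst case $\mu(H) = 0.68\,\mu(G)$; this is precisely why the hypothesis is set at $0.68$ (strictly below $2/3 + $ something) rather than, say, $(2/3)\mu(G) + o(\mu(G))$, and why $b$ is taken large. Everything else is a deterministic chain of the cited results, and the high-probability clause follows from the single application of \Cref{lem:whpluckyap} (together with \Cref{obs:whpMstarlarge}, already absorbed into \Cref{ass:Mstarlarge}).
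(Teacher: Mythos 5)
Your proof is correct and follows essentially the same chain as the paper's: combine \Cref{cl:xgh198273}, \Cref{lem:vertex-disjoint-paths-via-T}, \Cref{lem:whpluckyap}, \Cref{obs:Plarge}, and \Cref{ass:Mstarlarge}, then plug in $\gamma=2/3$, $b=500$, $\epsilon<0.01$, $\mu(H)\le 0.68\mu(G)$ and check the constants stay positive. The only difference is presentational (you chain forward from $|\mc{P}|$ rather than substituting into the composite inequality at the end), and your explicit $\eps_0'\approx c_2/6$ matches the paper's claimed $>10^{-7}$.
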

\begin{proof}
	We have
	\begin{flalign}\label{eq:bhgc19283}
		|M|
		\stackrel{\text{\Cref{cl:xgh198273}}}{\geq}
		\mu(H) + \frac{1}{6} |\mc{Q}|
		\stackrel{\text{\Cref{lem:vertex-disjoint-paths-via-T}}}{\geq}
		\mu(H) + \frac{|\mc{P}_L| - \frac{4}{b} \mu(H)}{6(2b+3)}.
	\end{flalign}
	On the other hand, by \Cref{lem:whpluckyap} we know that with $1-1/\poly(n)$ probability,
	\begin{flalign*}
		|\mc{P}_L| &> \gamma^2(1-\gamma)|\mc{P}| - \sqrt{15 \mu(G) \ln n} \tag{By \Cref{lem:whpluckyap}}\\
		&=\frac{4}{27} |\mc{P}| - \sqrt{15 \mu(G) \ln n} \tag{Since $\gamma = 2/3$}\\
		&\geq \frac{4}{27} \left(|M^\star| - \frac{4}{3} \mu(H)\right) - \sqrt{15 \mu(G) \ln n}\tag{By \Cref{obs:Plarge}}\\
		&\geq \frac{4}{27} \left((1-2\epsilon)\mu(G) - \frac{4}{3} \mu(H)\right) - \sqrt{15 \mu(G) \ln n}\tag{By \Cref{ass:Mstarlarge}}\\
		&> 0.0108 \mu(G) - \sqrt{15 \mu(G) \ln n}\tag{$\epsilon < 0.01$ and $\mu(H) \leq 0.68\mu(G)$}\\
		&> 0.01 \mu(G). \tag{Since $\mu(G) > c \log n$ for any desirably large constant $c$.}
	\end{flalign*}
	Replacing this high probability lower bound for $|\mc{P}_L|$ into (\ref{eq:bhgc19283}) we get that w.h.p., 
	\begin{flalign*}
		|M| &\geq \mu(H) + \frac{0.01\mu(G) - \frac{4}{b} \mu(H)}{6(2b+3)}\\
		&> \mu(H) + 10^{-7} \mu(G). \tag{Replacing $b = 500$ and noting $\mu(H) \leq 0.68 \mu(G)$.}
	\end{flalign*}
	This completes the proof.
\end{proof}
\noindent
We are now ready to prove that~\Cref{alg:beats2/3}, w.h.p., achieves a better-than-$(2/3)$  approximation.
\begin{lemma}\label{lem:approx}
	For some absolute constant $\eps_0 > 0$ the matching returned by~\Cref{alg:beats2/3} with probability $1-1/\poly(n)$ has size at least $(2/3 + \eps_0) \cdot \mu(G)$.
\end{lemma}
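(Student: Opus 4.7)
The plan is to combine the structural dichotomy of \Cref{cor:edcs-early-general} with the augmentation guarantee of \Cref{lem:sizeofM}. Since the algorithm outputs a maximum matching of $M \cup H \cup U$, its size is at least $\max\{\mu(H \cup U),\, |M|\}$. I will fix a small parameter $\delta > 0$, and then choose the algorithm's parameter $\epsilon$ (and hence $\lambda = \epsilon/128$) small enough relative to $\delta$. Applying \Cref{cor:edcs-early-general} with this $\delta$ yields two cases.

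In the first case, $\mu(H \cup U) \geq (1 - 4\lambda)(2/3 + \delta^2/18) \mu(G)$, and by choosing $\lambda$ sufficiently small compared to $\delta^2$ (for instance $\lambda \leq \delta^2/10^4$), this already gives a $(2/3 + \Omega(\delta^2))$-approximation, so the returned matching has size at least $(2/3 + \eps_0) \mu(G)$ for some $\eps_0 > 0$.

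In the second case, $\mu(H) \geq (1 - 8\lambda)(2/3 - \delta) \mu(G)$. I further split on the size of $\mu(H)$: if $\mu(H) > 0.68 \mu(G)$, then since $M_H \subseteq M$ we already have $|M| \geq \mu(H) > 0.68 \mu(G) > (2/3 + 10^{-2}) \mu(G)$ and we are done. Otherwise, $\mu(H) \leq 0.68 \mu(G)$, so the hypothesis of \Cref{lem:sizeofM} is satisfied, yielding $|M| \geq \mu(H) + \eps'_0 \mu(G)$ with probability $1 - 1/\poly(n)$, for the absolute constant $\eps'_0$ produced by that lemma. Combining with the lower bound on $\mu(H)$ gives
\[
|M| \;\geq\; (1 - 8\lambda)(2/3 - \delta) \mu(G) + \eps'_0 \mu(G) \;\geq\; \big(2/3 - \delta - \tfrac{16}{3}\lambda + \eps'_0\big)\mu(G).
\]
Choosing $\delta := \eps'_0/4$ and $\lambda$ small enough (say $\lambda \leq \eps'_0 / 100$) makes this quantity at least $(2/3 + \eps'_0/2)\mu(G)$. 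Setting $\eps_0$ to be the minimum of the slacks obtained in the two cases (a quantity of order $\eps'_0$, in particular an absolute constant independent of $n$) completes the argument, while the overall failure probability is at most the union bound of the $n^{-3}$ failure of \Cref{lem:phaseI}, the $n^{-5}$ of \Cref{obs:whpMstarlarge}, and the $1/\poly(n)$ of \Cref{lem:sizeofM}, which is $1/\poly(n)$.

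The main obstacle I anticipate is the parameter balancing: \Cref{cor:edcs-early-general} produces a $(2/3 + \delta^2/18)$-improvement on the $H \cup U$ side but only loses $\delta$ on the $H$ side, so the two cases are not symmetric in $\delta$. The fix is to drive $\delta$ down to match the additive augmentation gain $\eps'_0$ from \Cref{lem:sizeofM}, accepting that the resulting $\eps_0$ will be of order $\min(\delta^2, \eps'_0 - \delta)$ and hence very small, which matches the $\sim 10^{-14}$ figure quoted in the introduction. No delicate probabilistic argument is needed here, since the randomness in Phase~I is already absorbed by conditioning on the event of \Cref{ass:Mstarlarge}, and the randomness in Phases II.A and II.B has been handled in \Cref{lem:sizeofM}.
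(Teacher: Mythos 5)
Your overall architecture — apply the structural dichotomy of \Cref{cor:edcs-early-general}, then split on whether $\mu(H) \leq 0.68\mu(G)$, and feed the small-$\mu(H)$ branch into \Cref{lem:sizeofM} — matches the paper's proof, and your explicit handling of the $\mu(H) > 0.68\mu(G)$ sub-case is a nice touch (the paper glosses over the fact that \Cref{lem:sizeofM} is only stated conditionally on $\mu(H)\leq 0.68\mu(G)$, though of course the large-$\mu(H)$ case is trivially fine). Your parameter bookkeeping ($\delta=\eps'_0/4$, $\lambda$ small enough) also checks out arithmetically.

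However, there is one genuine gap: you invoke \Cref{cor:edcs-early-general} directly on the input graph $G$, but its hypothesis $(ii)$ is \emph{not} satisfied for $G$. That hypothesis requires $U$ to contain \emph{every} edge $(u,v) \in G\setminus H$ with $\deg_H(u)+\deg_H(v) < \beta_-$, whereas the $U$ constructed in \Cref{alg:bernstein} (and hence \Cref{alg:beats2/3}) only collects such low edge-degree edges from $G_{\geq \epsilon m}$. Edges in $G_{<\epsilon m} \setminus H$ that happen to have low edge-degree are silently discarded in Phase~I, so the corollary's preconditions fail for $G$. The paper works around this by applying \Cref{cor:edcs-early-general} to the auxiliary graph $G' := H \cup G_{\geq \epsilon m}$, for which hypothesis $(ii)$ does hold (there are no edges of $G' \setminus H$ outside $G_{\geq\epsilon m}$), and then uses that $M^\star \subseteq G'$ together with \Cref{obs:whpMstarlarge} to conclude $\mu(G') \geq (1-2\epsilon)\mu(G)$ with high probability. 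This converts your case bounds into $(1-O(\epsilon))(\tfrac23-\delta)\mu(G)$ and $(1-O(\epsilon))(\tfrac23+\delta^2/18)\mu(G)$; the extra $(1-O(\epsilon))$ factor is then absorbed exactly as you already handle the $(1-8\lambda)$ factors, so your parameter choices survive unchanged. The fix is minor, but without it your application of the corollary is not justified.
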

\begin{proof}
	Let $M_O$ be the matching returned by~\Cref{alg:beats2/3} which has size at least as large as maximum of $|M|$ and $\mu(H \cup U)$; we thus get $|M_O| \geq \max\{ |M|, \mu(H \cup U) \}$. Hence, from the lower bound of \Cref{lem:sizeofM} for $|M|$, we get that there is a constant $\eps'_0 > 0$ such that with probability $1-1/\poly(n)$,
	\begin{equation}\label{eq:dlr19283}
		|M_O| \geq 
		\max\Big\{ \mu(H) + \eps'_0 \cdot \mu(G), \,\, \mu(H \cup U) \Big\}.
	\end{equation}
	
	In the next step, we employ \Cref{cor:edcs-early-general} to argue that the lower bound above implies that $|M_O| \geq (2/3 + \Omega(1)) \mu(G)$. In particular, let us consider subgraph $G'$ of $G$ which includes all the edges in $H$ as well as all the edges in $G_{> \epsilon m}$. In other words, the only edges of $G$ that do not belong to $G'$ are those that arrive in Phase~I and are not included in subgraph $H$. One can verify that $H$ and $U$ (constructed in Algorithm~\ref{alg:beats2/3}) satisfy the constraints of \Cref{cor:edcs-early-general} for graph $G'$ (but not necessarily $G$ since the edges in $G - G'$ may have a small edge-degree). \Cref{cor:edcs-early-general} thus implies that for any $\delta \in (0,1)$, either: 
	\[
	  \mu(H) \geq (1-8\lambda) \cdot (\frac23-\delta) \cdot \mu(G')	\quad \text{or} \quad \mu(H \cup U) \geq (1-4\lambda) \cdot \paren{\frac23 + \frac{\delta^2}{18}} \cdot \mu(G'). 
	\]
	Recall that $M^\star$ is the maximum matching of $G_{> \epsilon m}$ which is entirely included in $G'$. Also recall from \Cref{obs:whpMstarlarge} that w.h.p. $|M^\star| \geq (1-2\epsilon) \mu(G)$. Hence, w.h.p., $\mu(G') \geq (1-2\epsilon) \mu(G)$ which combined with $\lambda = \epsilon/128$ (\Cref{def:betalambdaparameter}) simplifies the equation above to the following:
	\begin{equation}\label{eq:xghlll189237}
	  \mu(H) \geq (1-O(\epsilon)) \cdot (\frac23-\delta) \cdot \mu(G)	\quad \text{or} \quad \mu(H \cup U) \geq (1-O(\epsilon)) \cdot \paren{\frac23 + \frac{\delta^2}{18}} \cdot \mu(G). 
	\end{equation}
	
	Plugging (\ref{eq:xghlll189237}) into (\ref{eq:dlr19283}) implies for any $\delta \in (0, 1)$ that
	\begin{flalign*}
		|M_O| &\geq (1-O(\epsilon)) \cdot 
		\min\left\{ \left( \frac{2}{3} - \delta \right) \mu(G) +  \eps'_0 \mu(G), \left(\frac{2}{3} + \frac{\delta^2}{18} \right) \mu(G)\right\}\\
		&\geq (1-O(\epsilon)) \cdot \min\left\{ \left( \frac{2}{3} - \delta + \eps'_0 \right) , \left(\frac{2}{3} + \frac{\delta^2}{18} \right) \right\} \cdot \mu(G).
	\end{flalign*}
	(Note that inequality above takes minimum of the two terms whereas (\ref{eq:dlr19283}) takes maximum. This is because \Cref{cor:edcs-early-general} only guarantees either the lower bound of $\mu(H)$ or that of $\mu(H \cup U)$ and we do not know which one holds for our instance.) 
	
	Now letting $\delta = \eps'_0/2$, we get
	\begin{flalign*}
		|M_O| &\geq (1 - O(\epsilon)) \cdot \min\left\{ \left( \frac{2}{3} + \frac{\eps'_0}{2} \right), \left(\frac{2}{3} + \frac{(\eps'_0 / 2)^2}{18} \right) \right\} \cdot \mu(G) \geq (1-O(\epsilon)) \left(\frac{2}{3} + \frac{(\eps'_0 / 2)^2}{18} \right) \mu(G).
	\end{flalign*}
	Finally, noting that $\epsilon$ can be made arbitrarily small (without affecting $\eps'_0$), combined with the fact that $\eps'_0$ is an absolute positive constant, we get that there must be some $\eps_0 > 0$ such that $|M_O| \geq \left( \frac{2}{3} + \eps_0 \right) \mu(G)$ with probability $1-1/\poly(n)$.
\end{proof}

\Cref{thm:main} now follows immediately from this. 

\subsection{Proof of \Cref{lem:vertex-disjoint-paths-via-T}}\label{sec:proofsizeofQ}

Observe that not all augmenting path $P \in \mc{P}_L$ are discoverable. For example, if $P \in \mc{P}_L$ is of length five, despite its two endpoints $e_1$ and $e_5$ being part of $\GA$ by~\Cref{def:lucky}, it may still be the case that $e_1, e_5 \not\in T$ and thus $e_1, e_5 \not\in M_H \cup T \cup \GB$ implying that $P$ may not be discoverable. To prove \Cref{lem:vertex-disjoint-paths-via-T}, however, we show in this section that for most augmenting paths $P \in \mc{P}_L$, we can modify $P$, particularly, by changing its two endpoint edges (if any and if necessary) and turn $P$ into a discoverable augmenting path $\phi(P)$.

Take an augmenting path $P \in \mc{P}_L$ and recall from definition that $\mc{P}_L \subseteq \mc{P}$ and thus $|P| \in \{1, 3, 5\}$. We define $\phi(P)$ as follows depending on the size of $P$:
\begin{itemize}
	\item $|P| = 1$: In this case, we simply let $\phi(P) \gets P$.
	\item $|P| = 3$: Let $\langle e_1, e_2, e_3 \rangle$ be the edges in $P$ and note that $e_2 \in M_H$ since $P$ is an augmenting path for $M_H$. If edges $e'_1, e'_3 \in T$ exist such that $\langle e_1', e_2, e'_3 \rangle$ forms a length-three augmenting path for $M_H$, we let $\phi(P) \gets \langle e_1', e_2, e'_3 \rangle$. Otherwise, $\phi(P) \gets \emptyset$.
	\item $|P| = 5$: Let $\langle e_1, e_2, e_3, e_4, e_5 \rangle$ be the edges in $P$. Note that $e_2, e_4 \in M_H$ since $P$ is an augmenting path for $M_H$ and $e_3 \in G_{II.B}$ since $P \in \mc{P}_L$. Now if there are edges $e'_1, e'_5 \in T$ such that $\langle e'_1, e_2, e_3, e_4, e'_5 \rangle$ is an augmenting path for $M_H$, we let $\phi(P)$ to denote this path. Otherwise, $\phi(P) \gets \emptyset$.
\end{itemize}

The properties enlisted in \Cref{obs:behu12308} are immediate consequences of construction above:

\begin{observation}\label{obs:behu12308}
	Let $P  \in \mc{P}_L$ and suppose $\phi(P) \not= \emptyset$. It holds that
	\begin{enumerate}
		\item $|\phi(P)| = |P|$.
		\item If $P = \langle e_1, \ldots, e_k \rangle$ and $\phi(P)= \langle e'_1, \ldots, e'_k \rangle$ then $e_i = e'_i$ for any $2 \leq i \leq k-1$.
		\item The endpoint vertices of $\phi(P)$ are unmatched in $M_H$ since it is an augmenting path for $M_H$.
		\item If $|\phi(P)| > 1$ then the two endpoint edges of $\phi(P)$ belong to $T$.
		\item If $\phi(P) \not= \emptyset$, then $\phi(P)$ is discoverable.
	\end{enumerate}
\end{observation}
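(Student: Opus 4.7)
\medskip

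\noindent\textbf{Proof proposal.} The statement is essentially a bookkeeping exercise: each of the five properties should follow from unpacking the case-by-case definition of $\phi(P)$ just given above. The plan is therefore a case analysis on $|P|\in\{1,3,5\}$ (the only possibilities since $P\in\mc{P}_L\subseteq\mc{P}$), in each of which I verify all five clauses.

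The case $|P|=1$ is immediate because the definition sets $\phi(P)\gets P$; Parts 1--3 are trivial, Part 4 is vacuous, and Part 5 holds because $P\in\mc{P}_L$ forces $e_1\in\GB$ (by clause 1 of \Cref{def:lucky}), which also verifies discoverability. For the cases $|P|\in\{3,5\}$, the definition either returns $\emptyset$, which is excluded by the hypothesis $\phi(P)\neq\emptyset$, or returns a path obtained by replacing the two endpoint edges of $P$ by edges $e'_1, e'_3\in T$ (resp.\ $e'_1,e'_5\in T$) while keeping the interior edges of $P$; this yields Parts 1, 2, and 4 directly. Part 3 is also immediate: the definition explicitly requires $\phi(P)$ to be an augmenting path for $M_H$, and the endpoints of any augmenting path for a matching are by definition unmatched in that matching.

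The only clause that requires a little thought is Part 5, i.e., that $\phi(P)$ satisfies all three conditions in the definition of ``discoverable.'' The first, $|\phi(P)|\le 5$, follows from Part 1 together with $|P|\le 5$. For the second, I would trace each edge of $\phi(P)$ to one of $M_H$, $T$, or $\GB$: the interior matching-edges $e_2$ (and $e_4$ when $|P|=5$) lie in $M_H$; the middle edge $e_3$ of a length-five path lies in $\GB$ by clause 3 of \Cref{def:lucky}; endpoint edges lie in $T$ by the construction of $\phi$ when $|P|\in\{3,5\}$, or in $\GB$ when $|P|=1$ as above. For the third condition (at most one $\GB$-edge), the key observation is that the three sets $G_{<\epsilon m}$, $\GA$, and $\GB$ partition the stream by definition of the phases, and in particular $M_H\subseteq H\subseteq G_{<\epsilon m}$ and $T\subseteq\GA$ are both disjoint from $\GB$; consequently the only edge of $\phi(P)$ that can possibly lie in $\GB$ is the single edge when $|P|=1$ or the middle edge $e_3$ when $|P|=5$, giving zero or one $\GB$-edge in each case.

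I do not foresee any real obstacle; the whole argument is case enumeration plus the elementary containment $M_H\cup T\subseteq G_{<\epsilon m}\cup\GA$. The only thing to be a bit careful about is to notice that ``augmenting path for $M_H$'' was built into the definition of $\phi$ (so Part 3 is essentially free) and that the three stream phases are disjoint (so ``at most one $\GB$-edge'' is automatic once each edge has been placed).
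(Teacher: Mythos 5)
Your proposal is correct and matches what the paper leaves implicit: the paper states that these properties ``are immediate consequences of the construction above'' and gives no proof, and your case analysis on $|P|\in\{1,3,5\}$ is exactly the intended unpacking of the definition of $\phi$ together with Definition~\ref{def:lucky} and the disjointness of the stream phases (which gives $M_H\subseteq G_{<\epsilon m}$ and $T\subseteq\GA$, hence at most one $\GB$-edge in $\phi(P)$).
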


We let $\Phi := \{ \phi(P) \mid P \in \mc{P}_L, \phi(P) \not= \emptyset \}$. Although each element in $\Phi$ is a discoverable augmenting path for $M_H$, it has to be noted that these augmenting paths may not necessarily be vertex-disjoint. In the first part of the proof, we show that a large fraction of paths in $\Phi$ are vertex-disjoint. In the second part, we show that $\Phi$ is itself large. The combination of these two, gives that there is a large number of vertex-disjoint paths in $\Phi$.

\subsubsection*{A Large Fraction of Paths in $\Phi$ are Vertex-Disjoint}

We first need an auxiliary claim:

\begin{claim}\label{cl:intersectionatendpoint}
	Let $P \in \mc{P}_L$ and $P' \in \mc{P}_L$ be such that $P \not= P'$, $\phi(P) \not= \emptyset$, and $\phi(P') \not= \emptyset$. Then:
	\begin{enumerate}
		\item If $\phi(P)$ and $\phi(P')$ intersect at some vertex $v$, then $v$ is an endpoint of both $\phi(P)$ and $\phi(P')$.
		\item If $e \in \phi(P)$ then $e \not\in \phi(P')$.
	\end{enumerate} 
\end{claim}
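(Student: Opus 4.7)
The plan rests on two structural facts that together control how distinct modified paths can interact: (i) the paths in $\mc{P}$ are pairwise vertex-disjoint, since they are connected components of the symmetric difference $M^\star \Delta M_H$ of two matchings; and (ii) by part 2 of \Cref{obs:behu12308}, the map $\phi$ preserves all interior edges of each path, and hence also its set of interior vertices --- only the two endpoint edges (if any) may be changed. I will also use the fact that the three edge sets $M_H$, $T$, and $\GB$ are pairwise disjoint, since they are contained in the disjoint parts of the stream (Phase I, Phase II.A, Phase II.B) respectively, and that endpoints of an augmenting path for $M_H$ are unmatched in $M_H$ while interior vertices are matched in $M_H$.

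For Part 1, I would classify each vertex of an augmenting path for $M_H$ as either an \emph{endpoint vertex} (unmatched by $M_H$) or an \emph{interior vertex} (incident to a middle edge of the path, which lies in $M_H$ and hence matches that vertex). By fact (ii), the interior vertices of $\phi(P)$ coincide with those of $P$, and similarly for $\phi(P')$. Now suppose some $v \in V(\phi(P)) \cap V(\phi(P'))$. If $v$ were interior in both, then $v$ would be interior in $P$ and in $P'$, contradicting fact~(i). If $v$ were interior in one and an endpoint in the other, it would have to be simultaneously $M_H$-matched and $M_H$-unmatched, again impossible. The only remaining possibility is that $v$ is an endpoint of both.

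For Part 2, I would classify every edge of $\phi(P)$ by its role: interior edges in positions $2$ and $4$ lie in $M_H$; the central edge in position $3$ of a length-$5$ path lies in $\GB$ (and equals the corresponding central edge of $P$); and the endpoint edges lie in $T$ if $|\phi(P)| \in \{3,5\}$ or in $\GB$ if $|\phi(P)|=1$. By disjointness of $M_H$, $T$, and $\GB$, any shared edge $e \in \phi(P) \cap \phi(P')$ must play the same role in both paths. If $e$ is an interior edge of both, then it is an interior edge of $P$ and $P'$, contradicting fact~(i); the same reasoning handles $|\phi(P)|=|\phi(P')|=1$ where $\phi$ is the identity, and the shared-central-edge subcase for two length-$5$ paths. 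The truly nontrivial case is when $e \in T$ is an endpoint edge of both $\phi(P)$ and $\phi(P')$. Here the $M_H$-matched endpoint of $e$ is an interior vertex of both $\phi(P)$ and $\phi(P')$, hence by fact~(ii) an interior vertex of both $P$ and $P'$, again contradicting fact~(i).

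The main obstacle is exactly this last subcase of Part~2: ruling out that two distinct modified paths share an endpoint edge from $T$. The rest of the argument is essentially bookkeeping around which part of the stream each edge type comes from. The key is to convert an alleged shared $T$-edge into a shared $M_H$-matched vertex in the original paths, so that vertex-disjointness of $\mc{P}$ delivers the contradiction.
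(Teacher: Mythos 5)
Your Part 1 argument is correct and is essentially the same as the paper's: both invoke the vertex-disjointness of $\mc{P}$, the fact (\Cref{obs:behu12308} part~2) that $\phi$ only alters endpoint edges and hence preserves interior vertices, and (\Cref{obs:behu12308} part~3) to rule out an interior/endpoint mismatch via $M_H$-matched vs.\ $M_H$-unmatched. No issues there.

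For Part 2, you take a genuinely different route from the paper, and the route as written has a small gap. Your approach classifies the shared edge $e$ by which of the pairwise-disjoint sets $M_H$, $T$, $\GB$ it lies in, and claims that disjointness forces $e$ to ``play the same role'' in both $\phi(P)$ and $\phi(P')$. But within $\GB$ there are two distinct roles: the central edge (position~3) of a length-$5$ path, and the sole edge of a length-$1$ path. Edge-set disjointness does \emph{not} rule out the mixed case where $e$ is simultaneously the single edge of a length-$1$ $\phi(P)$ and the central edge of a length-$5$ $\phi(P')$. That case is indeed impossible, but for a different reason you never state: as the sole edge of a length-$1$ augmenting path both endpoints of $e$ are $M_H$-unmatched, whereas as the central edge of a length-$5$ augmenting path both endpoints are $M_H$-matched (via $e_2'$, $e_4' \in M_H$). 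Your enumeration (interior-both, both-length-one, central-both, $T$-endpoint-both) omits this cross-type case, so the ``same role'' assertion is overclaimed. The fix is one extra line, but as stated the case analysis is incomplete.

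The paper's proof of Part~2 sidesteps all of this by directly invoking Part~1: if $e=(u,v)\in\phi(P)\cap\phi(P')$ then $u$ and $v$ are each in $V(\phi(P))\cap V(\phi(P'))$, so Part~1 forces \emph{both} $u$ and $v$ to be endpoint vertices of both paths; but a path of length at least $3$ has non-adjacent endpoints, so $|\phi(P)|=|\phi(P')|=1$, and since $\phi$ is the identity on length-$1$ paths this gives $P=P'$, a contradiction. This is shorter, avoids the edge-role bookkeeping entirely, and has no cases to omit. If you want to keep your role-based route, you should add the $M_H$-matched/unmatched argument for the $\GB$-vs-$\GB$ mixed case; otherwise I'd recommend adopting the paper's ``use Part~1'' shortcut.
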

\begin{proof}
	Note that $P$ and $P'$ are vertex-disjoint since both belong to $\mc{P}_L \subseteq \mc{P}$. By \Cref{obs:behu12308} part~2, only the endpoint edges of $\phi(P)$ and $\phi(P')$ may differ from $P$ and $P'$ respectively. Combination of these two observations implies that any vertex $v$ that belongs to both of $\phi(P)$ and $\phi(P')$ must be an endpoint of at least one of the two paths. Now using \Cref{obs:behu12308} part~3, we get that $v$ cannot be an intermediate vertex of one path and an endpoint of another since an intermediate vertex must be matched in $M_H$ (as both $\phi(P)$ and $\phi(P')$ are augmenting paths for $M_H$). Hence, $v$ must be an endpoint of both $\phi(P)$ and $\phi(P')$.
	
	To prove the second part, we know from the first part that if $e$ belongs to both $\phi(P)$ and $\phi(P')$, then both of the endpoints of $e$ must be endpoints of paths $\phi(P)$ and $\phi(P')$. This means that we should have $|\phi(P)| = |\phi(P')| = 1$ and $P = P'$ contradicting $P \not= P'$.
\end{proof}

The next claim is the formal statement that a large fraction of paths in $\Phi$ are vertex-disjoint.

\begin{claim}\label{cl:Q>Phi/b}
	There is a subset $\mc{Q} \subseteq \Phi$ such that all the augmenting paths in $\mc{Q}$ are vertex-disjoint and $|\mc{Q}| \geq \frac{1}{2b+3} |\Phi|$ where we recall $b$ is the parameter of~\Cref{alg:beats2/3}.
\end{claim}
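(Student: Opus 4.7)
The plan is to construct $\mc{Q}$ via a greedy independent-set argument on a conflict graph built on $\Phi$. Specifically, I will define a graph $\mc{H}$ whose vertex set is $\Phi$ and whose edges connect any two paths that share at least one vertex of the input graph. By \Cref{cl:intersectionatendpoint} part~1, any such shared vertex must be an endpoint of both paths, and by \Cref{obs:behu12308} part~3 every endpoint of a path in $\Phi$ lies in $U_H$ (since endpoints of augmenting paths for $M_H$ are unmatched in $M_H$). So the max degree of $\mc{H}$ is controlled by the number of paths in $\Phi$ incident to each $v \in U_H$, and the rest of the proof amounts to bounding that quantity.

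The key step is to show that at most $b+1$ paths of $\Phi$ have any given $v \in U_H$ as an endpoint. First, length-1 paths in $\Phi$ are identical to length-1 paths in $\mc{P}_L$ (since $\phi$ is the identity on them), and the paths in $\mc{P}_L \subseteq \mc{P}$ are pairwise vertex-disjoint because they arise as connected components of the symmetric difference $M^\star \Delta M_H$; hence at most one length-1 path in $\Phi$ touches $v$. Second, for any length-3 or length-5 path $\phi(P) \in \Phi$ with $v$ as an endpoint, \Cref{obs:behu12308} part~4 forces the endpoint edge of $\phi(P)$ at $v$ to lie in $T$, while \Cref{cl:intersectionatendpoint} part~2 says distinct paths of $\Phi$ use disjoint edge sets. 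So distinct length-3/5 paths of $\Phi$ through $v$ must use distinct $T$-edges incident to $v$, and there are at most $\deg_T(v) \leq b$ of those, since the $U_H$-side of the $(2,b)$-matching $T$ has maximum degree $b$. Adding the two cases yields the bound $b+1$.

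From here the rest is essentially bookkeeping. Each $P \in \Phi$ has exactly two endpoints, both in $U_H$, so $P$ conflicts with at most $2b$ other paths, and hence the maximum degree of $\mc{H}$ is at most $2b$. A standard greedy peeling (pick any remaining path, discard it together with its at-most-$2b$ conflicts, repeat) then yields an independent set $\mc{Q} \subseteq \Phi$ with $|\mc{Q}| \geq |\Phi|/(2b+1) \geq |\Phi|/(2b+3)$, and the paths in $\mc{Q}$ are vertex-disjoint by construction. The only mild subtlety is separating the three possible path lengths cleanly when counting incidences at $v$ (particularly making sure a length-1 path through $v$ does not ``double count'' with a length-3/5 path, which is immediate since the former uses no $T$-edge at $v$). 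Once \Cref{cl:intersectionatendpoint} and \Cref{obs:behu12308} are invoked, no further obstacle arises.
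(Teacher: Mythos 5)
Your proof takes essentially the same approach as the paper: a greedy peeling on $\Phi$ whose conflict degree is bounded via Claim~\ref{cl:intersectionatendpoint} (intersections are at endpoints, distinct paths are edge-disjoint) and Observation~\ref{obs:behu12308} (endpoint edges of longer paths lie in $T$, endpoints lie in $U_H$). Your per-vertex count of at most $b+1$ incident paths gives a max conflict degree of $2b$ and hence the marginally sharper $|\mc{Q}| \geq |\Phi|/(2b+1)$, whereas the paper counts per-path ($2b$ long paths plus $2$ length-one paths, without subtracting $\phi(P)$'s own $T$-edges) to get $|\Phi|/(2b+3)$; both are correct and suffice.
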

\begin{proof}
	We greedily construct $\mc{Q} \subseteq \Phi$ by iterating over the augmenting paths in $\Phi$ in an arbitrary order and including in $\mc{Q}$ any encountered augmenting path $\phi \in \Phi$ which does not intersect with augmenting paths already added to $\mc{Q}$. 
	
	Take an augmenting path $\phi(P) \in \Phi$. We know from \Cref{cl:intersectionatendpoint} part~1, that any other path $\phi(P') \in \Phi$ that intersects $\phi(P)$ must do so at an endpoint vertex of $\phi(P)$. Furthermore, by \Cref{cl:intersectionatendpoint} part~2, $\phi(P')$ and $\phi(P'')$ for $P' \not= P''$ cannot be connected to an endpoint of $\phi(P)$ via the same edge. Hence, any $\phi(P')$ intersecting $\phi(P)$ must do so via a unique edge to an endpoint of $P$. Since the two endpoint edges of any path $\phi(P')$ of size larger than one belong to $T$ by \Cref{obs:behu12308} part~4, and that the maximum degree of $T$ is $b$, there are at most $2b$ such paths intersecting $\phi(P)$. Moreover, at most one path $\phi(P')$ of length one can intersect each endpoint of $\phi(P)$ since $\phi(P') = P'$ for length-one paths and thus all of them are vertex-disjoint. Therefore, overall, $\phi(P)$ intersects at most $2b + 2$ other paths $\phi(P')$.
	
	Now every time that we add a path $\phi(P)$ to $\mc{Q}$, let us remove the remaining paths in $\Phi$ that intersect $\phi(P)$. By our discussion above, every time we add a path to $\mc{Q}$, we remove at most $2b + 2$ other paths from $\Phi$. Hence $|\mc{Q}| \geq \frac{1}{2b + 3} |\Phi|$.
\end{proof}

\subsubsection*{The Set $\Phi$ is Large}

The main statement that $\Phi$ is large is formally given as \Cref{cl:Phi-large}. Before proving it, we need two auxiliary \Cref{cl:xgc139827,cl:tcaalrrc123}.

\begin{claim}\label{cl:xgc139827}
	Let $P = \langle e_1, \ldots, e_k \rangle$ be an augmenting path of length three or five in $\mc{P}_L$. Let us denote the endpoints of $e_1$ and $e_k$ respectively by $(u_1, v_1)$ and $(v_k, u_k)$ where $v_1$ is the vertex connected to $e_2$ and $v_k$ is the vertex connected to $e_{k-1}$. If it holds that 
	\begin{equation}\label{eq:hgrc18923}
	(e_1 \in T \text{ or } \deg_T(v_1) \geq 2) \text{ and } (e_k \in T \text{ or } \deg_T(v_k) \geq 2),
	\end{equation}
	then $\phi(P) \not= \emptyset$.
\end{claim}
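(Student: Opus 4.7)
The plan is to exhibit valid replacement edges $e'_1, e'_k \in T$ so that the modified sequence $\langle e'_1, e_2, \ldots, e_{k-1}, e'_k \rangle$ is an augmenting path for $M_H$, which by definition is exactly what it means for $\phi(P)$ to be nonempty. The first step is to extract the structural consequences of $P$ being an augmenting path. Since $e_2, e_{k-1} \in M_H$, the endpoint-adjacent vertices $v_1$ and $v_k$ are both matched by $M_H$, i.e.\ in $V_H$, while $u_1, u_k$ are unmatched and lie in $U_H$. Moreover every intermediate vertex appearing in $e_2, \ldots, e_{k-1}$ is also matched by $M_H$ and so lies in $V_H$. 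Because $T \subseteq G_H$ is bipartite between $V_H$ and $U_H$, any $e'_1 \in T$ incident on $v_1$ has its other endpoint $u'_1 \in U_H$ (and symmetrically at $v_k$). This already buys us two things for free: the new endpoints $u'_1, u'_k$ cannot collide with any intermediate vertex of $P$ (those are in $V_H$), and they are automatically unmatched in $M_H$. Thus the only genuine obstruction to $\langle e'_1, e_2, \ldots, e_{k-1}, e'_k \rangle$ being an augmenting path is the single condition $u'_1 \neq u'_k$.

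Next I would use the hypothesis to argue we have enough flexibility to enforce $u'_1 \neq u'_k$. The hypothesis treats each side independently: at the $v_1$ end, either $e_1 \in T$ (in which case we may simply set $e'_1 := e_1$, giving $u'_1 = u_1$), or else $\deg_T(v_1) \ge 2$, in which case the two $T$-edges at $v_1$ lead to two \emph{distinct} vertices of $U_H$ (since $T$ is a simple graph), offering two candidates for $u'_1$. The analogous statement holds at $v_k$. I would then dispatch three cases. If $e_1, e_k \in T$, take $e'_1 = e_1, e'_k = e_k$ and note $u'_1 = u_1 \neq u_k = u'_k$ since $P$ is already a simple path. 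If exactly one, say $e_1$, is in $T$, fix $e'_1 = e_1$ so $u'_1 = u_1$; among the $\ge 2$ choices for $e'_k$, at most one has $u_1$ as its other endpoint, so a valid $e'_k$ exists with $u'_k \neq u_1$. If neither is in $T$, then $\deg_T(v_1), \deg_T(v_k) \ge 2$; pick any $e'_1$ with resulting endpoint $u'_1$, and again among the $\ge 2$ candidates at $v_k$ at most one hits $u'_1$, so an admissible $e'_k$ remains.

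In every case the resulting $\langle e'_1, e_2, \ldots, e_{k-1}, e'_k \rangle$ is a simple alternating walk with both endpoints unmatched in $M_H$, hence an augmenting path of the exact form demanded in the construction of $\phi$, so $\phi(P) \neq \emptyset$. The one thing that needs care (and is really the only obstacle) is the avoidance of $u'_1 = u'_k$; this is precisely where the ``$e_1 \in T$ or $\deg_T(v_1) \ge 2$'' hypothesis is tight, because having only a single $T$-edge at $v_1$ and a single $T$-edge at $v_k$ that happen to share their $U_H$-endpoint would genuinely force $\phi(P) = \emptyset$. The bipartite structure of $T$ between $V_H$ and $U_H$ plus the simplicity of $T$ together guarantee that whenever the hypothesis holds, at least one pair of choices evades this single collision.
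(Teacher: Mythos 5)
Your proof is correct and follows essentially the same approach as the paper's: you use the bipartite structure of $T \subseteq G_H$ to see that candidate replacement edges land in $U_H$, so they cannot collide with the intermediate vertices of $P$ (all in $V_H$), and then use the degree-at-least-two hypothesis to dodge the one remaining collision. Your explicit identification of $u'_1 \neq u'_k$ as the sole obstruction, together with the clean three-case split, is if anything a slightly tighter organization of the same argument.
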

\begin{proof}
	It suffices from our construction of $\phi(P)$ to show there are edges $e'_1, e'_k \in T$ such that $\langle e'_1, e_2, \ldots, e_{k-1}, e'_k \rangle$ is an augmenting path for $M_H$. We let $e'_1 \gets e_1$ if $e_1 \in T$ and similarly let $e'_k \gets e_k$ if $e_k \in T$. If $e_1 \not\in T$ but still (\ref{eq:hgrc18923}) holds, then $\deg_T(v_1) \geq 2$. Moreover, by construction of $T$ in~\Cref{alg:beats2/3}, these two edges of $v_1$ are in $U_H$, i.e., the vertices left unmatched by $M_H$. Note that none of these two edges of $v_1$ are connected to the intermediate vertices of $P$ since $P$ is an augmenting-path for $M_H$ and hence all of its intermediate vertices are matched by $M_H$ (and so do not belong to $U_H$). However, it could be that one of these edges is connected to the other endpoint of the augmenting path if the graph is non-bipartite. But this can happen for at most one of the edges of $v_1$ since there are no parallel edges in the graph, which leaves the other edge as a valid option for $e'_1$. In a similar way, if $e_k \not\in T$, we get $\deg_T(v_k) \geq 2$ under (\ref{eq:hgrc18923}) and can pick one of these two edges of $v_k$ to be $e'_k$ such that $\langle e'_1, e_2, \ldots, e_{k-1}, e'_k \rangle$ forms an augmenting path for $M_H$. This completes the proof of the claim that condition (\ref{eq:hgrc18923}) suffices to get $\phi(P) \not= \emptyset$.
\end{proof}

\begin{claim}\label{cl:tcaalrrc123}
	Let $P \in \mc{P}_L$, $e_1 = (u_1, v_1)$, and $e_k = (v_k, u_k)$ be as in \Cref{cl:xgc139827}. Suppose that condition (\ref{eq:hgrc18923}) does not hold for $P$. Then $\deg_T(u_1) \geq b$ or $\deg_T(u_k) \geq b$.
\end{claim}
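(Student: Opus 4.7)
The plan is to exploit the greedy maximality of the $(2,b)$-matching $T$ constructed in Phase~II.A, together with the lucky-path hypothesis that the two endpoint edges of $P$ both arrive in Phase~II.A. First I would observe that since $P \in \mc{P}_L$ has length three or five, \Cref{def:lucky} guarantees $e_1, e_k \in \GA$, so both edges were actually presented to the greedy construction of $T$. Moreover, since $P$ is an augmenting path for $M_H$, the endpoint vertices $u_1$ and $u_k$ are unmatched by $M_H$ and hence lie in $U_H$, whereas the intermediate-side vertices $v_1$ and $v_k$ are matched by $M_H$ and hence lie in $V_H$. Thus when the edge $e_1 = (u_1, v_1)$ was processed in step~(4b) of~\Cref{alg:beats2/3}, it was indeed an eligible $G_H$-edge to join $T$ on the $V_H$-side via $v_1$ and on the $U_H$-side via $u_1$; the same holds for $e_k$.

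Assume now that condition~(\ref{eq:hgrc18923}) fails. Since it is a conjunction of two clauses, at least one of them must fail, and by symmetry it suffices to treat the case in which the first clause fails, i.e., $e_1 \notin T$ and $\deg_T(v_1) < 2$. Because $e_1 \in \GA$ was considered by the greedy rule but not inserted, the insertion condition ``$\deg_T(v_1) < 2$ and $\deg_T(u_1) < b$'' must have been violated at the moment $e_1$ arrived, so at that moment either $\deg_T(v_1) \geq 2$ or $\deg_T(u_1) \geq b$. Since degrees in $T$ are nondecreasing during Phase~II.A, the current assumption $\deg_T(v_1) < 2$ rules out the former at arrival time as well. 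Hence $\deg_T(u_1) \geq b$ held at arrival and therefore also at the end of Phase~II.A. The symmetric argument applied to the failure of the second clause yields $\deg_T(u_k) \geq b$, and in either case the conclusion of the claim is obtained.

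The entire argument is short and mechanical; the only subtlety I anticipate is simply making sure to read off the correct dichotomy from the greedy insertion rule (that an $\GA$-edge between $U_H$ and $V_H$ is skipped only when its $V_H$-endpoint already has $T$-degree $2$ or its $U_H$-endpoint already has $T$-degree $b$) and using monotonicity of $T$-degrees to transport the condition at arrival time to the end of Phase~II.A. No deeper obstacle appears.
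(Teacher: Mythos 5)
Your proof is correct and follows essentially the same route as the paper: first show that $e_1,e_k$ lie in $G_H$ (using $P\in\mc{P}_L$ to get $e_1,e_k\in\GA$ and the augmenting-path structure to place $u_1,u_k\in U_H$ and $v_1,v_k\in V_H$), then read off from the greedy insertion rule that a skipped $G_H$-edge must have had a saturated $V_H$- or $U_H$-endpoint, and finally eliminate the $V_H$-alternative via monotonicity of $T$-degrees. The only cosmetic difference is that you state the monotonicity step explicitly, which the paper leaves implicit.
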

\begin{proof}
	We first argue that both $e_1$ and $e_k$ are part of graph $G_H$ of Phase II.A of~\Cref{alg:beats2/3}. Toward this, note that since $P \in \mc{P}_L$, we get from~\Cref{def:lucky} that $e_1, e_k \in \GA$. Moreover, since $P$ is by definition an augmenting path for $M_H$, its endpoints $u_1, u_k$ must be unmatched in $M_H$ (implying $u_1, u_k \in U_H$) and vertices $v_1, v_k$ which are intermediate vertices of $P$ must be matched in $M_H$ (implying $v_1, v_k \in V_H$). Hence, both $e_1$ and $e_k$ must belong to $G_H$ (refer to~\Cref{alg:beats2/3}). 
	
	Now let us suppose that (\ref{eq:hgrc18923}) is false since its first clause is false. That is, $(e_1 \not\in T \text{ and } \deg_T(v_1) < 2)$. In this case, knowing that $e_1 \in G_H$, the fact that~\Cref{alg:beats2/3} does not add $e_1$ to $T$ upon processing $e_1$ implies that either $\deg_T(v_1) \geq 2$ or $\deg_T(u_1) \geq b$ (see description of Algorithm~\ref{alg:beats2/3}). The former cannot hold or otherwise the first clause of (\ref{eq:hgrc18923}) would not be false. Hence it should be the case that $\deg_T(u_1) \geq b$. The same argument implies that if (\ref{eq:hgrc18923}) is false for its second clause, then $\deg_T(u_k) \geq b$. The proof is thus complete.
\end{proof}

\begin{claim}\label{cl:Phi-large}
	$|\Phi| \geq |\mc{P}_L| - \frac{4}{b} \cdot \mu(H)$.
\end{claim}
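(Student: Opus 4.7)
The plan is to bound the number of ``bad'' paths $P \in \mc{P}_L$ for which $\phi(P) = \emptyset$, since by \Cref{cl:intersectionatendpoint} part~2 the map $P \mapsto \phi(P)$ is injective on its nonempty image (any two distinct nonempty $\phi(P), \phi(P')$ share no edges and are therefore distinct paths), so $|\Phi|$ equals $|\mc{P}_L|$ minus the number of bad paths. Length-$1$ paths are never bad, because the construction sets $\phi(P) = P \neq \emptyset$ in that case, so I only need to control bad paths of length $3$ or $5$.

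For such a bad path $P$, the contrapositive of \Cref{cl:xgc139827} forces condition (\ref{eq:hgrc18923}) to fail, and \Cref{cl:tcaalrrc123} then yields $\deg_T(u_1) \geq b$ or $\deg_T(u_k) \geq b$, where $u_1, u_k \in U_H$ are the two endpoint vertices of $P$. I would then assign to each bad path one such ``heavy'' $U_H$-endpoint. Because $\mc{P}_L \subseteq \mc{P}$ consists of augmenting paths drawn from the connected components of the symmetric difference $M^\star \Delta M_H$, all paths in $\mc{P}_L$ are pairwise vertex-disjoint; hence the chosen heavy endpoints across different bad paths are \emph{distinct} vertices of $U_H$, each with $T$-degree at least $b$. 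Summing these degree lower bounds, $b$ times the number of bad paths is at most the total $T$-degree of the selected $U_H$-vertices, which in turn is at most $|T|$ (the total $U_H$-side degree of the bipartite graph $T \subseteq G_H$).

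Finally, I would bound $|T|$ via the $(2,b)$-matching structure of $T$: every vertex of $V_H$ has $T$-degree at most $2$ by construction, so $|T| \leq 2|V_H| = 4\mu(H)$, using $|V_H| = 2|M_H| = 2\mu(H)$. Combining, the number of bad paths is at most $4\mu(H)/b$, giving $|\Phi| \geq |\mc{P}_L| - 4\mu(H)/b$ as desired. The only real subtlety is recognizing that the vertex-disjointness of paths in $\mc{P}_L$---inherited from $\mc{P} \subseteq M^\star \Delta M_H$---is precisely what lets the per-path degree lower bound of $b$ sum cleanly against the global bound $|T| \leq 4\mu(H)$; without it, two bad paths could share a heavy endpoint and the counting would over-count.
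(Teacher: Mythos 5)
Your proof is correct and follows essentially the same double-counting argument as the paper: bad paths force a $U_H$-endpoint of $T$-degree at least $b$, vertex-disjointness of $\mc{P}_L$ makes these endpoints distinct, and $|T|\leq 4\mu(H)$ from the degree-$2$ cap on $V_H$ closes the bound. You are slightly more careful than the paper in explicitly invoking Claim~\ref{cl:intersectionatendpoint}~(2) to justify that $P\mapsto\phi(P)$ is injective on its nonempty image (the paper just writes $\Phi=\mc{P}_L\setminus\mc{X}$, implicitly identifying $\phi(P)$ with $P$), but this is a presentational nicety, not a different argument.
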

\begin{proof}
	Let $\mc{X} := \{ P \in \mc{P}_L \mid \phi(P) = \emptyset\}$. By definition, $\Phi = \mc{P}_L \setminus \mc{X}$, thus 
	\begin{equation}\label{eq:xdf9981127}
		|\Phi| = |\mc{P}_L| - |\mc{X}|.
	\end{equation}
	It, therefore, suffices to upper bound the size of $\mc{X}$. We do so by double counting the number of edges in $T$.
	
	 Recall that for any $P \in \mc{P}_L$, $|P| \in \{1, 3, 5\}$ by definition of $\mc{P}_L$. Moreover, if $|P| = 1$, then by construction $\phi(P) = P \not= \emptyset$ and thus $P \not\in \mc{X}$. Hence for any $P \in \mc{X}$ it holds that $|P| \in \{3, 5\}$. Now, by \Cref{cl:xgc139827}, condition (\ref{eq:hgrc18923}) should not hold for any $P \in \mc{X}$. This further implies from \Cref{cl:tcaalrrc123} that at least one of the endpoints of each $P \in \mc{X}$ must have degree at least $b$ edges in $T$. Since $\mc{X} \subseteq \mc{P}_L$ and all augmenting paths in $\mc{P}_L$ are vertex disjoint, this means that the endpoints of paths in $\mc{X}$ collectively have at least $|\mc{X}| b$ edges in $T$. Moreover, all of these vertices must be on the $U_H = V \setminus V(M_H)$ partition of graph $G_H$ since each $P \in \mc{X} \subseteq \mc{P}_L$ is an augmenting path for $M_H$ by definition of $\mc{P}_L$. Now we give an alternative way of counting the edges in $T$. Note that any vertex in partition $V_H = V(M_H)$ of $G_H$, has at most 2 edges in $T$ by construction of $T$ in Algorithm~\ref{alg:beats2/3}. Hence, the number of edges in $T$ can be upper bounded by $2 \cdot |V(M_H)| = 2 \cdot 2|M_H| = 4|M_H|$. As such, we get $|\mc{X}| b \leq 4 |M_H|$ and thus $|\mc{X}| \leq 4|M_H|/b$. Plugging this upper bound for $|\mc{X}|$ into (\ref{eq:xdf9981127}) and noting that $|M_H| = \mu(H)$ completes the proof. 
\end{proof}

We are finally ready to formally prove \Cref{lem:vertex-disjoint-paths-via-T}:

\begin{proof}[Proof of \Cref{lem:vertex-disjoint-paths-via-T}]
	Let $\mc{Q} \subseteq \Phi$ be as in \Cref{cl:Q>Phi/b}. All the paths in $\mc{Q}$ are vertex-disjoint. Also:
	$$
		|\mc{Q}|
		\stackrel{\text{\Cref{cl:Q>Phi/b}}}{\geq}
		\frac{|\Phi|}{2b+3}
		\stackrel{\text{\Cref{cl:Phi-large}}}{\geq}
		\frac{1}{2b+3}\left(|\mc{P}_L| - \frac{4}{b} \mu(H)\right).
	$$
	The proof of \Cref{lem:vertex-disjoint-paths-via-T} is thus complete.
\end{proof}

\subsection{Proof of \Cref{lem:whpluckyap}}\label{sec:expluckyap}

We first lower bound $\E|\mc{P}_L|$ and then prove \Cref{lem:whpluckyap} via a concentration bound.

\begin{claim}\label{cl:explucky}
	$\E|\mc{P}_L| \geq \gamma^2 (1-\gamma) |\mc{P}|$.
\end{claim}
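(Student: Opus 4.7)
The plan is to exploit the choice $\tau \sim \BB((1-\epsilon)m, \gamma)$ to show that, conditioned on Phase I, each of the $N := (1-\epsilon)m$ edges of $G_{\geq \epsilon m}$ is placed independently into $\GA$ with probability $\gamma$ and into $\GB$ otherwise. Writing $\pi$ for the uniformly random permutation that determines the stream order of Phase II (independent of $\tau$), and $X_e := \mathbf{1}[\pi^{-1}(e) \leq \tau]$ for the indicator that $e \in \GA$, this amounts to showing that $(X_e)_{e \in G_{\geq \epsilon m}}$ is iid $\mathrm{Bernoulli}(\gamma)$. For any pattern $(b_e) \in \{0,1\}^N$ with $\sum_e b_e = k$, conditioning on $\tau = k$ pins down that $\GA$ consists of the first $k$ positions of the Phase II stream, and the uniform permutation $\pi$ realizes that particular pattern with probability $k!(N-k)!/N!$; multiplying by $\Pr[\tau = k] = \binom{N}{k}\gamma^k(1-\gamma)^{N-k}$ yields $\gamma^k(1-\gamma)^{N-k}$, exactly the desired product form. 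This is the crux of the argument and the sole place where the binomial choice of $\tau$ is used.

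With independence in hand, the bound on $\E|\mc{P}_L|$ follows by linearity of expectation. For each $P \in \mc{P}$ I compute $\Pr[P \text{ is lucky}]$ by inspecting the edges of $P$ whose placement in $\GA/\GB$ matters: these are the non-matching edges of $P$ (with respect to $M_H$), which lie in $M^\star \subseteq G_{\geq \epsilon m}$ and are therefore governed by the iid assignment above. A length-one path is lucky iff its single edge lands in $\GB$, which happens with probability $1-\gamma$. A length-three path is lucky iff both endpoint edges land in $\GA$, which happens with probability $\gamma^2$ by independence. A length-five path is lucky iff its two endpoint edges land in $\GA$ and its middle edge lands in $\GB$, which happens with probability $\gamma^2(1-\gamma)$ by independence. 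Since each of $1-\gamma$, $\gamma^2$, and $\gamma^2(1-\gamma)$ is at least $\gamma^2(1-\gamma)$ for $\gamma \in [0,1]$, summing over $P \in \mc{P}$ yields $\E|\mc{P}_L| = \sum_{P \in \mc{P}} \Pr[P \text{ is lucky}] \geq \gamma^2(1-\gamma)\cdot |\mc{P}|$.

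The main obstacle is the first step. If $\tau$ were a fixed threshold, the events $\{e \in \GA\}$ would be positively correlated and the clean product factorization would fail, so the per-path probabilities above would not decouple. Randomizing $\tau$ according to exactly the binomial distribution is the decoupling device that makes the coordinates iid and reduces the proof to a straightforward case check over $|P| \in \{1,3,5\}$.
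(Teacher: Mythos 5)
Your proof is correct and follows essentially the same route as the paper: condition on Phase I, use the binomial choice of $\tau$ to get that each Phase-II edge lands in $\GA$ independently with probability $\gamma$, compute $\Pr[P\text{ lucky}]$ case-by-case over $|P|\in\{1,3,5\}$, and sum by linearity. The one thing you add is the explicit pattern-probability calculation $\binom{N}{k}\gamma^k(1-\gamma)^{N-k}\cdot\frac{k!(N-k)!}{N!}=\gamma^k(1-\gamma)^{N-k}$ justifying the iid claim, which the paper only asserts informally; that is a welcome detail but not a different argument.
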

\begin{proof}
	Recall again that we regard $\mc{P}$ as fixed as we have conditioned on the outcome of Phase~I. Now whether or not an augmenting path $P \in \mc{P}$ turns out to be lucky depends on the arrival ordering of the edges in $G_{\geq \epsilon m}$. We first show that for any $P \in \mc{P}$, 
	\begin{equation}\label{eq:hgl192387}
		\Pr[P \in \mc{P}_L] \geq \gamma^2(1-\gamma).
	\end{equation}
	(Where, recall, we hide the condition on Phase~I for brevity in our probabilistic statements.) 
	
	The key insight is to note that once we condition on $G_{< \epsilon m}$, an edge $e$ that is to arrive in Phase~II  belongs to $\GA$ independently (than other edges of Phase~II) with probability $\gamma$ and belongs to $\GB$ otherwise (i.e., with probability $(1-\gamma)$). As already discussed at the start of \Cref{sec:augmentation}, this follows from the fact that we do not fix the size of Phase~II.A in Algorithm~\ref{alg:beats2/3} but rather choose it from distribution $B((1-\epsilon)m, \gamma)$. Having this independence, we can prove (\ref{eq:hgl192387}) as follows:
	
	\smparagraph{Proof of Inequality $(\ref{eq:hgl192387})$.} Take an augmenting path $P \in \mc{P}$. Since $\mc{P}$ includes augmenting paths of length up to five, $|P| \in \{1, 3, 5\}$. We prove (\ref{eq:hgl192387}) for all three cases one by one.
	
	First, consider the case where $P$ is of length five and let $P = \langle e_1, e_2, e_3, e_4, e_5 \rangle$. By Definition~\ref{def:lucky}, $P$ is lucky if $e_1, e_5 \in G_{II.A}$ and $e_3 \in G_{II.B}$. The former two events happen with probability $\gamma$ each and the latter happens with probability $(1-\gamma)$. Since the three events, as discussed, are independent, we have 
	\begin{flalign*}
		\Pr[P \in \mc{P}_L] = \gamma^2 (1-\gamma) \qquad\qquad \forall P = \langle e_1, e_2, e_3, e_4, e_5 \rangle \in \mc{P}. 
	\end{flalign*}
	For length-three paths, only the two endpoints should appear in Phase~II.A, hence
	\begin{flalign*}
		\Pr[P \in \mc{P}_L] = \gamma^2 \geq \gamma^2(1-\gamma) \qquad\qquad \forall P = \langle e_1, e_2, e_3 \rangle \in \mc{P}. 
	\end{flalign*}
	For length-one paths, the single edge of the path should appear in Phase II.B, hence:
	\begin{flalign*}
		\Pr[P \in \mc{P}_L] = (1-\gamma) \geq \gamma^2(1-\gamma) \qquad\qquad \forall P = \langle e_1 \rangle \in \mc{P}. 
	\end{flalign*}
	The combination of these cases completes the proof of inequality (\ref{eq:hgl192387}).
	
	\smparagraph{Proof of \Cref{lem:whpluckyap} via inequality $(\ref{eq:hgl192387})$.} By linearity of expectation, we have
	$$
	\E|\mc{P}_L| =
	\sum_{P \in \mc{P}} \Pr[P \in \mc{P}_L] \stackrel{(\ref{eq:hgl192387})}{\geq} \sum_{P \in \mc{P}} \gamma^2(1-\gamma) = \gamma^2(1-\gamma)|\mc{P}|.\qedhere
	$$
\end{proof}

We are now ready to prove \Cref{lem:whpluckyap} via a simple Chernoff bound.

\begin{proof}[Proof of \Cref{lem:whpluckyap}]
	Whether or not an augmenting path $P \in \mc{P}$ turns out to be lucky depends on how its odd edges belong to $G_{II.A}$ and $G_{II.B}$. Since all the augmenting paths in $\mc{P}$ are by definition vertex-disjoint, and since as discussed edges of $G_{\geq \epsilon m}$ belong to $G_{II.A}$ and $G_{II.B}$ independently from each other, we get that the paths in $\mc{P}$ belong to $\mc{P}_L$ independently from each other. By a simple Chernoff bound (\Cref{prop:chernoff}), letting $\delta = \sqrt{\frac{15 \ln n}{\E|\mc{P}_L|}} > 0$, we have
	$$
		\Pr\Big( |\mc{P}_L| \leq (1-\delta)\E|\mc{P}_L| = \E|\mc{P}_L| - \sqrt{15 \E|\mc{P}_L| \ln n} \Big) \leq 2\exp \left( - \frac{\delta^2 \cdot \E|\mc{P}_L|}{3} \right) \leq  2\exp(-5 \ln n) = 2n^{-5}.
	$$
	Since $\E|\mc{P}_L| \geq \gamma^2(1-\gamma)|\mc{P}|$ by \Cref{cl:explucky} and $\E|\mc{P}_L| \leq |\mc{P}| \leq \mu(G)$ this implies that
	$$
		\Pr\Big( |\mc{P}_L| \leq \gamma^2(1-\gamma)|\mc{P}| - \sqrt{15 \mu(G) \ln n} \Big) \leq 2n^{-5}.\qedhere
	$$
\end{proof}


\section{A Lower Bound in Random-Order Streams}\label{sec:lower}

We also prove a lower bound on the approximation ratio of semi-streaming algorithms for bipartite matching on random-order streams. 

\begin{theorem}\label{thm:lower-stream}
	There is a parameter ${\eps_1} = \Theta(\nicefrac{1}{\log{n}})$ such that the following is true. Any streaming algorithm 
	that outputs a $(1-{\eps_1})$-approximation for maximum bipartite matching, in expectation or with constant probability, given one pass over a stream of edges of the input graph in a random order
	requires $n^{1+\Omega(\nicefrac{1}{\log\log{n}})}$ space. 
\end{theorem}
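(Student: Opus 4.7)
The plan is to derive this lower bound via a communication-complexity reduction that exploits the combinatorial density of Ruzsa-Szemerédi (RS) graphs. The target exponent $n^{1+\Omega(1/\log\log n)}$ matches the best known constructions of $(r,t)$-RS graphs with $rt = n^{1+\Omega(1/\log\log n)}$, so the bound will ultimately flow from that object, and the $\Theta(1/\log n)$ approximation gap will arise naturally because each induced matching in these constructions has size only $t = \Theta(n/\log n)$.

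First, I would fix a bipartite RS graph $\GRS$ on $n$ vertices per side, consisting of $r = n^{\Omega(1/\log\log n)}$ pairwise edge-disjoint induced matchings $M_1, \ldots, M_r$, each of size $t = \Theta(n/\log n)$. On top of this, I would embed a hard instance of an XOR-type communication game (as hinted at by the $\GGxor$ and $\eventxor$ macros in the preamble): pick a uniformly random index $\jstar \in [r]$ together with random Boolean labels for the edges of $M_\jstar$, and build an input graph $G$ consisting of $M_\jstar$ plus a ``signal'' gadget that makes $M_\jstar$ extendable to a near-perfect matching of size $n$, together with, for every $j \neq \jstar$, a carefully chosen noise structure that keeps $M_\jstar$ statistically hidden among the other $M_j$'s. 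The key combinatorial fact we exploit is that because each $M_j$ is an induced matching, any semi-streaming algorithm that has seen only $o(t)$ of its edges is information-theoretically unable to distinguish $M_\jstar$ from a random $M_j$.

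Second, I would argue that a $(1-\eps_1)$-approximate matching for $\eps_1 = \Theta(1/\log n)$ must recover $M_\jstar$ up to at most $\eps_1 \cdot n = o(t)$ errors, and by the RS structure this pins down at least $\log r = \Omega(\log n / \log\log n)$ bits of information about $\jstar$, together with $\Omega(t) = \Omega(n/\log n)$ bits needed to represent $M_\jstar$ itself. A direct-sum style argument, or an information-cost lower bound against the underlying XOR-problem (using $\mi{\rProt}{\jstar}$ and the standard conversion from one-way communication to streaming space), then forces the algorithm to keep at least $n^{1+\Omega(1/\log\log n)}$ bits in its memory.

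The main obstacle is the random-order assumption: in the adversarial-order analogue one simply presents the matchings in consecutive blocks, but a uniformly random permutation scatters each $M_i$ across the entire stream. I would handle this by a bucketing and symmetrization argument: (i) partition the stream into $r$ equal-length buckets and show by a Chernoff plus union bound that, with probability $1-o(1)$, each bucket receives roughly $t/r$ edges from each $M_i$; (ii) simulate an $r$-player one-way protocol in which player $i$ forwards the algorithm's internal state after bucket $i$, and argue that the random subsampling inside each bucket preserves the XOR hardness up to a $\polylog(n)$ loss because the XOR of a uniformly random subset of bits is still uniform. The most delicate step will be showing that this random subsampling does not destroy the combinatorial rigidity of the RS structure; I would address it by averaging the information-cost argument over the random permutation and invoking Yao's minimax principle to fix a worst-case ordering that is still ``typical enough'' to retain the full communication lower bound. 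The corollary \Cref{cor:lower-stream} referenced in the introduction then follows by re-expressing the tradeoff between $\eps$ and space.
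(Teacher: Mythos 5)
Your proposal correctly identifies the two load-bearing ingredients, RS graphs and an XOR-based hiding device, but the route you sketch diverges from the paper's in a way that leaves genuine gaps, and a couple of your steps would not go through as stated.

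The paper does \emph{not} bucket the stream or pass to an $r$-player one-way protocol. It works directly in a two-party one-way model with a \emph{random edge partition} (each edge of the fixed input graph is given independently to Alice or Bob with probability $1/2$), and uses the standard fact that a random-order streaming space bound follows from a communication bound in that model (as in Chakrabarti--Cormode--McGregor). This sidesteps the entire bucketing/Chernoff/union-bound machinery you describe, which is both unnecessary and, as written, incomplete: you never actually show how simulating $r$ players with the algorithm's internal state yields the stated bound, and the direct-sum step in your item (2) is left as an assertion.

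The more serious gap is in how you use the XOR idea. In the paper, the XOR gadget is a length-$\Theta(\log N)$ \emph{path} $\Gxor_v(x_{v,1},\ldots,x_{v,k})$ attached to each vertex $v$ of the RS graph, chosen so that $x_{v,1}\oplus\cdots\oplus x_{v,k}=1$ exactly when $v\in V(\MRS_{\jstar})$; Lemma~\ref{lem:xor-gadget-match} translates this parity into whether the gadget's maximum matching must or need not consume $v$. The hiding property (Lemma~\ref{lem:xor-gadget-hide} plus Claim~\ref{clm:lower-hide-prob}) is that with probability $1-o(1)$ Alice misses at least one of the two edges encoding \emph{some} bit $x_{v,i}$ in \emph{every} gadget, so the parity, and hence $\jstar$, is uniform given her input. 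Your item (ii) replaces this with the assertion that ``the XOR of a uniformly random subset of bits is still uniform.'' That is false in general and in particular fails when Alice happens to see all bits of a gadget; the whole point of choosing $k=\Theta(\log N)$ is to drive the probability of that event below $1/\poly(N)$ so a union bound over $2N$ gadgets works. Without pinning down this event you cannot justify that $\jstar$ stays hidden.

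You also omit a second randomization that the paper needs: each RS-graph edge is independently kept or dropped with probability $1/2$ (the $\rZ$-variables). This is what forces the information-theoretic argument in Lemma~\ref{lem:lower-info} and Lemma~\ref{lem:lower-info-bound} to work, since even if Bob knows $\jstar$ (he is explicitly given it), he must learn from Alice's message \emph{which} of her $\MRS_{\jstar}$ edges are actually present; the $\mi{\rZ_A(\rJ)}{\rProt\mid\rB}$ bound is what limits the recovered matching. Your proposal has no analogue, and instead hand-waves a ``direct-sum style argument'' to bound how much of $M_{\jstar}$ the algorithm can reconstruct. Without the edge-dropping, Bob might reconstruct $M_{\jstar}$ from his own random half of the RS edges plus the knowledge of $\jstar$, and the argument collapses. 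Finally, you have the RS parameters transposed: in the paper (Definition~\ref{def:rs-graph-short}) $r$ is the induced-matching \emph{size} (linear in $N$) and $t$ is the \emph{number} of induced matchings ($N^{\Omega(1/\log\log N)}$), which is what makes the $\Theta(1/\log n)$ gap come from attaching $\Theta(\log N)$-length gadgets to $\Theta(N)$ vertices, not from having matchings of size $\Theta(n/\log n)$.
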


\Cref{thm:lower-stream} provides the first non-trivial lower bound for approximating matching in random-order streams. Prior to our work, only a lower bound of $\Omega(n^2)$ space was known for finding an \emph{exact} maximum matching~\cite{ChakrabartiCM08}. 

A direct corollary of this result is then the following. 

\begin{corollary}\label{cor:lower-stream}
	There is no semi-streaming algorithm  for maximum bipartite matching that for every $\eps > 0$, achieves a $(1-\eps)$-approximation in $O(\exp((1/\eps)^{0.99}) \cdot n \cdot\poly\log{(n)})$ space. 
\end{corollary}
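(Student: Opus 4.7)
The plan is to derive Corollary~\ref{cor:lower-stream} from Theorem~\ref{thm:lower-stream} by a direct contradiction, via a suitable instantiation of the hypothesized algorithm at the parameter $\eps_1$ provided by the theorem. I would suppose, toward a contradiction, the existence of an algorithm $\mathcal{A}$ that for every $\eps > 0$ achieves a $(1-\eps)$-approximation in $O(\exp((1/\eps)^{0.99}) \cdot n \cdot \polylog n)$ space on random-order streams, and then, for each $n$, run $\mathcal{A}$ on the hard family of Theorem~\ref{thm:lower-stream} with $\eps$ set to the corresponding $\eps_1 = \eps_1(n) = \Theta(1/\log n)$.

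The core step is a short asymptotic comparison between the two space bounds. Since $1/\eps_1 = \Theta(\log n)$, we have $(1/\eps_1)^{0.99} = \Theta((\log n)^{0.99})$, and because $0.99 < 1$ there is a genuine gap $(\log n)^{0.99} = o(\log n / \log\log n)$. This gives
\[
\exp\bigl((1/\eps_1)^{0.99}\bigr) \;=\; n^{o(1/\log\log n)},
\]
so the space used by $\mathcal{A}$ on this family is at most
\[
O\bigl(\exp((1/\eps_1)^{0.99}) \cdot n \cdot \polylog n\bigr) \;=\; n^{1 + o(1/\log\log n)},
\]
after absorbing the $\polylog n$ factor into the lower-order exponent (which is harmless since $\polylog n = n^{o(1/\log\log n)}$ as well). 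This contradicts the lower bound $n^{1+\Omega(1/\log\log n)}$ of Theorem~\ref{thm:lower-stream} for $(1-\eps_1)$-approximation, establishing the corollary.

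I do not expect a substantive obstacle here; this is purely a rate calculation using the lower bound already established in Theorem~\ref{thm:lower-stream}. The one point to be careful about is simply to ensure the two exponents of $n$ do not collide: the strict separation between $(\log n)^{0.99}$ and $\log n / \log\log n$ (guaranteed by the exponent $0.99$ in the statement of the corollary) is precisely what creates the gap $o(1/\log\log n)$ versus $\Omega(1/\log\log n)$, and any exponent strictly smaller than $1$ in place of $0.99$ would work equally well with the same argument.
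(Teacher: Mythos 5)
Your proposal is correct and is exactly the intended derivation: the paper states \Cref{cor:lower-stream} as ``a direct corollary'' of \Cref{thm:lower-stream} without spelling out the calculation, and your instantiation at $\eps=\eps_1(n)=\Theta(1/\log n)$ together with the observation that $\exp\bigl(\Theta((\log n)^{0.99})\bigr)\cdot\polylog n = n^{\,o(1/\log\log n)}$, contradicting the $n^{1+\Omega(1/\log\log n)}$ lower bound, is precisely the intended rate comparison.
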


The rest of this section is dedicated to the proof of~\Cref{thm:lower-stream}. The proof of this theorem is based on a new lower bound for (robust) one-way communication complexity of matching that we prove in this paper. 
In the following, we first provide the necessary background and preliminaries and then present the lower bound proof.  

\subsection{Preliminaries for the Lower Bound}
\paragraph{\rs graphs.}
For any graph $G$, a matching $M$ of $G$ is an \emph{induced matching} iff for any
two vertices $u$ and $v$ that are matched in $M$, if $u$ and $v$ are not matched to each other, then there is no edge between $u$ and $v$ in $G$.

\begin{definition}[\rs graph~\cite{RuszaS78}]\label{def:rs-graph-short}
  A graph $G$ is an $(r,t)$-\emph{\rs (RS) graph} iff its edges
  consists of $t$ \emph{pairwise disjoint induced} matchings $M_1,\ldots,M_t$, each of size $r$.
\end{definition}


RS graphs, first introduced by Ruzsa and Szemer\'{e}di~\cite{RuszaS78}, have been extensively studied as they arise naturally in property
testing, PCP constructions, additive combinatorics, streaming lower bounds, etc. (see, e.g.,~\cite{TaoV06,HastadW03,FischerLNRRS02,BirkLM93,AlonMS12,GoelKK12,Alon02,AlonS06,FoxHS15}).  

\paragraph{Communication model.} We work in the standard two-party communication model of Yao~\cite{Yao79} and in particular in the one-way model (see the excellent textbook by Kushilevitz and Nisan~\cite{KushilevitzN97} for the standard definitions). 
The only slight derivation is that we focus on randomly partitioned inputs, wherein the input graph is still chosen adversarially, but every edge in the graph is sent to one of the players chosen independently
and uniformly at random. To our knowledge, this model was first introduced by~\cite{ChakrabartiCM08}. We note that the main resource of interest in this model is the \emph{communication} and in particular the players
are assumed to be {computationally unbounded}. 

In the communication problem we study for bipartite matching, we have an $n$-vertex bipartite graph $G=(L,R,E)$ whose edges are partitioned \emph{randomly} into $\EA$ and $\EB$ given to Alice and Bob, respectively (both players know  $L$ and $R$). 
The goal is to  compute an approximate maximum matching of $G$ by Alice sending a single message to Bob and Bob outputting the solution. The goal is to understand the communication-approximation tradeoff for the problem. 

We note that lower bounds on communication complexity in this model immediately imply space lower bounds for streaming algorithm in random-order streams; see ,e.g.~\cite{ChakrabartiCM08}. 

\subsection{High Level Approach} 

Starting from~\cite{GoelKK12}, all known super-linear-in-$n$ communication lower bounds for \emph{approximating} the maximum
matching problem~\cite{GoelKK12,Kapralov13,Konrad15,AssadiKLY16,AssadiKL17,Kapralov21} are via constructions 
based on \rs (RS) graphs~(\Cref{def:rs-graph-short})\footnote{The only exception is the very recent work of~\cite{DarkK20} in a communication model that allows for edge deletions.}. Our work in this paper is no exception (see~\cite{GoelKK12} for a formal reason why RS graphs are necessary for any lower bound
in the one-way model). However, our key novelty  is a way of making these constructions 
``robust'' so that they can be used even under the random partitioning of the input. 

In more details, the lower bound of~\cite{GoelKK12} gives Alice an RS graph with induced matchings of size $\Theta(n)$ each, and gives Bob  
an ``outside'' matching that matches all vertices of this RS graph, except for one of the induced matchings unknown to Alice; this construction is such that any better-than-$(2/3)$-approximation protocol
needs to include many edges from this special induced matching. However, since Alice is unaware of the identity of this special matching, she is unable to communicate its edges 
with a low communication (much less than the density of the graph).

There are two main challenges in extending this bound to the random partition model: $(i)$ the RS graph edges are now partitioned between both players, and $(ii)$ Alice receives a random subset of edges in the outside matching.  
The first challenge is not that problematic as Alice still receives half the edges of the RS graph in expectation. But the second challenge is   more serious as revealing even a small fraction of 
edges in the outside matching is enough to identify the special induced matching to Alice, hence, enabling her to  focus on sending those edges,  breaking the lower bound. 

In order to circumvent this challenge, we replace edges of this outside matching with a new gadget based on the XOR function. We then show that if Alice misses at least one edge from every one of the XOR-gadgets during the random partitioning 
of the input, the identity of the special induced matching of the RS graph remains hidden to her. By picking these gadgets appropriately, we ensure that this event happens with a  large 
probability and use this in careful information-theoretic argument (instead of the combinatorial arguments in~\cite{GoelKK12}) to conclude the 
proof.

\subsection{The XOR-Gadget} 

We introduce the  following gadget as a key component of our lower bound construction. 

\begin{definition}[\textbf{XOR-Gadget}]
	Let $k > 1$ be an \emph{odd} integer and $(x_1,\ldots,x_k)$ be a $k$-tuple of bits. We define the \textbf{XOR-gadget} of $(x_1,\ldots,x_k)$ as the following graph $\Gxor(x_1,\ldots,x_k)$: 
	\begin{itemize}
		\item There are $2k$ vertices $\set{s,a_1,b_1,a_2,b_2,\ldots,a_{k-1},b_{k-1},t}$ in $\Gxor$. We call $s$ the \emph{start vertex} and $t$ the \emph{final vertex}. 
		\item There are $2k-2$ edges in $\Gxor$ defined as follows using the bits $x_1,\ldots,x_k$: 
		\begin{itemize}
			\item $s$ is connected to $a_1$ if $x_1 = 0$ and otherwise is connected to $b_1$. Similarly, $t$ is connected to $a_{k-1}$ if $x_k = 0$ and otherwise is connected to $b_{k-1}$. 
			\item For any $i \in \set{2,\ldots,k-1}$, $a_{i-1},b_{i-1}$ are connected to $a_i,b_i$, respectively, if $x_i = 0$ and to $b_i,a_i$ otherwise. 
		\end{itemize}
	\end{itemize}
		We use $\Exor(x_i)$ to denote the set of two edges in the gadget that depend on the bit $x_i$. 
	
\end{definition}
\noindent
\Cref{fig:xor}  gives an illustration of XOR-gadgets.

\bigskip

\begin{figure}[h!]
\centering
\subcaptionbox{\footnotesize An example when $k=7$, $(x_1,\ldots,x_7) = (0,0,1,0,0,1,0)$ and so $x_1 \oplus x_2 \oplus \cdots \oplus x_7 = 0$. }%
  [1\linewidth]{

\begin{tikzpicture}

\tikzset{vertex/.style={circle, blue, line width=1pt, fill=blue!0,  draw, inner sep=0pt, minimum width=12pt, minimum height=12pt}}
\tikzset{vset/.style={rectangle, rounded corners, black, fill=black!0, draw, line width=1pt, inner sep=0pt, minimum width=18pt, minimum height=90pt}}

\node[vertex] (s){\scriptsize $s$};
\node [vertex] (A0y0) at ($(s) + (50pt,9pt)$){\scriptsize $a_1$};
\node [vertex] (A0y1) at ($(A0y0)+(0,-18pt)$){\scriptsize $b_1$};

\foreach \x in {1,...,5}{
 	 \pgfmathtruncatemacro{\xx}{\x-1}
	 \pgfmathtruncatemacro{\xxx}{\x+1}
	\node [vertex] (A\x y0) at ($(A\xx y0)+(50pt,0)$){\scriptsize $a_{\xxx}$};
	\node[vertex] (A\x y1) at ($(A\xx y1) + (50pt,0)$){\scriptsize $b_{\xxx}$};
}
\node[vertex] (t) at ($(A5y0) + (50pt,-9pt)$){\scriptsize $t$};

\node (x1) at ($(s)+(25pt,24pt)$) {\scriptsize $x_1=0$};
\node (x2) at ($(x1)+(50pt,0pt)$) {\scriptsize $x_2=0$};
\node (x3) at ($(x2)+(50pt,0pt)$) {\scriptsize $x_3=1$};
\node (x4) at ($(x3)+(50pt,0pt)$) {\scriptsize $x_4=0$};
\node (x5) at ($(x4)+(50pt,0pt)$) {\scriptsize $x_5=0$};
\node (x6) at ($(x5)+(50pt,0pt)$) {\scriptsize $x_6=1$};
\node (x7) at ($(x6)+(50pt,0pt)$) {\scriptsize $x_7=0$};

\draw[line width=1pt, ForestGreen]
	(s.east) -- (A0y0.west)
	(A0y1.east) -- (A1y1.west)
	(A1y0.east) -- (A2y1.west)
	(A2y0.east) -- (A3y0.west)
	(A3y1.east) -- (A4y1.west)
	(A4y0.east) -- (A5y1.west)
	(A5y0.east) -- (t.west);

\draw[line width=0.5pt, ForestGreen, dashed]
	(A0y0.east) -- (A1y0.west)
	(A1y1.east) -- (A2y0.west)
	(A2y1.east) -- (A3y1.west)
	(A3y0.east) -- (A4y0.west)
	(A4y1.east) -- (A5y0.west);
	
	\node[rectangle, black, fill=white, minimum width=400pt, minimum height=1pt, inner sep=1pt](r) at ($(s)+(175pt,-1cm)$){};
\end{tikzpicture}} 
  
  \vspace{0.5cm}
  
\subcaptionbox{\footnotesize An example when $k=7$, $(x_1,\ldots,x_7) = (1,0,1,1,1,1,0)$ and so $x_1 \oplus x_2 \oplus \cdots \oplus x_7 = 1$.}%
  [1\linewidth]{\begin{tikzpicture}
\tikzset{vertex/.style={circle, blue, line width=1pt, fill=blue!0,  draw, inner sep=0pt, minimum width=12pt, minimum height=12pt}}
\tikzset{vset/.style={rectangle, rounded corners, black, fill=black!0, draw, line width=1pt, inner sep=0pt, minimum width=18pt, minimum height=90pt}}

\node[vertex] (s){\scriptsize $s$};
\node [vertex] (A0y0) at ($(s) + (50pt,9pt)$){\scriptsize $a_1$};
\node [vertex] (A0y1) at ($(A0y0)+(0,-18pt)$){\scriptsize $b_1$};
\foreach \x in {1,...,5}{
 	 \pgfmathtruncatemacro{\xx}{\x-1}
	 \pgfmathtruncatemacro{\xxx}{\x+1}
	\node [vertex] (A\x y0) at ($(A\xx y0)+(50pt,0)$){\scriptsize $a_{\xxx}$};
	\node[vertex] (A\x y1) at ($(A\xx y1) + (50pt,0)$){\scriptsize $b_{\xxx}$};
}
\node[vertex] (t) at ($(A5y0) + (50pt,-9pt)$){\scriptsize $t$};

\node (x1) at ($(s)+(25pt,24pt)$) {\scriptsize $x_1=1$};
\node (x2) at ($(x1)+(50pt,0pt)$) {\scriptsize $x_2=0$};
\node (x3) at ($(x2)+(50pt,0pt)$) {\scriptsize $x_3=1$};
\node (x4) at ($(x3)+(50pt,0pt)$) {\scriptsize $x_4=1$};
\node (x5) at ($(x4)+(50pt,0pt)$) {\scriptsize $x_5=1$};
\node (x6) at ($(x5)+(50pt,0pt)$) {\scriptsize $x_6=1$};
\node (x7) at ($(x6)+(50pt,0pt)$) {\scriptsize $x_7=0$};

\draw[line width=1pt, ForestGreen]
	(s.east) -- (A0y1.west)
	(A0y0.east) -- (A1y0.west)
	(A1y1.east) -- (A2y0.west)
	(A2y1.east) -- (A3y0.west)
	(A3y1.east) -- (A4y0.west)
	(A4y1.east) -- (A5y0.west);

\draw[line width=0.5pt, ForestGreen, dashed]
	(A0y1.east) -- (A1y1.west)
	(A1y0.east) -- (A2y1.west)
	(A2y0.east) -- (A3y1.west)
	(A3y0.east) -- (A4y1.west)
	(A4y0.east) -- (A5y1.west)
	(A5y0.east) -- (t.west);

	\node[rectangle, black, fill=white, minimum width=400pt, minimum height=1pt, inner sep=1pt](r) at ($(s)+(175pt,-1cm)$){};
	
\end{tikzpicture}} 
\caption{Solid edges show a maximum matching of the gadget and dashed edges are the remaining edges.}\label{fig:xor}
\end{figure}

\bigskip

The following two lemmas capture the main properties of XOR-gadgets for our purpose. 
The first lemma specifies the connection of XOR-gadgets to the maximum matching problem. 
\begin{lemma}\label{lem:xor-gadget-match}
	Let $k > 1$ be an odd integer and $\Gxor(x_1,\ldots,x_k)$ be some XOR-gadget: 
	\begin{enumerate}[label=$(\roman*)$]
		\item if $x_1 \oplus \cdots \oplus x_k = 0$, then there is a \emph{unique} maximum matching in $\Gxor$ with size $k$ and this matching necessarily matches $t$;
		\item if $x_1 \oplus \cdots \oplus x_k = 1$, then the maximum matching size of $\Gxor$ is $k-1$, and there is a maximum matching in $\Gxor$ that does \emph{not} match $t$. 
	\end{enumerate}
\end{lemma}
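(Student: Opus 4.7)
My plan is to exploit the fact that $\Gxor$ has a very rigid structure: I claim it is a disjoint union of exactly two simple paths, and the XOR of the $x_i$'s controls whether $s$ and $t$ lie on the same path.

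The first step is to compute all degrees in $\Gxor$ and verify that the maximum degree is $2$. Both $s$ and $t$ have degree exactly $1$ (each is incident to the single boundary edge controlled by $x_1$ and $x_k$ respectively); every $a_i, b_i$ with $2 \le i \le k-2$ has degree $2$, contributed by one edge of $\Exor(x_i)$ and one of $\Exor(x_{i+1})$; and in each of columns $1$ and $k-1$ exactly one vertex has degree $2$ (the one joined to $s$ or $t$) and the other has degree $1$. So $\Gxor$ has exactly four vertices of degree $1$ and all remaining vertices of degree $2$. Since a graph of maximum degree $2$ is a disjoint union of paths and cycles, I would next rule out cycles: edges only go between consecutive columns of the gadget, and there are only two edges between any pair of adjacent columns, which is too few to support a closed walk. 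This establishes that $\Gxor$ is the disjoint union of exactly two vertex-disjoint paths, between the four degree-$1$ endpoints.

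The second step is to track the path containing $s$. Starting at $s$ and following the unique incident edge to either $a_1$ or $b_1$ (determined by $x_1$), I would proceed column by column, noting that the pair of edges in $\Exor(x_i)$ preserves the ``channel'' ($a$ vs.\ $b$) when $x_i = 0$ and swaps it when $x_i = 1$. After the internal transitions, the path arrives at column $k-1$ in channel $a$ exactly when $x_1 \oplus x_2 \oplus \cdots \oplus x_{k-1} = 0$. Comparing with the channel of $t$'s unique neighbor (channel $a$ iff $x_k = 0$) shows that the path from $s$ terminates at $t$ iff $x_1 \oplus \cdots \oplus x_k = 0$; otherwise the two paths separate $s$ from $t$.

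Given this decomposition into two paths, both parts of the lemma reduce to a simple vertex count combined with the standard facts that an even-vertex path has a unique perfect matching, while an odd-vertex path has maximum matching of size one less than half and admits a choice of maximum matching leaving either endpoint unmatched. In case $(i)$ the $s$-to-$t$ path has $k+1$ vertices while the other has $k-1$, both even since $k$ is odd; this yields a unique maximum matching of total size $k$ that perfectly matches every vertex, and in particular $t$. In case $(ii)$ each of the two paths has $k$ vertices (odd), with $s$ and $t$ lying on different paths as endpoints, so the total maximum matching has size $k-1$ and I can choose the matching on the $t$-containing path to leave $t$ unmatched. I do not anticipate any real obstacle here beyond the structural decomposition claim, which itself falls out of the degree count and the layered (hence acyclic) nature of the gadget.
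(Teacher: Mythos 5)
Your proof is correct and takes essentially the same route as the paper's: trace the unique path from $s$ through the columns, observe that the parity of $x_1 \oplus \cdots \oplus x_i$ determines the channel ($a$ vs.\ $b$) reached at column $i$, deduce the two-path decomposition of $\Gxor$, and read off the matchings. You are more explicit than the paper about the degree count and acyclicity; the one soft spot is that ``only two edges between adjacent columns is too few to support a closed walk'' does not by itself rule out cycles — the cleaner observation is that every interior vertex has exactly one leftward and one rightward edge, so any trail is monotone in column index and cannot close up — but this is a minor polish and the argument goes through.
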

\begin{proof}
	For this proof, it helps to refer to~\Cref{fig:xor} as a reference point. 
	
	Consider the unique path $P$ starting from $s$ in $\Gxor$. Each bit $x_i=1$ changes the ``parity'' of the path from an $a$-vertex to a $b$-vertex ($s$ and $t$ are considered $a$-vertices for the purpose of this discussion) and 
	each $x_i=0$ keeps this parity the same. As a result:

		$(i)$ if $x_1 \oplus \cdots \oplus x_k = 0$, then $P$ ends in $t$ and thus $\Gxor$ consists of an odd-length 
		path of length $k$ from $s$ to $t$ and another odd-length path of length $k-2$. The unique maximum matching of such a graph matches both $s$ and $t$ and has size $\ceil{k/2} + \ceil{(k-2)/2} = k$.

		$(ii)$ if $x_1 \oplus \cdots \oplus x_k = 1$, then $P$ does not end in $t$ and thus $\Gxor$ consists of two even-length paths with $k+1$ edges each. Each such path leaves out one of its vertices unmatched 
		necessarily and thus this graph has a maximum matching of size $k-1$ which does not match $t$. 
\end{proof}

This second lemma specifies the ``hiding'' properties these XOR-gadgets. 

\begin{lemma}\label{lem:xor-gadget-hide}
	Let $\Gxor(x_1,\ldots,x_k)$ be a random XOR-gadget obtained by picking each bit $x_i$ independently and uniformly at random. Suppose we partition the edges of $\Gxor$ between Alice and Bob 
	such that for at least one bit $x_i$, Alice has not received neither of the edges in $\Exor(x_i)$. Then, distribution of $x_1 \oplus \cdots \oplus x_k$ is still uniform over $\set{0,1}$ even given Alice's edges. 
\end{lemma}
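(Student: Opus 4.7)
The plan is to show that conditional on Alice's received edges, the bit $x_{i^*}$ remains uniformly distributed over $\set{0,1}$ and independent of $X_{-i^*} := (x_j)_{j \neq i^*}$. Once this is established, writing $x_1 \oplus \cdots \oplus x_k = x_{i^*} \oplus \bigoplus_{j \neq i^*} x_j$ expresses the full XOR as a uniform-and-independent bit added modulo $2$ to a quantity determined by $X_{-i^*}$, so the full XOR must itself be uniform given Alice's view.

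The structural fact I would lean on is that the edges of $\Gxor$ partition as the disjoint union $\bigsqcup_{i=1}^{k} \Exor(x_i)$, with each pair $\Exor(x_i)$ a function of $x_i$ alone (it is determined by which of the two possible edge-configurations between the column-$(i{-}1)$ vertices and the column-$i$ vertices is used). Alice's received edges therefore decompose into a disjoint union over $i$ of her share of $\Exor(x_i)$, and by hypothesis her share of $\Exor(x_{i^*})$ is empty. I would formalize the independence by exhibiting a probability-preserving bijection between realizations in which $x_{i^*} = 0$ and those in which $x_{i^*} = 1$ that yield the same set of Alice-edges: keep $X_{-i^*}$ fixed and keep the random partition's assignments on all edges outside pair $i^*$ unchanged; flip $x_{i^*}$ so that $\Exor(x_{i^*})$ is replaced by $\Exor(1-x_{i^*})$, and assign both of the newly-present edges in that pair to Bob (so Alice still sees nothing from pair $i^*$). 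Since the random partition assigns each edge to Alice or Bob uniformly and independently, the two paired realizations have equal probability and produce the same received-edge set for Alice; this yields $\Pr[x_{i^*} = 0 \mid V_A = v] = \Pr[x_{i^*} = 1 \mid V_A = v] = 1/2$ and, applied separately for each fixed $X_{-i^*}$, also gives the desired independence of $x_{i^*}$ from $X_{-i^*}$ conditional on Alice's view.

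The main subtlety---and the reason the hypothesis is necessary---is ensuring the coupling is well-defined: if Alice had received any edge from $\Exor(x_{i^*})$, flipping $x_{i^*}$ would remove that edge from $\Gxor$ (since it does not appear in $\Exor(1-x_{i^*})$) and thereby change Alice's view, breaking the bijection. The lemma's hypothesis that Alice receives neither of the edges in $\Exor(x_{i^*})$ guarantees that the flip only alters edges outside Alice's visibility, so the bijection is sound. Given uniformity and independence of $x_{i^*}$ from $X_{-i^*}$ conditional on Alice's view, the XOR $x_{i^*} \oplus \bigoplus_{j \neq i^*} x_j$ is uniform over $\set{0,1}$, as required.
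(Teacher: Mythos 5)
Your proof is correct and takes essentially the same approach as the paper's: the paper's one-line argument (``switching any single bit in the XOR function... switches the value of the function'') is exactly the flip-the-hidden-bit invariance you formalize. Your coupling argument simply makes explicit the bijection that the paper leaves implicit, and correctly identifies why the hypothesis is needed (so that flipping $x_{i^*}$ does not alter Alice's view).
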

\begin{proof}
	Follows immediately from the fact that switching any single bit in the XOR function, regardless of any fixed setting of the other bits, switches the value of the function. 
\end{proof}

\subsection{A Hard Distribution of Inputs}\label{sec:hard-dist}

We now describe our distribution of input graphs. For the remainder of the proof, we will use the following parameters (all parameters are defined with respect to some integer $N$): 
\begin{align}
	r := N/3, \qquad t := N^{1+\Omega(\nicefrac{1}{\log\log{N}})}, \qquad  k := 2 \cdot \lceil{\log_{(3/4)}{N}\rceil}+1. \label{eq:lower-par}
\end{align}

Let $\GRS$ be a bipartite $(r,t)$-RS graph with $N$ vertices on each side of the bipartition and induced matchings $\MRS_1,\ldots,\MRS_t$ (this graph itself is known to both players). The existence 
of such RS graph is guaranteed by the results of~\cite{FischerLNRRS02} (see also~\cite{GoelKK12}). The hard distribution of the inputs is as follows; see~\Cref{fig:lower} for an illustration. 

\begin{tbox}
\textbf{A hard distribution $\GG$ of graphs.}

\begin{enumerate}
	\item Pick $\jstar \in [t]$ uniformly at random and let $\MRS_{\jstar}$ be the \emph{special} induced matching of $\GRS$. 
	\item For any vertex $v \in \GRS$, let $y_v = 1$ if $v \in V(\MRS_{\jstar})$ and $y_v = 0$ otherwise; sample a $k$-tuple $(x_{v,1},\ldots,x_{v,k})$ independently and uniformly at random conditioned on $x_{v,1} \oplus \cdots \oplus x_{v,k} = y_v$. 
	\item For any $v \in \GRS$, construct a vertex-disjoint XOR-gadget $\Gxor_v(x_{v,1},\ldots,x_{v,k})$ such that the final vertex of $\Gxor_v$ is the same as the vertex $v$. 
	\item For any edge $e \in \GRS$, drop $e$ from the graph independently and with probability half. Let $G$ be the resulting graph. 
\end{enumerate}
\end{tbox}

The distribution $\GG$ specifies the input graph $G$. The input to players is then determined by the distribution $\PP$ that sends each edge to one of the players chosen uniformly at random. 

\bigskip

\begin{figure}[h!]
\centering
\subcaptionbox{A graph $G$ sampled from $\GG$}%
  [0.45\linewidth]{

\begin{tikzpicture}
\tikzset{vertex/.style={circle, line width=1pt, blue, fill=blue!0,  draw, inner sep=0pt, minimum width=10pt, minimum height=10pt}}
\tikzset{vset/.style={ellipse, rounded corners, black, fill=black!0, draw, line width=1pt, inner sep=0pt, minimum width=24pt, minimum height=60pt}}

\node[vset] (L1){};
\node[vset] (L2) [below=1pt of L1]{};
\node[vset] (R1) [right=20pt of L1]{};
\node[vset] (R2) [below=1pt of R1]{};

\node[vertex] (l1) at ($(L1) + (0pt,18pt)$){};
\node[vertex] (l2) at ($(L1) + (0pt,0pt)$){};
\node[vertex] (l3) at ($(L1) + (0pt,-18pt)$){};

\node[vertex] (sl1) [left=50pt of l1]{};
\node[vertex] (sl2) [left=50pt of l2]{};
\node[vertex] (sl3) [left=50pt of l3]{};

\node[vertex] (x1) at ($(L2) + (0pt,18pt)$){};
\node[vertex] (x2) at ($(L2) + (0pt,0pt)$){};
\node[vertex] (x3) at ($(L2) + (0pt,-18pt)$){};

\node[vertex] (sx1) [left=50pt of x1]{};
\node[vertex] (sx2) [left=50pt of x2]{};
\node[vertex] (sx3) [left=50pt of x3]{};

\node[vertex] (r1) at ($(R1) + (0pt,18pt)$){};
\node[vertex] (r2) at ($(R1) + (0pt,0pt)$){};
\node[vertex] (r3) at ($(R1) + (0pt,-18pt)$){};

\node[vertex] (sr1) [right=50pt of r1]{};
\node[vertex] (sr2) [right=50pt of r2]{};
\node[vertex] (sr3) [right=50pt of r3]{};

\node[vertex] (y1) at ($(R2) + (0pt,18pt)$){};
\node[vertex] (y2) at ($(R2) + (0pt,0pt)$){};
\node[vertex] (y3) at ($(R2) + (0pt,-18pt)$){};

\node[vertex] (sy1) [right=50pt of y1]{};
\node[vertex] (sy2) [right=50pt of y2]{};
\node[vertex] (sy3) [right=50pt of y3]{};

\node[rectangle, ForestGreen, rounded corners, draw, inner sep=1pt,  fit=(l1) (sl1)] {};
\node[rectangle, ForestGreen, rounded corners, draw, inner sep=1pt,  fit=(l2) (sl2)] {};
\node[rectangle, ForestGreen, rounded corners, draw, inner sep=1pt,  fit=(l3) (sl3)] {};

\node[rectangle, ForestGreen, rounded corners, draw, inner sep=1pt,  fit=(x1) (sx1)] {};
\node[rectangle, ForestGreen, rounded corners, draw, inner sep=1pt,  fit=(x2) (sx2)] {};
\node[rectangle, ForestGreen, rounded corners, draw, inner sep=1pt,  fit=(x3) (sx3)] {};

\node[rectangle, ForestGreen, rounded corners, draw, inner sep=1pt,  fit=(r1) (sr1)] {};
\node[rectangle, ForestGreen, rounded corners, draw, inner sep=1pt,  fit=(r2) (sr2)] {};
\node[rectangle, ForestGreen, rounded corners, draw, inner sep=1pt,  fit=(r3) (sr3)] {};

\node[rectangle, ForestGreen, rounded corners, draw, inner sep=1pt,  fit=(y1) (sy1)] {};
\node[rectangle, ForestGreen, rounded corners, draw, inner sep=1pt,  fit=(y2) (sy2)] {};
\node[rectangle, ForestGreen, rounded corners, draw, inner sep=1pt,  fit=(y3) (sy3)] {};

\tikzset{decoration={snake,amplitude=.3mm,segment length=0.9mm,
                       post length=0mm,pre length=0mm}}
                       
\draw[color=black]
	(l1) to (r1)
	(l2) to (r2)
	(l3) to (r3) 
	
	(l1) to (y1)
	(l2) to (y2)
	(l3) to (y3) 
	
	(x1) to (r1)
	(x2) to (r2)
	(x3) to (r3) 
	
	(x1) to (y1)
	(x2) to (y2)
	(x3) to (y3);
	
\draw[color=ForestGreen, line width=1pt, decorate]
	(l1) to (sl1)
	(l2) to (sl2)
	(l3) to (sl3)
	
	(x1) to (sx1)
	(x2) to (sx2)
	(x3) to (sx3)
	
	(r1) to (sr1)
	(r2) to (sr2)
	(r3) to (sr3)
	
	(y1) to (sy1)
	(y2) to (sy2)
	(y3) to (sy3);

\end{tikzpicture}} 
 ~
\subcaptionbox{A maximum matching in $G$ }%
  [0.45\linewidth]{

\begin{tikzpicture}
\tikzset{vertex/.style={circle, blue, fill=blue!0,  line width=1pt, draw, inner sep=0pt, minimum width=10pt, minimum height=10pt}}
\tikzset{vset/.style={ellipse, rounded corners, black, fill=black!0, draw, line width=1pt, inner sep=0pt, minimum width=24pt, minimum height=60pt}}

\node[vset, fill=black!05] (L1){};
\node[vset] (L2) [below=1pt of L1]{};
\node[vset, fill=black!05] (R1) [right=20pt of L1]{};
\node[vset] (R2) [below=1pt of R1]{};

\node[vertex] (l1) at ($(L1) + (0pt,18pt)$){};
\node[vertex] (l2) at ($(L1) + (0pt,0pt)$){};
\node[vertex] (l3) at ($(L1) + (0pt,-18pt)$){};

\node[vertex] (sl1) [left=50pt of l1]{};
\node[vertex] (sl2) [left=50pt of l2]{};
\node[vertex] (sl3) [left=50pt of l3]{};

\node[vertex] (x1) at ($(L2) + (0pt,18pt)$){};
\node[vertex] (x2) at ($(L2) + (0pt,0pt)$){};
\node[vertex] (x3) at ($(L2) + (0pt,-18pt)$){};

\node[vertex] (sx1) [left=50pt of x1]{};
\node[vertex] (sx2) [left=50pt of x2]{};
\node[vertex] (sx3) [left=50pt of x3]{};

\node[vertex] (r1) at ($(R1) + (0pt,18pt)$){};
\node[vertex] (r2) at ($(R1) + (0pt,0pt)$){};
\node[vertex] (r3) at ($(R1) + (0pt,-18pt)$){};

\node[vertex] (sr1) [right=50pt of r1]{};
\node[vertex] (sr2) [right=50pt of r2]{};
\node[vertex] (sr3) [right=50pt of r3]{};

\node[vertex] (y1) at ($(R2) + (0pt,18pt)$){};
\node[vertex] (y2) at ($(R2) + (0pt,0pt)$){};
\node[vertex] (y3) at ($(R2) + (0pt,-18pt)$){};

\node[vertex] (sy1) [right=50pt of y1]{};
\node[vertex] (sy2) [right=50pt of y2]{};
\node[vertex] (sy3) [right=50pt of y3]{};

\node[rectangle, ForestGreen, rounded corners, draw, inner sep=1pt,  fit=(l1) (sl1)] {};
\node[rectangle, ForestGreen, rounded corners, draw, inner sep=1pt,  fit=(l2) (sl2)] {};
\node[rectangle, ForestGreen, rounded corners, draw, inner sep=1pt,  fit=(l3) (sl3)] {};

\node[rectangle, ForestGreen, rounded corners, draw, inner sep=1pt,  fit=(x1) (sx1)] {};
\node[rectangle, ForestGreen, rounded corners, draw, inner sep=1pt,  fit=(x2) (sx2)] {};
\node[rectangle, ForestGreen, rounded corners, draw, inner sep=1pt,  fit=(x3) (sx3)] {};

\node[rectangle, ForestGreen, rounded corners, draw, inner sep=1pt,  fit=(r1) (sr1)] {};
\node[rectangle, ForestGreen, rounded corners, draw, inner sep=1pt,  fit=(r2) (sr2)] {};
\node[rectangle, ForestGreen, rounded corners, draw, inner sep=1pt,  fit=(r3) (sr3)] {};

\node[rectangle, ForestGreen, rounded corners, draw, inner sep=1pt,  fit=(y1) (sy1)] {};
\node[rectangle, ForestGreen, rounded corners, draw, inner sep=1pt,  fit=(y2) (sy2)] {};
\node[rectangle, ForestGreen, rounded corners, draw, inner sep=1pt,  fit=(y3) (sy3)] {};

\tikzset{decoration={snake,amplitude=.4mm,segment length=2mm,
                       post length=0mm,pre length=0mm}}
                       
\draw[color=red, line width=1pt]
	(l1) to (r1)
	(l2) to (r2)
	(l3) to (r3);
	
\draw[color=ForestGreen, line width=1pt, dashed]
	(l1) to (sl1)
	(l2) to (sl2)
	(l3) to (sl3)

	(r1) to (sr1)
	(r2) to (sr2)
	(r3) to (sr3);
	
\draw[color=red, line width=1pt]
	(x1) to (sx1)
	(x2) to (sx2)
	(x3) to (sx3)
	
	(y1) to (sy1)
	(y2) to (sy2)
	(y3) to (sy3);

\end{tikzpicture}} 

\caption{An illustration of the distribution $\GG$ of input graphs and their maximum matchings. The middle graph is the ``base'' RS graph and each box connected to vertices of this RS graph denotes an XOR-gadget.}\label{fig:lower}
\end{figure}
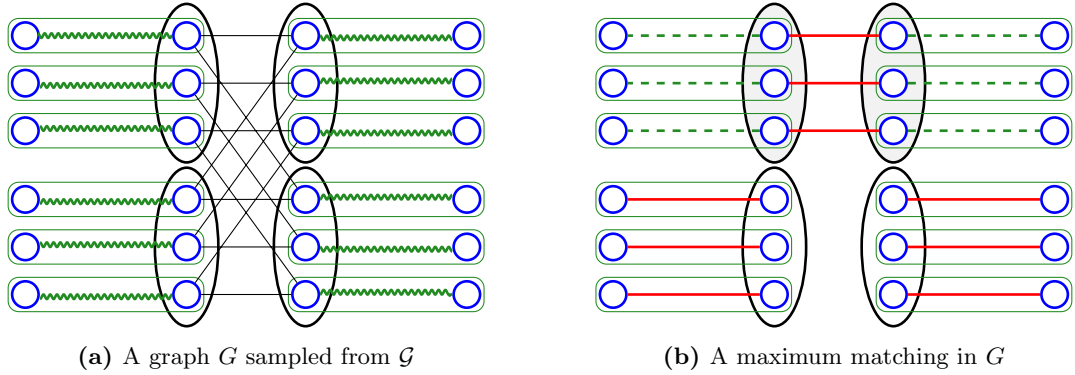

\bigskip

The following lemma specifies the key role of the special induced matching  in this distribution. 

\begin{lemma}\label{lem:lower-matching}
	For a graph $G \sim \GG$: 
	\begin{enumerate}[label=$(\roman*)$]
		\item $\expect{\mu(G)} \geq (N-r) \cdot 2k + 2r \cdot (k-1) + r/2$;
		\item$\mu(G \setminus \MRS_{\jstar}) \leq (N-r) \cdot 2k + 2r \cdot (k-1)$ with probability one;
	\end{enumerate}
\end{lemma}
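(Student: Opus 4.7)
The plan is to prove both parts by exploiting the complementary roles of the two types of XOR-gadgets. By construction, $y_v = 1$ exactly when $v \in V(\MRS_{\jstar})$, so by \Cref{lem:xor-gadget-match} every gadget $\Gxor_v$ with $v \in V(\MRS_{\jstar})$ has $\mu(\Gxor_v) = k-1$ and admits a maximum matching that leaves $v$ unmatched, while every other gadget has a (unique) perfect matching of size $k$ that necessarily matches $v$.

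For part~(i), I would exhibit an explicit large matching. In each gadget $\Gxor_v$, pick a maximum matching; for $v \in V(\MRS_{\jstar})$ I take one of size $k-1$ that leaves $v$ unmatched. Since the gadgets are pairwise vertex-disjoint, their union is a valid matching of $G$ of size $(2N-2r)k + 2r(k-1) = (N-r)\cdot 2k + 2r(k-1)$. Every $v \in V(\MRS_{\jstar})$ remains unmatched, so any surviving edge of $\MRS_{\jstar}$ can be added without conflict. Since each of the $r$ edges of $\MRS_{\jstar}$ survives independently with probability $1/2$, linearity of expectation contributes an extra $r/2$ edges in expectation, proving the claimed lower bound on $\expect{\mu(G)}$.

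For part~(ii), I would consider any matching $M$ of $G' := G \setminus \MRS_{\jstar}$ and decompose it into \emph{internal} edges (lying within some $\Gxor_v$) and \emph{external} edges (lying in $\GRS \setminus \MRS_{\jstar}$). Write $A := V(\MRS_{\jstar})$ and $\bar A := V(\GRS) \setminus A$, and let $n_A$ (resp.\ $n_{\bar A}$) denote the number of endpoints of external edges of $M$ that lie in $A$ (resp.\ $\bar A$). A short case analysis of $\mu(\Gxor_v)$ and $\mu(\Gxor_v \setminus \{v\})$ shows that the internal contribution of $\Gxor_v$ to $M$ is at most $k-1$ whenever $v \in A$ or $v$ is externally matched, and at most $k$ otherwise. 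Summing over all $2N$ gadgets then gives $|M_{\mathrm{int}}| \leq 2Nk - 2r - n_{\bar A}$, while trivially $|M_{\mathrm{ext}}| = (n_A + n_{\bar A})/2$.

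The main step, and the only one that uses the particular structure of $\MRS_{\jstar}$, is the inequality $n_A \leq n_{\bar A}$. Because $\MRS_{\jstar}$ is an \emph{induced} matching of $\GRS$, the only edges of $\GRS$ with both endpoints in $A$ are those of $\MRS_{\jstar}$ itself, and these are absent from $G'$. Hence every external edge of $M$ has at most one endpoint in $A$, giving $n_A \leq |M_{\mathrm{ext}}|$ and equivalently $n_{\bar A} \geq n_A$. Substituting yields $|M| \leq 2Nk - 2r + (n_A - n_{\bar A})/2 \leq 2Nk - 2r = (N-r)\cdot 2k + 2r(k-1)$, which holds for every realization of $G$. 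The only real obstacle is keeping the case analysis of internal matching sizes airtight across the four combinations of ``$v \in A$?'' and ``$v$ externally matched?''; once the key inequality $n_A \leq n_{\bar A}$ is in hand, no further appeal to randomness is needed for part~(ii).
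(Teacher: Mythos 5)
Your proof of part~(i) matches the paper's exactly: take a vertex-disjoint union of maximum matchings inside the gadgets, chosen for $v \in V(\MRS_{\jstar})$ so as to leave $v$ unmatched, and then add the surviving edges of $\MRS_{\jstar}$, picking up $r/2$ more in expectation. For part~(ii) you depart from the paper in presentation. The paper argues by exchange: because the unique $k$-edge matching of a ``$y_v = 0$'' gadget necessarily saturates $v$, matching any $v \notin V(\MRS_{\jstar})$ externally forfeits one internal edge in $\Gxor_v$, so there is no net gain; and vertices of $V(\MRS_{\jstar})$ cannot be matched to one another since $\MRS_{\jstar}$ is induced and its edges are deleted. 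Your version makes this accounting explicit: you split a matching of $G \setminus \MRS_{\jstar}$ into internal and external edges, bound each gadget's internal contribution by $k-1$ whenever $v \in A$ or $v$ is externally matched (the latter because the unique size-$k$ matching of $\Gxor_v$ for $v\in\bar A$ uses $v$, so $\mu(\Gxor_v\setminus\{v\})=k-1$), and then close the budget with $n_A \le n_{\bar A}$, which is exactly where the induced-matching property of $\MRS_{\jstar}$ enters (no $A$--$A$ edge survives). The arithmetic $|M| \le 2Nk - 2r + (n_A - n_{\bar A})/2 \le 2Nk - 2r$ is correct. The two arguments rest on the same two facts from \Cref{lem:xor-gadget-match} and the induced-matching property; your version is more mechanical and perhaps easier to audit, whereas the paper's exchange argument is shorter. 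Both are sound.
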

\begin{proof}
	For this proof, it helps to refer to~\Cref{fig:lower} as a reference point. 
	
	Let us consider the graph $G \setminus \MRS_{\jstar}$ first. By~\Cref{lem:xor-gadget-match}, for every $v \in \GRS \setminus V(\MRS_{\jstar})$, $\Gxor_v(x_{v,1},\ldots,x_{v,k})$ has a matching of size $k$ since $x_{v,1} \oplus \ldots \oplus x_{v,k} = y_v = 0$
	in this case. The remaining XOR-gadgets also have a matching of size $k-1$ each, again by~\Cref{lem:xor-gadget-match} since now their XOR-values are $1$. Moreover, these latter matchings can be chosen so that 
	no vertex of $V(\MRS_{\jstar})$ is matched in them (by part $(ii)$ of~\Cref{lem:xor-gadget-match}).  Considering these matchings are 
	vertex-disjoint we have a matching $M$ of size $2 \cdot (N-r) \cdot k + 2r \cdot (k-1)$ in $G \setminus \MRS_{\jstar}$ with probability one that does not match any vertex of $\MRS_{\jstar}$. As a result: 
	
	$(i)$ In $G$, there is a matching consisting of $M$ plus all edges of $\MRS_{\jstar}$ present in $G$. As each of the edges of $\MRS_{\jstar}$ (with size $r$) is dropped w.p. half, we get the first part of the lemma. 
	
	$(ii)$ In $G \setminus \MRS_{\jstar}$, the matching $M$ is already a maximum matching. This is because, by the part $(i)$ of~\Cref{lem:xor-gadget-match}, the unique maximum matching of each XOR-gadget $\Gxor_v$ for 
	$v \in \GRS \setminus V(\MRS_{\jstar})$ necessarily matches $v$; hence, if we instead match $v$ to some vertex in $\GRS$, there will be one unmatched vertex in $\Gxor_v$ and thus size of the matching does not change. 
	As a result, the only vertices that can be matched inside $\GRS$ are $V(\MRS_{\jstar})$ but since $\MRS_{\jstar}$ consists of all edges between them (as $\MRS_{\jstar}$ is an \emph{induced} matching), 
	there is no edge left for these vertices in $G \setminus \MRS_{\jstar}$. 
\end{proof}

\subsection*{Auxiliary Random Variables and Input of Players} 
Let us now specify the random variables used in the distributions $\GG$ and $\PP$ explicitly: 
\begin{itemize}[leftmargin=20pt]
	\item $\rJ$ and $\rY := \set{\rY^v}$ for all $v \in \GRS$: the index $\jstar$ of the special matching $\MRS_{\jstar}$ and the corresponding random bits $y^v$ for XOR-gadgets. Notice that $\rJ$ and $\rY$ uniquely identify each other. 
	\item $\rX := \set{\rX^v := (\rX^v_1,\ldots,\rX^v_k)}$ for all $v \in \GRS$: the bits in XOR-gadgets of each vertex $v$ of $\GRS$. 
	\item $\rZ := \set{\rZ_e}$ for all $e \in \GRS$: $\rZ_e = 1$ for any $e \in \GRS$ that was chosen in $G$ and $\rZ_e = 0$ otherwise. 
	\item $\rPxor := \set{\rP_e}$ for all $e \in \Gxor_v$ among all $v \in \GRS$: $\rP_e = 1$ for any $e \in \Gxor_v$ that was sent to Alice as part of input  under the random partitioning $\PP$ and $\rP_e = 0$ otherwise.  
	\item $\rPrs := \set{\rP_e}$ for all $e \in \GRS$: $\rP_e = 1$ for any $e \in \GRS$ that was sent  to Alice as part of input  under the random partitioning $\PP$ and $\rP_e = 0$ otherwise.  Note that for technical reasons that will become evident shortly, 
	we have defined $\rPrs$ as partitioning all edges of $\GRS$ and not only the ones with $\rZ_e = 1$ that actually belong to the input graph. 
\end{itemize}
Additionally, we have the following definitions: 
\begin{itemize}[leftmargin=20pt]
	\item $\rX_A$ and $\rX_B$: we say that a bit $\rX^v_i$ is \emph{represented} in Alice's (resp. Bob's) input iff at least
	one of the edges $\Exor(\rX^v_i)$ is given to Alice (resp. Bob) by $\rPxor$ in partitioning of inputs (notice that $\rX^v_i$ might be represented in both players inputs); we use $\rX_A$ and $\rX_B$ to denote the bits 
	represented in Alice's and Bob's inputs, respectively. 
	\item $\rErsA$ and $\rErsB$: we say that $e \in \GRS$ is \emph{represented} in Alice's (resp. Bob's) input iff $\rP_e = 1$ (resp. $\rP_e = 0$), i.e., the partitioning $\rPrs$ assigns $e$ to Alice (resp. Bob); 
	we use $\rErsA$ and $\rErsB$ to denote the edges represented in Alice's and Bob's inputs, respectively (again, notice that by the definition of $\rPrs$ for all $e \in \GRS$, some edges are represented by Alice or Bob, but they may not belong to the graph $G$
	to begin with). 
	\item $\rMrsA(j)$ and $\rMrsB(j)$ for all $j \in [t]$: we define $\rMrsA(j)$ and $\rMrsB(j)$ analogously to $\rErsA$ and $\rErsB$ restricted to edges in  $\MRS_j$; so $\rErsA = (\rMrsA(1),\ldots,\rMrsA(t))$ and 
	$\rErsB = (\rErsB(1),\ldots,\rErsB(t))$. 
	\item $\rZA$ and $\rZB$: we define $\rZ_A := \set{\rZ_e}$ for $e \in \rErsA$ and $\rZ_B := \set{\rZ_e}$ for $e \in \rErsB$, that is the $\rZ$-values for edges represented in Alice's and Bob's inputs respectively. 
	Similarly, for any $j \in [t]$, we define $\rZ_A(j)$ and $\rZ_B(j)$ analogously to $\rZ_A$ and $\rZ_B$ restricted to edges in $\rMrsA(j)$ and $\rMrsB(j)$. 
\end{itemize}

We can now specify the input of Alice by the tuple $\rA := (\rPrs,\rZA,\rPxor,\rX_A)$ and input of Bob by $\rB:= (\rPrs,\rZB,\rPxor,\rX_B,\rJ)$. We note that these tuples are more general than the actual input of players. In particular, 
$\rPrs$ specifies the partitioning of edges that may not even be part of the input and Bob is explicitly given index $\rJ$; however, adding these more general inputs can only make our lower bounds stronger as Alice and Bob can always ignore this extra information. 

\paragraph{Hiding property of XOR-gadgets.} The key role of XOR-gadgets in our construction is that they ``hide'' the identity of the special induced matching $\MRS_{\jstar}$ \emph{from Alice}; we formalize this as follows. Define the following event: 
\begin{itemize}[leftmargin=15pt]
	\item \textbf{Event $\eventhide$:} for \emph{all} $v \in \GRS$, at least one of $(x^v_1,\ldots,x^v_k)$ is \emph{not} represented in Alice's input. 
\end{itemize}
We note $\eventhide$ is a deterministic function of the random variable $\rPxor$. We have, 

\begin{lemma}\label{lem:lower-hide}
	Suppose event $\eventhide$ happens. Then, even conditioned on the input $\rA$ of Alice, $\rJ$ is still chosen uniformly at random from $[t]$. 
\end{lemma}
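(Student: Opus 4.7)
The plan is to compute $\Pr[\rJ = j \mid \rA = a]$ for an arbitrary $j \in [t]$ and an arbitrary realization $a = (p_{\text{rs}}, z_A, p_{\text{xor}}, x_A)$ of Alice's input with $p_{\text{xor}}$ satisfying $\eventhide$, and show that it equals $1/t$. Since $\rJ$ is \emph{a priori} uniform on $[t]$, Bayes' rule reduces this to showing that $\Pr[\rA = a \mid \rJ = j]$ does not depend on $j$.

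First I would isolate the parts of $\rA$ that are manifestly independent of $\rJ$. By construction, $\rPrs$ and $\rPxor$ are each drawn independently of $\rJ$. The edge-drop indicators $\rZ$ are drawn independently of $\rJ$ as well (each edge of $\GRS$ is kept with probability $1/2$, independently of everything else), and since $\rZA$ is a deterministic function of $\rZ$ together with $\rPrs$, it follows that $\rZA$ is independent of $\rJ$ given $\rPrs$. Putting these observations together, the joint density of $\rA$ factorizes as
\begin{align*}
\Pr[\rA = a \mid \rJ = j] = \Pr[\rPrs = p_{\text{rs}}] \cdot \Pr[\rZA = z_A \mid \rPrs = p_{\text{rs}}] \cdot \Pr[\rPxor = p_{\text{xor}}] \cdot \Pr[\rX_A = x_A \mid \rJ = j, \rPxor = p_{\text{xor}}],
\end{align*}
and only the final factor can possibly depend on $j$.

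The main step is thus to show that this remaining factor is independent of $j$ under $\eventhide$. Recall that $\rJ = j$ uniquely determines the vector $\rY = (y_v)_{v \in \GRS}$ via $y_v = \mathbf{1}[v \in V(\MRS_j)]$, and that conditioned on $\rY$ the tuples $\rX^v$ are drawn independently across $v$, each uniform in $\{0,1\}^k$ subject to $x^v_1 \oplus \cdots \oplus x^v_k = y_v$. For every vertex $v$, event $\eventhide$ guarantees that some bit $x^v_{i^\star}$ is not represented in Alice's input, so Lemma~\ref{lem:xor-gadget-hide} applied to the gadget $\Gxor_v$ tells us that the joint distribution of the bits $\rX_A^v$ represented in Alice's view of this gadget is uniform on $\{0,1\}^{|\rX_A^v|}$ irrespective of $y_v$. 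By the cross-vertex independence of the XOR-gadgets, $\Pr[\rX_A = x_A \mid \rJ = j, \rPxor = p_{\text{xor}}]$ factorizes as $\prod_v 2^{-|\rX_A^v|}$ and is therefore constant in $j$.

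Combining the two steps, $\Pr[\rA = a \mid \rJ = j]$ does not depend on $j$, whence $\Pr[\rJ = j \mid \rA = a] = 1/t$ as claimed. There is no real obstacle; the entire content of the argument is packaged in Lemma~\ref{lem:xor-gadget-hide}, and the only subtlety is bookkeeping — tracking which of the many auxiliary variables $(\rPrs, \rZA, \rPxor, \rX_A)$ can depend on $\rJ$ and invoking the hiding property gadget-by-gadget.
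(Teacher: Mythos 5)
Your proof is correct and, after the explicit Bayes flip, rests on exactly the same fact the paper uses (the hiding property of \Cref{lem:xor-gadget-hide} applied gadget-by-gadget under $\eventhide$). The paper's own proof is terser — it simply observes that $\rX_A$ is the only component of Alice's input correlated with $\rJ$ and that $\rY$ remains uniform given $\rX_A$ — but your factorization of $\Pr[\rA = a \mid \rJ = j]$ and the verification that $\rPrs, \rZA, \rPxor$ drop out is a more careful rendering of the same argument.
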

\begin{proof}
	The only input of Alice which is, in principle, correlated with $\rJ$ is $\rX_A$; in general, $\rY$ and $\rJ$ uniquely identify each other and $\rX_A$ is used to determine $\rY$, namely, with a slight abuse of notation $\rY = \rX_A \oplus \rX_B$. 
	However, considering by $\eventhide$, $\rX_A$ ``misses'' at least one bit for \emph{every} 
	XOR-gadget, by~\Cref{lem:xor-gadget-hide}, any choice of $\rY$-value (even the \emph{correlated} ones obtained by picking $\rJ$) are equally likely conditioned on $\rX_A$, proving the lemma. 
\end{proof}

Finally, an easy calculation shows that $\eventhide$ happens with high probability. 
\begin{claim}\label{clm:lower-hide-prob}
	$\Pr\paren{\eventhide} \geq 1-o(1)$. 
\end{claim}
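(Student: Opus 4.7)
The plan is a straightforward union bound using the independence built into the partitioning $\rPxor$, so the main work is only to verify that the parameter $k$ is large enough to beat a union bound over all $2N$ vertices of $\GRS$.

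Fix a vertex $v \in \GRS$ and a bit index $i \in [k]$. Under the random partition $\rPxor$, each of the two edges in $\Exor(x_{v,i})$ is independently assigned to Alice or Bob with probability $1/2$ each, so the bit $x_{v,i}$ fails to be represented in Alice's input (i.e.\ both of its edges go to Bob) with probability exactly $1/4$. Moreover, since the XOR-gadgets $\Gxor_v$ are vertex-disjoint and each bit corresponds to a distinct pair of edges, the events ``$x_{v,i}$ represented in Alice's input'' are mutually independent across all $v$ and $i$.

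For a single vertex $v$, the probability that every one of its $k$ bits is represented in Alice's input is therefore $(3/4)^k$. Plugging in the choice $k = 2\lceil \log_{3/4} N \rceil + 1$ from \eqref{eq:lower-par}, I get
\[
(3/4)^k \;\leq\; (3/4)^{2\log_{3/4} N} \;=\; \frac{1}{N^2}.
\]
Since $\GRS$ has $2N$ vertices in total, a union bound gives
\[
\Pr\bigl(\overline{\eventhide}\bigr) \;\leq\; \sum_{v \in \GRS} \Pr(\text{all $k$ bits of $v$ represented in Alice's input}) \;\leq\; 2N \cdot \frac{1}{N^2} \;=\; \frac{2}{N},
\]
which is $o(1)$ as desired.

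The proof is essentially a one-line union-bound calculation once the independence of the partitioning is made explicit, so I do not foresee any obstacle; the only ``design'' step was choosing $k$ logarithmically in $N$ in \eqref{eq:lower-par} precisely to make $(3/4)^k$ decay faster than $1/(2N)$.
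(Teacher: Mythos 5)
Your proof is correct and takes essentially the same approach as the paper: bound the failure probability of a single vertex by $(3/4)^k$ and union-bound over the $2N$ vertices of $\GRS$, using the choice of $k$ in~\Cref{eq:lower-par} to make $(3/4)^k = O(1/N^2)$. The only difference is cosmetic — the paper states $(3/4)^k \le 1/(2N^2)$ while you obtain $1/N^2$; both yield the $o(1)$ bound.
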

\begin{proof}
	Fix any vertex $v \in \GRS$. Any bit $x^v_i$ is represented in Alice's input if at least one of the two edges in $\Exor(x^v_i)$ is sent to Alice under $\PP$ which happens with probability $3/4$. As such, 
	the probability that $x^v$ is represented in Alice's input is only $(3/4)^{k} \leq 1/2N^{2}$ by the choice of $k$ in~\Cref{eq:lower-par}. A union bound on all $2N$ vertices in $\GRS$ finalizes the proof. 
\end{proof}

\subsection{Analysis of the Hard Distribution}

To start the analysis, we need to setup some notation. 

\paragraph{Notation.} Throughout this section, we fix a deterministic protocol $\prot$ over $\GG$ and random partitioning $\PP$ with communication cost $o(r \cdot t)$. We further 
let $\delta$ denote the probability that $\prot$ outputs an edge that does \emph{not} belong to $G$ (and thus errs). 
We use $\rProt$ to denote the random variable for the message $\Prot$ sent by Alice to Bob in $\prot$. We also use $\rMprot$ to denote the random variable for the matching output by the protocol $\prot$. 
Considering the input of Bob is $\rB$ and he additionally receives the message $\rProt$ from Alice, $\rMprot$ is a deterministic function of $(\rB,\rProt)$. 
In the following, $\en{\cdot}$ and $\mi{\cdot}{\cdot}$ denote the \emph{Shannon entropy} and \emph{mutual information}; see~\Cref{app:info} for more details. 

\smallskip

We first bound the size of $\rMprot$ based on the information revealed by $\rProt$ to Bob about edges of the special matching $\MRS_{\jstar}$ that are present in Alice's input, i.e., $\rZA(\rJ)$.  (In the following,   
$H_2$ is the binary entropy function, i.e., $H_2(\delta) := \en{\mathcal{B}(\delta)}$ where $\mathcal{B}(\delta)$ is a mean-$\delta$ Bernoulli random variable.)

\begin{lemma}\label{lem:lower-info}
	$\Ex\card{\rMprot} \leq (N-r) \cdot 2k + 2r \cdot (k-1) + r/4 + (1-H_2(\delta))^{-1} \cdot \mi{\rZA(\rJ)}{\rProt \mid \rB}.$ 
\end{lemma}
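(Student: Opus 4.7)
The plan is to decompose the output matching $\rMprot$ according to each edge's role relative to the special induced matching $\MRS_{\jstar}$, and bound each piece separately. Concretely, I would start from
\begin{align*}
|\rMprot| \;\leq\; |\rMprot \setminus \MRS_{\jstar}| \;+\; |\rMprot \cap \MRS_{\jstar} \cap \rErsB| \;+\; |\rMprot \cap \MRS_{\jstar} \cap \rErsA|.
\end{align*}
Any edge of $\rMprot$ outside $\MRS_{\jstar}$ that is an actual edge of $G$ lies in some matching of $G \setminus \MRS_{\jstar}$, so \Cref{lem:lower-matching}(ii) gives the deterministic bound $(N-r)\cdot 2k + 2r(k-1)$ for the first term. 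For the second term I would use the trivial bound $|\rMprot \cap \MRS_{\jstar} \cap \rErsB| \leq |\{e \in \MRS_{\jstar} \cap \rErsB : \rZ_e = 1\}|$, whose expectation is $r/4$: each of the $r$ edges of $\MRS_{\jstar}$ is independently assigned to Bob with probability $1/2$ by $\rPrs$ and kept in $G$ with probability $1/2$ by $\rZ$.

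The heart of the argument is the third term. The key point is that conditioned on $\rB$, the bit string $\rZA(\rJ)$ is uniform over $\{0,1\}^{|\rMrsA(\rJ)|}$: $\rB$ reveals via $\rPrs$ and $\rJ$ exactly which edges of $\MRS_{\jstar}$ Alice receives, but says nothing about their $\rZ$-values. Each edge $e$ that $\rMprot$ contains from $\rMrsA(\rJ)$ is effectively a one-sided ``bet'' by Bob that $\rZ_e = 1$, and since $\rProt$ is the only channel through which information about $\rZA(\rJ)$ reaches Bob, this bet is wrong with probability at most $\delta$. A coordinate-wise Fano-type inequality, together with subadditivity of entropy, then shows that each output edge in this region reduces the conditional entropy $\en{\rZA(\rJ) \mid \rB, \rProt}$ by at least $1 - H_2(\delta)$ relative to $\en{\rZA(\rJ) \mid \rB}$. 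Summing over the output edges yields
\[
\mi{\rZA(\rJ)}{\rProt \mid \rB} \;\geq\; \bigl(1-H_2(\delta)\bigr)\cdot \Ex\,|\rMprot \cap \MRS_{\jstar} \cap \rErsA|,
\]
which, combined with the earlier bounds on the first two terms, rearranges into the desired inequality.

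The main technical obstacle will be the rigorous handling of the error parameter $\delta$. Fano's inequality is naturally a per-coordinate statement, yet Bob's predictions here are asymmetric: he only ever declares a coordinate of $\rZA(\rJ)$ to be $1$ (by outputting the corresponding edge) and is silent about the rest. To extract a clean gain of $1 - H_2(\delta)$ per output edge one has to compare the posterior distribution of $\rZA(\rJ)$ given $(\rB, \rProt)$ against the uniform prior given only $\rB$, apply the per-edge Fano bound with the average error rate of these one-sided bets, and invoke concavity of $H_2$ to aggregate across coordinates. A secondary bookkeeping point is that some edges of $\rMprot$ may not lie in $G$ at all; assigning each such wrong edge to exactly one of the three buckets above, without double counting, is what forces both the $r/4$ constant and the specific power of $(1-H_2(\delta))^{-1}$ appearing in the statement.
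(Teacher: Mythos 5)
Your proof matches the paper's: the same three-way decomposition (edges outside $\MRS_{\jstar}$, edges of $\MRS_{\jstar}$ on Bob's side, edges of $\MRS_{\jstar}$ on Alice's side), the same bounds of $\mu(G\setminus\MRS_{\jstar})$ and $r/4$ for the first two buckets, and the same per-edge entropy argument for the third. The paper phrases the $1-H_2(\delta)$ gain without invoking Fano explicitly --- it just observes that $\Pr\paren{\rZ_e=0\mid\rProt,\rB}\leq\delta$ gives $\en{\rZ_e\mid\rProt,\rB}\leq H_2(\delta)$, collects the ``low-entropy'' edges into a set $L(P,j)$, and uses subadditivity of entropy together with the fact that $\rZA(\rJ)$ is uniform given $\rB$ --- but that is the same calculation you are sketching.
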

\begin{proof}
	By~\Cref{lem:lower-matching}, $\rMprot$ can only have $(N-r) \cdot 2k + 2r \cdot (k-1)$ edges outside of $\MRS_{\rJ}$. Hence, to prove the lemma, it suffices to 
	bound $\Ex\card{\rMprot \cap \MRS_{\rJ}}$. By definition, 
	\begin{align*}
	\Ex\card{\rMprot \cap \MRS_{\rJ}} &= \Ex\card{\rMprot \cap \rMrsA(\rJ)} + \Ex\card{\rMprot \cap \rMrsB(\rJ)} \leq  \Ex\card{\rMprot \cap \rMrsA(\rJ)} + r/4;
	\end{align*}
	this is because $\Ex{\card{\rMrsB(\rJ)}} = r/2$ (as each of the $r$ edges goes to Bob w.p. half) and among these, again, in expectation half of them belong to $G$, i.e., have $\rZ$-value $1$ (note that we can assume without loss of generality
	that Bob never outputs an edge $e \in \rMrsB(\rJ)$ with $\rZ_e=0$ as this edge is not part of input and thus makes the output wrong; moreover, unlike edges in Alice's input, here Bob directly knows $\rZ_e$ and can simply
	remove all edges with $\rZ_e=0$ from $\rMrsB(\rJ)$).  
	
	To finalize the proof, we need to show 
	\begin{align}
		\Ex\card{\rMprot \cap \rMrsA(\rJ)} \leq (1-H_2(\delta))^{-1} \cdot \mi{\rZA(\rJ)}{\rProt \mid \rB}. \label{eq:lower-info-eq-1}
	\end{align}
	Let us condition on any choice of $\rPrs=P$ and $\rJ = j$ in $\rB = (\rPrs,\rZB,\rPxor,\rX_B,\rJ)$. This fixes $\rMrsA(\rJ)$ to some matching $M_A(j) \subseteq \MRS_j$,  
	but  $\set{\rZ_e}$ for $e \in M_A(j)$ are still uniformly distributed as $\rZ \perp \rPrs,\rJ$. Fix any edge $e \in M_A(j)$. 
	For Bob to be able to output $e$ as part of $\rM_{\prot}$, 
	the entropy of $\rZ_e$ should be sufficiently small conditioned on $(\rB,\rProt)$; otherwise, Bob is likely to output an edge that does not belong to the graph and thus errs. 
	Formally, for any $e \in \rMprot \cap M_A(j)$, $\Pr\paren{\rZ_e = 0 \mid \rProt, \rB} \leq \delta$ which implies that, 
	\begin{align}
		\en{\rZ_e \mid \rProt, \rB} \leq H_2(\delta),  \label{eq:lower-info-eq-2}
	\end{align}

	We are going to use this to bound the information revealed about $\rZA(\rJ)$ by Alice's message. Let $L := L(P,j)$ denote the set of ``low entropy'' edges in $M_A(j)$, i.e., 
	all edges $e \in M_A(j)$ that satisfy~\Cref{eq:lower-info-eq-2} conditioned on $\rPrs=P$ and $\rJ = j$. As discussed, 
	\begin{align}
	\Ex\card{\rMprot \cap \rMrsA(\rJ)} \leq \Ex_{P,j}{\card{L(P,j)}}. \label{eq:lower-info-eq-3}
	\end{align}
	We now bound the RHS above as follows. By the definition of $\rB = (\rPrs,\rZB,\rPxor,\rX_B,\rJ)$, 
	\begin{align*}
		\mi{\rZA(\rJ)}{\rProt \mid \rB} &= \Ex_{P,j} \bracket{\mi{\rZA(j)}{\rProt \mid \rPrs=P,\rZB,\rPxor,\rX_B,\rJ=j}} \\
		&= \Ex_{P,j} \Bracket{\en{\rZA(j) \mid \rPrs=P,\rZB,\rPxor,\rX_B,\rJ=j} - \en{\rZA(j) \mid \rProt , \rPrs=P,\rZB,\rPxor,\rX_B,\rJ=j}} \\ 
		&= \Ex_{P,j} \Bracket{\card{M_A(j)} - \en{\rZA(j) \mid \rProt , \rPrs=P,\rZB,\rPxor,\rX_B,\rJ=j}} \tag{as $\set{\rZ_e}$ for $e \in M_A(j)$ is uniformly distributed conditioned on the remaining variables} \\
		&\geq  \Ex_{P,j} \Bracket{\card{M_A(j)} - \sum_{e \in M_A(j)} \en{\rZ_e \mid \rProt , \rPrs=P,\rZB,\rPxor,\rX_B,\rJ=j}} \tag{by the sub-additivity of entropy} \\
		&\geq  \Ex_{P,j} \Bracket{\card{M_A(j)} - (\card{M_A(j)}-\card{L(P,j)}+ \sum_{e \in L(P,j)} \en{\rZ_e \mid \rProt , \rPrs=P,\rZB,\rPxor,\rX_B,\rJ=j})} \tag{by upper bounding the entropy of the terms not in $L(P,j)$ by one} \\
		&\geq  \Ex_{P,j} \Bracket{\card{M_A(j)} - (\card{M_A(j)}-\card{L(P,j)}+ \sum_{e \in L(P,j)} H_2(\delta))} \tag{by the definition of $L(P,j)$ based on~\Cref{eq:lower-info-eq-2}} \\
		&= \paren{1-H_2(\delta)} \cdot \Ex_{P,j} \card{L(P,j)}.
	\end{align*}
	Plugging in this bound in~\Cref{eq:lower-info-eq-2} finalizes the proof. 
\end{proof}

The main part of the proof is to bound the mutual information term in the RHS of~\Cref{lem:lower-info}, i.e., show that a low communication protocol cannot reveal much information 
about $\rZ(\rJ)$ even conditioned on all the inputs of Bob. 

\begin{lemma}\label{lem:lower-info-bound}
	$\mi{\rZ_A(\rJ)}{\rProt \mid \rB} = o(r)$. 
\end{lemma}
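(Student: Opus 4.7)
The intuition is that Alice's $o(rt)$-bit message cannot be targeted at $\MRS_\rJ$: under the hiding event $\eventhide$ of \Cref{lem:xor-gadget-hide,lem:lower-hide}, her view carries no information about $\rJ$, so a direct sum argument splits the $o(rt)$ bits of communication across the $t$ candidate matchings and leaves only $o(r)$ bits concerning any fixed one. To begin, I would peel off $\overline{\eventhide}$: since $\eventhide$ is a deterministic function of $\rPxor\subseteq\rB$, the MI splits as
\[
\mi{\rZ_A(\rJ)}{\rProt\mid\rB}=\Pr[\eventhide]\cdot\mi{\rZ_A(\rJ)}{\rProt\mid\rB,\eventhide}+\Pr[\overline{\eventhide}]\cdot\mi{\rZ_A(\rJ)}{\rProt\mid\rB,\overline{\eventhide}},
\]
and using \Cref{clm:lower-hide-prob} together with the trivial bound $\mi{\rZ_A(\rJ)}{\rProt\mid\rB,\overline{\eventhide}}\leq \en{\rZ_A(\rJ)}\leq r$, the $\overline{\eventhide}$ contribution is $O(r/N)=o(r)$.

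Next, I would trade Bob's conditioning $(\rZ_B,\rX_B)$ for the Alice-side conditioning on $\rX_A$. Sub-additivity of MI and the chain rule give
\[
\mi{\rZ_A(\rJ)}{\rProt\mid\rB,\eventhide}\leq \mi{\rZ_A(\rJ)}{\rX_A\mid\rB,\eventhide}+\mi{\rZ_A(\rJ)}{\rProt\mid\rB,\rX_A,\eventhide}.
\]
The first term vanishes because $\rZ$ and $\rX$ are independent random-variable families, so $\rZ_A(\rJ)\perp\rX_A$ under any conditioning built from $\rPrs,\rPxor,\rJ,\rX$ alone. For the second term, once $(P,Q,\rX_A,\rJ)$ are fixed, the residual randomness in $(\rZ_B,\rX_B)$ is jointly independent of the pair $(\rZ_A,\rProt)$: $\rZ_B$ is a fresh Bernoulli family, and the remaining randomness of $\rX_B$ lives in the XOR-bits exclusive to Bob, which do not enter $\rZ_A$ or the deterministic message $\rProt=f(\rA)$. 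Hence this second term equals $\mi{\rZ_A(\rJ)}{\rProt\mid P,Q,\rX_A,\rJ,\eventhide}$.

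Finally, I would invoke the direct sum. By \Cref{lem:lower-hide}, $\rJ\perp\rA$ under $\eventhide$; in particular $\rJ\mid P,Q,\rX_A,\eventhide$ is uniform on $[t]$, and since $\rProt$ is a function of $\rA$ with no dependence on $\rJ$, the conditional law of $(\rZ_A(j),\rProt)$ given $(P,Q,\rX_A,\rJ=j,\eventhide)$ does not depend on $j$. Therefore
\[
\mi{\rZ_A(\rJ)}{\rProt\mid P,Q,\rX_A,\rJ,\eventhide}=\frac{1}{t}\sum_{j=1}^t\mi{\rZ_A(j)}{\rProt\mid P,Q,\rX_A,\eventhide}.
\]
The matchings $\MRS_1,\ldots,\MRS_t$ are pairwise edge-disjoint, so $\rZ_A(1),\ldots,\rZ_A(t)$ are independent Bernoulli vectors even given the conditioning, and by sub-additivity of conditional entropy $\sum_j\mi{\rZ_A(j)}{\rProt\mid\cdot}\leq \mi{\rZ_A}{\rProt\mid\cdot}\leq \en{\rProt}\leq|\rProt|=o(rt)$. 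Dividing by $t$ yields the desired $o(r)$ bound.

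The main obstacle is the middle step: $\rX_B$ is entangled with $\rJ$ through the XOR constraints, so it cannot be directly discarded from the conditioning. Routing through $\rX_A$ is the key trick, since once $\rX_A$ is exposed the remaining Bob-side randomness decouples from both $\rZ_A$ and $\rProt$, reducing the MI to the Alice-only form that admits the direct sum.
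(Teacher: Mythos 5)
Your proof is correct and follows essentially the same route as the paper: peel off the $\overline{\eventhide}$ contribution as $o(r)$, exchange the Bob-side conditioning $(\rZ_B,\rX_B)$ for conditioning on $\rX_A$ via the stated conditional independencies, invoke \Cref{lem:lower-hide} to drop the $\rJ=j$ conditioning and average over $j$, and conclude by super-additivity of mutual information over the independent blocks $\rZ_A(1),\ldots,\rZ_A(t)$ together with $\en{\rProt}\leq o(rt)$. The only cosmetic difference is that you bundle the conditioning swap into a single chain-rule step (adding $\rX_A$, then removing $(\rZ_B,\rX_B)$ jointly) whereas the paper performs the three swaps $\rZ_B$-out, $\rX_A$-in, $\rX_B$-out sequentially via \Cref{prop:info-decrease,prop:info-increase}.
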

\begin{proof}
	Recall that $\rB = (\rPrs,\rZB,\rPxor,\rX_B,\rJ)$ and that any choice $P$ for $\rPxor$, determines whether or not the event $\eventhide$ happens when for $\rPxor = P$. As such, 
	\begin{align}
		\mi{\rZ_A(\rJ)}{\rProt \mid \rB} &= \Ex_{P} \,\bracket{\mi{\rZ_A(\rJ)}{\rProt \mid \rPrs,\rZB,\rPxor=P,\rX_B,\rJ}} \notag \\
		&\leq  \Ex_{P \mid \eventhide} \,\bracket{\mi{\rZ_A(\rJ)}{\rProt \mid \rPrs,\rZB,\rPxor=P,\rX_B,\rJ}} + (1-\Pr\paren{\eventhide}) \cdot r \tag{as this mutual information term can be at most $r$}  \\
		&= \Ex_{P \mid \eventhide} \,\bracket{\mi{\rZ_A(\rJ)}{\rProt \mid \rPrs,\rZB,\rPxor=P,\rX_B,\rJ}} + o(r), \label{eq:lower-info-bound-eq1}
	\end{align}
	where the final step is by~\Cref{clm:lower-hide-prob}. 
	
	We now focus only on the cases when $\eventhide$ happens in the RHS above. Choosing a value $P'$ for $\rPrs$ determines $\rMrsA(1),\ldots,\rMrsA(t)$ and the partitioning of $\rZ$ into $\rZA$ and $\rZB$. Thus, we have, 
	\begin{align}
		\text{First term in the RHS of~\eqref{eq:lower-info-bound-eq1}} &= \Ex_{P,P' \mid \eventhide} \,\bracket{\mi{\rZ_A(\rJ)}{\rProt \mid \rPrs=P',\rZB,\rPxor=P,\rX_B,\rJ}} \notag \\
				 &\leq \Ex_{P,P' \mid \eventhide} \,\bracket{\mi{\rZ_A(\rJ)}{\rProt \mid \rPrs=P', \rPxor=P,\rX_B,\rJ}}; \label{eq:lower-info-bound-eq2}
	\end{align}
	this is because, the input of Alice conditioned on $\rB$ is determined only by $\rZA$ and $\rX_A$ and both these variables are independent of $\rZB$, which implies, 	$\rProt \perp \rZB \mid \rZ_A(\rJ), \rPrs=P',\rZB,\rPxor=P,\rX_B,\rJ$
	and thus we can apply~\Cref{prop:info-decrease} to remove conditioning on $\rZB$. 
	
	Our goal now is to also remove the conditioning on $\rX_B$. However, this is not as direct as the previous step as $(\rX_B,\rJ)$ together are correlated with the input of Alice (in particular, $\rX_A$) and we cannot use the previous argument. 
	Instead, we are going to show that we can in fact ``switch'' $\rX_B$ with $\rX_A$ in the conditioning above without decreasing the RHS. We claim that, for any $P,P'$, 
	\begin{align*}
		\mi{\rZ_A(\rJ)}{\rProt \mid \rPrs=P', \rPxor=P,\rX_B,\rJ} \leq \mi{\rZ_A(\rJ)}{\rProt \mid \rPrs=P', \rPxor=P,\rX_B,\rJ,\rX_A};
	\end{align*}
	this is because $\rZ_A(\rJ) \perp \rX_A \mid \rPrs=P', \rPxor=P,\rX_B,\rJ$ as $\rZ$-values and $\rX$-values are chosen independently (and none of the conditions correlate them); thus we can apply~\Cref{prop:info-increase}. 
	We can now remove $\rX_B$ from the conditioning: 
	\begin{align*}
		\mi{\rZ_A(\rJ)}{\rProt \mid \rPrs=P', \rPxor=P,\rX_B,\rJ,\rX_A} \leq \mi{\rZ_A(\rJ)}{\rProt \mid \rPrs=P', \rPxor=P,\rJ,\rX_A};
	\end{align*}
	this is because $\rProt \perp \rX_B \mid \rZ_A(\rJ), \rPrs=P', \rPxor=P,\rJ,\rX_A$ as $\rProt$ is only a function of $\rX_A$ and $\rZA$ after the conditioning, and in particular is independent of $\rX_B$; thus we can apply~\Cref{prop:info-decrease}.  
	
	By plugging in these bounds in the RHS of~\Cref{eq:lower-info-bound-eq2}, we obtain that, 
	\begin{align*}
		\text{RHS of~\eqref{eq:lower-info-bound-eq2}} &\leq \Ex_{P,P' \mid \eventhide} \,\bracket{\mi{\rZ_A(\rJ)}{\rProt \mid \rPrs=P', \rPxor=P,\rJ,\rX_A}} \\
		&=  \Ex_{P,P' \mid \eventhide} \,\bracket{\sum_{j=1}^{t} \Pr\paren{\rJ=j \mid P,P'} \cdot \mi{\rZ_A(j)}{\rProt \mid \rPrs=P', \rPxor=P,\rJ=j,\rX_A}} \\
		&= \frac{1}{t} \cdot \Ex_{P,P' \mid \eventhide} \,\bracket{\sum_{j=1}^{t} \mi{\rZ_A(j)}{\rProt \mid \rPrs=P', \rPxor=P,\rJ=j,\rX_A}} \tag{as $\rJ \perp \rPrs,\rPxor$ and is uniform over $[t]$} \\
		&=  \frac{1}{t} \cdot \Ex_{P,P' \mid \eventhide} \,\bracket{\sum_{j=1}^{t} \mi{\rZ_A(j)}{\rProt \mid \rPrs=P', \rPxor=P,\rX_A}}; 
	\end{align*}
	in the last step, we can drop the conditioning on the event $\rJ=j$ as the joint distribution of the remaining variables $(\rZ_A(j),\rProt,\rX_A)$ is independent of $\rJ$: this is because
	all these variables only depend on the input of Alice, while conditioned on the event $\eventhide$, by~\Cref{lem:lower-hide}, the input of Alice is independent of $\rJ$. 
	
	We can continue the above calculations as follows: 
	\begin{align*}
		\text{RHS of~\eqref{eq:lower-info-bound-eq2}} &\leq \frac{1}{t} \cdot \Ex_{P,P' \mid \eventhide} \,\bracket{\sum_{j=1}^{t} \mi{\rZ_A(j)}{\rProt \mid \rPrs=P', \rPxor=P,\rX_A}} \\
		&\leq \frac{1}{t} \cdot \Ex_{P,P' \mid \eventhide} \,\bracket{\sum_{j=1}^{t} \mi{\rZ_A(j)}{\rProt \mid \rPrs=P', \rPxor=P,\rX_A, \rZ_A([1:j-1])}}  
		\tag{by~\Cref{prop:info-increase} as $\rZ_A(j) \perp \rZ_A([1:j-1]) \mid \rPrs=P', \rPxor=P,\rX_A$}  \\
		&=  \frac{1}{t} \cdot \Ex_{P,P' \mid \eventhide} \,\bracket{\mi{\rZ_A}{\rProt \mid \rPrs=P', \rPxor=P,\rX_A}}  \tag{by the chain rule of mutual information (\itfacts{chain-rule})} \\
		&= \frac{1}{t} \cdot \bracket{\mi{\rZ_A}{\rProt \mid \rPrs, \rPxor,\rX_A}} \leq \frac{1}{t} \cdot \en{\rProt} \leq o(r) \tag{as $\en{\rProt} = o(r \cdot t)$ since $\prot$ only communicates $o(r \cdot t)$ many bits and by~\itfacts{uniform}}.
	\end{align*}
	Plugging in this bound in~\Cref{eq:lower-info-bound-eq2} and then in turn in~\Cref{eq:lower-info-bound-eq1} finalizes the proof. 
\end{proof}

Suppose the error probability of the protocol $\prot$, i.e., $\delta$, is some constant bounded away from zero. Then, \Cref{lem:lower-info,lem:lower-info-bound}, together with the fact that $H_2(\delta) < 1$,
 imply the following upper bound on the size of the matching output by Bob: 
\begin{align*}
	\Ex\card{\rMprot} \leq (N-r) \cdot 2k + 2r \cdot (k-1) + r/4 + o(r) = \Ex\bracket{\mu(G)} - r/2 + o(r), 
\end{align*}
where the equality is by part $(i)$ of~\Cref{lem:lower-matching}. On the other hand, since $\Ex\bracket{\mu(G)} \leq (2k+1) \cdot N$ (as number of vertices is twice this quantity), we have 
that     
\begin{align*}
	\frac{\Ex\card{\rMprot}}{\Ex\bracket{\mu(G)}} \leq 1 - \frac{r/2+o(r)}{(2k+1) \cdot N} := 1-\eps_0, 
\end{align*}
for some $\eps_0 = \Theta(\nicefrac{1}{\log{N}})$ (as $r=N/3$ and $k = \Theta(\log{N})$). 

Finally, note that the number of vertices in the graph is $n = (2k+1) \cdot N = \Theta(N\cdot\log{N})$. As such, we obtain that 
any deterministic protocol with communication cost $o(n^{1+\Theta(\nicefrac{1}{\log\log{n}})})$ which is $o(r \cdot t) = o(N^{1+\Theta(\nicefrac{1}{\log\log{N}})})$ (for an appropriate choice of constant in the exponent), 
cannot output a $(1-\eps_0)$-approximation to maximum matching in expectation (even if it is allowed to err with constant probability by outputting an edge not in the graph). Moreover, we can immediately 
extend this result to randomized protocols using the ``easy direction'' of Yao's minimax principle (i.e., an averaging argument). 

This concludes the proof of~\Cref{thm:lower-stream} by the connection between communication complexity and streaming lower bounds.

\begin{remark}
	For the simplicity of exposition, we compared the \underline{expected} size of the matching of the protocol vs a maximum matching of the input. However, a simple application of Markov bound also 
	imply the same result for protocols that output a $(1-\eps_0)$-approximate matching with any constant probability of success. 
	
	 Basically, the lower bound for $\mu(G)$ in~\Cref{lem:lower-matching} is highly concentrated (a simple Chernoff bound on edges of the special matching that belong to the graph). 
	 Also, the only term in our upper bound of $\rMprot$ in~\Cref{lem:lower-info} which is not necessarily concentrated is 
	 the mutual information term which is only $o(r)$ by~\Cref{lem:lower-info-bound}; hence, by Markov bound, $\card{\rMprot} \leq {\mu(G)} - r/2 + o(r)$ with probability $1-o(1)$ and not only in expectation. 
\end{remark}

\subsection*{Acknowledgements}
We thank Aaron Bernstein for helpful conversations on the random-order streaming matching problem and several insightful comments that  helped us in improving the presentation of the paper. 

\bibliographystyle{alpha}
\bibliography{general}

\appendix

\newcommand{\rU}{\rv{U}}
\newcommand{\unif}{\mathcal{U}}

\newcommand{\rT}{\rv{T}}

\clearpage

\section{Tools from Information Theory}\label{app:info}

We shall use the following basic properties of entropy and mutual information throughout; the proofs can be found in~\cite[Chapter~2]{CoverT06}. 

\begin{fact}\label{fact:it-facts}
  Let $\rA$, $\rB$, $\rC$, and $\rD$ be four (possibly correlated) random variables.
   \begin{enumerate}
  \item \label{part:uniform} $0 \leq \en{\rA} \leq \log{\card{\supp{\rA}}}$. The right equality holds
    iff $\distribution{\rA}$ is uniform.
  \item \label{part:info-zero} $\mi{\rA}{\rB} \geq 0$. The equality holds iff $\rA$ and
    $\rB$ are \emph{independent}.
  \item \label{part:cond-reduce} \emph{Conditioning on a random variable reduces entropy}:
    $\en{\rA \mid \rB,\rC} \leq \en{\rA \mid  \rB}$.  The equality holds iff $\rA \perp \rC \mid \rB$.
    \item \label{part:sub-additivity} \emph{Subadditivity of entropy}: $\en{\rA,\rB \mid \rC}
    \leq \en{\rA \mid \rC} + \en{\rB \mid  \rC}$.
  \item \label{part:chain-rule} \emph{Chain rule for mutual information}: $\mi{\rA,\rB}{\rC \mid \rD} = \mi{\rA}{\rC \mid \rD} + \mi{\rB}{\rC \mid  \rA,\rD}$.
   \end{enumerate}
\end{fact}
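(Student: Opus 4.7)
The plan is to verify each of the five parts by reducing them to a single workhorse inequality, namely the non-negativity of KL divergence (Gibbs' inequality), and then invoke standard algebraic identities relating entropy, conditional entropy, and mutual information. I would first record the definitions that make this possible: $\en{\rA} = -\sum_a \Pr(\rA=a)\log\Pr(\rA=a)$, $\en{\rA\mid\rB} = \Ex_{b}[\en{\rA\mid\rB=b}]$, $\mi{\rA}{\rB} = \en{\rA}-\en{\rA\mid\rB} = \kl{\distribution{\rA,\rB}}{\distribution{\rA}\times\distribution{\rB}}$, and the KL divergence $\kl{p}{q} = \sum_x p(x)\log(p(x)/q(x))$. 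From Jensen's inequality applied to the convex function $-\log$, I would derive $\kl{p}{q}\geq 0$ with equality iff $p=q$; this is the single analytic fact underlying every equality condition in the statement.

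For part~\ref{part:uniform}, the lower bound $\en{\rA}\geq 0$ is immediate because each summand $-\Pr(\rA=a)\log\Pr(\rA=a)$ is non-negative, while the upper bound follows by writing $\log|\supp{\rA}| - \en{\rA} = \kl{\distribution{\rA}}{\unif_{\supp{\rA}}}\geq 0$, with equality precisely when $\distribution{\rA}$ is uniform. For part~\ref{part:info-zero}, non-negativity of mutual information is just the KL formulation above, and the equality condition $\mi{\rA}{\rB}=0 \iff \distribution{\rA,\rB}=\distribution{\rA}\times\distribution{\rB}$ is the equality case of Gibbs' inequality, which is exactly the definition of independence.

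Part~\ref{part:cond-reduce} will then drop out of part~\ref{part:info-zero} applied to the \emph{conditional} distribution given $\rB$: the quantity $\en{\rA\mid\rB}-\en{\rA\mid\rB,\rC}$ equals $\mi{\rA}{\rC\mid\rB}\geq 0$, with equality iff $\rA\perp\rC\mid\rB$. Part~\ref{part:sub-additivity} follows from the chain rule $\en{\rA,\rB\mid\rC} = \en{\rA\mid\rC}+\en{\rB\mid\rA,\rC}$ combined with part~\ref{part:cond-reduce} applied to $\en{\rB\mid\rA,\rC}\leq \en{\rB\mid\rC}$. Finally, part~\ref{part:chain-rule} is obtained by expanding both sides via $\mi{\rX}{\rY\mid\rZ}=\en{\rX\mid\rZ}-\en{\rX\mid\rY,\rZ}$ and applying the chain rule for entropy twice:
\[
\mi{\rA,\rB}{\rC\mid\rD} = \en{\rA,\rB\mid\rD}-\en{\rA,\rB\mid\rC,\rD},
\]
and splitting each joint conditional entropy yields exactly $\mi{\rA}{\rC\mid\rD}+\mi{\rB}{\rC\mid\rA,\rD}$.

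There is no real obstacle here since these are textbook facts; the only mildly delicate point is tracking the equality conditions in parts~\ref{part:uniform}--\ref{part:cond-reduce}, which all come down to the equality case of Gibbs' inequality. Given that the excerpt itself defers to \cite[Chapter 2]{CoverT06}, I would keep the proposal at the level of citation plus a short sketch rather than reproducing the full derivations.
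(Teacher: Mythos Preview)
Your proposal is correct, and in fact goes further than the paper does: the paper provides no proof at all for this fact, simply stating that ``the proofs can be found in~\cite[Chapter~2]{CoverT06}.'' Your sketch---reducing everything to the non-negativity of KL divergence and the chain rule for entropy---is the standard textbook route and is exactly what one would find in that reference, so there is nothing to compare or correct.
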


\noindent
We also use the following two standard propositions. 

\begin{proposition}\label{prop:info-increase}
  For random variables $\rA, \rB, \rC, \rD$, if $\rA \perp \rD \mid \rC$, then, 
  \[\mi{\rA}{\rB \mid \rC} \leq \mi{\rA}{\rB \mid  \rC,  \rD}.\]
\end{proposition}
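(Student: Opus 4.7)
The plan is to use the chain rule of mutual information (\itfacts{chain-rule}) to expand $\mi{\rA,\rD}{\rB \mid \rC}$ in two different orders, exploiting the conditional independence hypothesis to kill one of the terms. Concretely, I would first write
\[
\mi{\rA,\rD}{\rB \mid \rC} = \mi{\rA}{\rB \mid \rC} + \mi{\rD}{\rB \mid \rA, \rC},
\]
and then also expand the same quantity in the opposite order as
\[
\mi{\rA,\rD}{\rB \mid \rC} = \mi{\rD}{\rB \mid \rC} + \mi{\rA}{\rB \mid \rC, \rD}.
\]

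Equating these two expressions would give the identity
\[
\mi{\rA}{\rB \mid \rC, \rD} = \mi{\rA}{\rB \mid \rC} + \mi{\rD}{\rB \mid \rA, \rC} - \mi{\rD}{\rB \mid \rC}.
\]
This is not yet useful in isolation, so I would instead apply the chain rule to $\mi{\rA}{\rB,\rD \mid \rC}$ directly, which is cleaner: expanding once as $\mi{\rA}{\rD \mid \rC} + \mi{\rA}{\rB \mid \rC, \rD}$ and once as $\mi{\rA}{\rB \mid \rC} + \mi{\rA}{\rD \mid \rB, \rC}$. The key step, and the one place where the hypothesis enters, is noting that $\rA \perp \rD \mid \rC$ forces $\mi{\rA}{\rD \mid \rC} = 0$ by \itfacts{info-zero}.

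Equating the two expansions then yields
\[
\mi{\rA}{\rB \mid \rC, \rD} = \mi{\rA}{\rB \mid \rC} + \mi{\rA}{\rD \mid \rB, \rC},
\]
and the conclusion follows by nonnegativity of mutual information (\itfacts{info-zero}) applied to the last term.

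The main ``obstacle'' is really just conceptual rather than technical: conditioning on extra information can in general either increase or decrease mutual information, so it is important that the hypothesis $\rA \perp \rD \mid \rC$ is used precisely to cancel the $\mi{\rA}{\rD \mid \rC}$ term in one of the two chain-rule expansions. Once the correct pair of chain-rule expansions is identified, the rest is a direct calculation.
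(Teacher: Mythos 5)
Your proof is correct, and it takes a genuinely different route from the paper's. You expand $\mi{\rA}{\rB,\rD\mid\rC}$ via the chain rule in two orders, use $\rA\perp\rD\mid\rC$ to kill $\mi{\rA}{\rD\mid\rC}$, and then conclude by nonnegativity of the surviving $\mi{\rA}{\rD\mid\rB,\rC}$ term. The paper instead works at the level of conditional entropies: it writes $\mi{\rA}{\rB\mid\rC}=\HH(\rA\mid\rC)-\HH(\rA\mid\rC,\rB)$, uses the hypothesis to replace $\HH(\rA\mid\rC)$ by $\HH(\rA\mid\rC,\rD)$ exactly, and then drops $\HH(\rA\mid\rC,\rB)$ down to $\HH(\rA\mid\rC,\rB,\rD)$ via ``conditioning reduces entropy.'' The two arguments are of equal difficulty and rely on the same underlying facts (conditioning reduces entropy is itself what makes the chain-rule terms nonnegative), but yours has the mild advantage of producing the identity $\mi{\rA}{\rB\mid\rC,\rD}=\mi{\rA}{\rB\mid\rC}+\mi{\rA}{\rD\mid\rB,\rC}$, which makes the ``slack'' in the inequality explicit. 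One small point: as stated, \Cref{fact:it-facts}(\ref{part:chain-rule}) gives the chain rule with the tuple in the first slot, so your expansion of $\mi{\rA}{\rB,\rD\mid\rC}$ implicitly uses the symmetry $\mi{\rX}{\rY}=\mi{\rY}{\rX}$; this is standard but worth flagging since the paper's fact list does not record it explicitly.
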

 \begin{proof}
  Since $\rA$ and $\rD$ are independent conditioned on $\rC$, by
  \itfacts{cond-reduce}, $\HH(\rA \mid  \rC) = \HH(\rA \mid \rC, \rD)$ and $\HH(\rA \mid  \rC, \rB) \ge \HH(\rA \mid  \rC, \rB, \rD)$.  We have,
	 \begin{align*}
	  \mi{\rA}{\rB \mid  \rC} &= \HH(\rA \mid \rC) - \HH(\rA \mid \rC, \rB) = \HH(\rA \mid  \rC, \rD) - \HH(\rA \mid \rC, \rB) \\
	  &\leq \HH(\rA \mid \rC, \rD) - \HH(\rA \mid \rC, \rB, \rD) = \mi{\rA}{\rB \mid \rC, \rD}. \qed
	\end{align*} 
	
\end{proof}

\begin{proposition}\label{prop:info-decrease}
  For random variables $\rA, \rB, \rC,\rD$, if $ \rA \perp \rD \mid \rB,\rC$, then, 
  \[\mi{\rA}{\rB \mid \rC} \geq \mi{\rA}{\rB \mid \rC, \rD}.\]
\end{proposition}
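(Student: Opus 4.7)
The plan is to expand both mutual information terms using entropies and then exploit the conditional independence hypothesis $\rA \perp \rD \mid \rB,\rC$ exactly at the post-conditioning step. Specifically, I will write
\[
\mi{\rA}{\rB \mid \rC} = \en{\rA \mid \rC} - \en{\rA \mid \rB,\rC}, \qquad \mi{\rA}{\rB \mid \rC,\rD} = \en{\rA \mid \rC,\rD} - \en{\rA \mid \rB,\rC,\rD},
\]
and compare the two differences term by term.

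For the first terms, I will invoke \itfacts{cond-reduce} (conditioning reduces entropy) to conclude $\en{\rA \mid \rC} \geq \en{\rA \mid \rC,\rD}$, with no need for any independence assumption. For the second terms, the hypothesis $\rA \perp \rD \mid \rB,\rC$ combined with the equality case of \itfacts{cond-reduce} gives $\en{\rA \mid \rB,\rC} = \en{\rA \mid \rB,\rC,\rD}$. Subtracting, the first pair of terms is at least as large as the second pair while the subtracted pair is equal, so the claimed inequality $\mi{\rA}{\rB \mid \rC} \geq \mi{\rA}{\rB \mid \rC,\rD}$ follows immediately. This mirrors the structure of the proof of \Cref{prop:info-increase} given just above, with the roles of the two entropy pairs reversed (there the equality occurs in the first pair and the inequality in the second).

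There is essentially no obstacle here: the argument is a two-line calculation once the entropy decomposition is written out, and the only subtlety is making sure that the conditional independence is applied to the correct pair of entropies (the ones with $\rB$ in the conditioning). I will present it in the same compact style as \Cref{prop:info-increase} to keep the appendix uniform.
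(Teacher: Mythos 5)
Your proposal is correct and matches the paper's proof essentially exactly: both expand the two mutual informations as differences of conditional entropies, apply the equality case of ``conditioning reduces entropy'' to get $\en{\rA \mid \rB,\rC} = \en{\rA \mid \rB,\rC,\rD}$ from the hypothesis, and apply the plain inequality $\en{\rA \mid \rC} \geq \en{\rA \mid \rC,\rD}$ to the other pair. The paper simply presents this as a single chain of (in)equalities rather than a term-by-term comparison, but the underlying steps are the same.
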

 \begin{proof}
 Since $\rA \perp \rD \mid \rB,\rC$, by \itfacts{cond-reduce}, $\HH(\rA \mid \rB,\rC) = \HH(\rA \mid \rB,\rC,\rD)$. Moreover, since conditioning can only reduce the entropy (again by \itfacts{cond-reduce}), 
  \begin{align*}
 	\mi{\rA}{\rB \mid  \rC} &= \HH(\rA \mid \rC) - \HH(\rA \mid \rB,\rC) \geq \HH(\rA \mid \rD,\rC) - \HH(\rA \mid \rB,\rC) \\
	&= \HH(\rA \mid \rD,\rC) - \HH(\rA \mid \rB,\rC,\rD) = \mi{\rA}{\rB \mid \rC,\rD}. \qed
 \end{align*}
 
\end{proof}

\end{document}